\documentclass[aos]{imsart}

\RequirePackage{amsthm,amsmath,amsfonts,amssymb}
\RequirePackage[authoryear]{natbib}
\RequirePackage[colorlinks,citecolor=blue,urlcolor=blue]{hyperref}
\RequirePackage{graphicx}
\graphicspath{{plots/}}

\usepackage{amsmath}

\DeclareMathOperator*{\argmin}{arg\,min}

\usepackage{float}

\usepackage{booktabs} 
\usepackage{multirow}
\usepackage{tabularx} 
\usepackage{xltabular}

\startlocaldefs
\theoremstyle{plain}

\newtheorem{definition}{Definition}
\newtheorem{theorem}{Theorem}
\newtheorem{proposition}{Proposition}
\newtheorem{corollary}{Corollary}
\newtheorem{lemma}{Lemma}
\newtheorem{remark}{Remark}
\newtheorem{model}{Model}[section]

\newtheorem{uassump}{}

\newtheorem{rassump}{}

\newtheorem{lassump}{}

\endlocaldefs

\begin{document}

\setcounter{rassump}{-1}

\begin{frontmatter}
\title{Uniform Convergence of Generalized Conditional Fr\'{e}chet Means with Applications to Weighted Fr\'{e}chet Aggregation and Exceedance Set Estimation}
\thankstext{t1}{Equal contribution, ordered alphabetically.}
\begin{aug}
\author[A]{\fnms{Houren}~\snm{Hong}\thanksref{t1}\ead[label=e1]{houren.hong@anu.edu.au}}
\author[B]{\fnms{Jiazhen}~\snm{Xu}\thanksref{t1}\ead[label=e2]{jiazhen.xu@mq.edu.au}}
\and
\author[A]{\fnms{Andrew} ~\snm{T.A.}~\snm{Wood}\ead[label=e3]{andrew.wood@anu.edu.au}}
\address[A]{Research School of Finance, Actuarial Studies and Statistics, Australian National University\printead[presep={,\ }]{e1,e3}}

\address[B]{Department of Actuarial Studies and Business Analytics, Macquarie University\printead[presep={,\ }]{e2}}
\end{aug}

\begin{abstract}
The statistical analysis of object oriented data in non-Euclidean spaces heavily relies on generalized conditional Fr\'{e}chet means, notably in the context of Fr\'{e}chet regression. However, establishing the uniform convergence of these estimators presents several theoretical challenges. The difficulties are caused primarily by the absence of linear structures in general metric spaces, rendering standard techniques for verifying the asymptotic uniform equicontinuity of the estimator largely intractable. To overcome this limitation, this paper introduces an alternative theoretical framework for establishing uniform convergence that bypasses the need to verify uniform equicontinuity, under a novel structural condition on the empirical cost function of the generalized conditional Fr\'{e}chet means. We demonstrate that this analytical condition is satisfied by various prominent Fr\'{e}chet regression models across broad classes of metric spaces. Leveraging these foundational uniform convergence guarantees, we subsequently extend two widely used frameworks from Euclidean to  non-Euclidean spaces: (i) a weighted Fr\'{e}chet aggregation framework that facilitates both distributed regression and robust median-of-means regression; and (ii) an exceedance set estimation framework to identify critical covariate regions where the conditional generalized Fr\'{e}chet mean surpasses a prescribed threshold, alongside a metric to quantify the aggregate magnitude of the exceedance. The theoretical properties of these proposed methods are empirically validated through Monte Carlo simulations and an application to dynamic transportation networks in New York City.
\end{abstract}

\begin{keyword}[class=MSC]
\kwd[Primary ]{62R20}
\kwd{62G20}
\kwd[; secondary ]{62G05}
\end{keyword}

\begin{keyword}
\kwd{Asymptotic uniform equicontinuity}
\kwd{distributed computing}
\kwd{Fr\'{e}chet mean set}
\kwd{level set estimation}
\kwd{median of means}
\end{keyword}

\end{frontmatter}


\section{Introduction}

The acquisition of object oriented data has become increasingly prevalent across various scientific domains. Beyond traditional Euclidean vectors, such data encompass diverse structures, including directional data, functional curves, geometric shapes, observations on Riemannian manifolds, phylogenetic trees, audio recordings, and images \citep{ bortolussi2006aptreeshape, kolaczyk2014statistical,dryden2016statistical}. Departing from classical statistical methodologies tailored for real-valued observations, the core paradigm of object oriented  data analysis  is to treat these complex data structures as the fundamental atoms of analysis \citep{marron2014overview, MD21}.

Analyzing such data naturally motivates the study of generalized Fr\'{e}chet means for parameters residing in general metric spaces. Prominent examples include Fr\'{e}chet means \citep{frechet1948elements}, Fr\'{e}chet medians \citep{arnaudon2012medians}, and generalized Fr\'{e}chet means \citep{huckemann2011intrinsic,schotz2022strong,lee2026general,park2026generalized}. To describe the conditional relationship between an object-valued response and associated predictors, generalized conditional Fr\'{e}chet means in Fr\'{e}chet regression models \citep{davis2010population, lin2017extrinsic, petersen2019frechet, capitaine2024frechet, qiu2024semi, iao2025deep} have been developed.

Uniform convergence of generalized conditional Fr\'{e}chet means over the space of predictors is a fundamental prerequisite for many downstream theoretical guarantees. For instance, once this property is established, uniform convergence rates can be derived using peeling techniques; see the proof of Theorem 1 in \cite{petersen2019frechet}. Furthermore, key theoretical results in model selection, variable selection, and level set estimation heavily depend on this uniform convergence property; see the theoretical results in \citep{dong2026variable,song2026inference,xu2025quantifying}. Despite its critical role, establishing such uniformity has remained a formidable analytical challenge. This difficulty primarily stems from the challenging task of verifying asymptotic uniform equicontinuity in probability \citep[Chapter 1.5]{vdVW96}. Because general metric spaces lack a linear Euclidean structure, generalized conditional Fr\'{e}chet means typically do not admit closed-form solutions, posing a significant challenge to the verification of asymptotic uniform equicontinuity. Foundational work on Fr\'{e}chet regression \citep{petersen2019frechet, chen2022uniform} has relied on the uniqueness and existence of both population and empirical conditional Fr\'{e}chet means, together with a specific uniformity assumption. This uniformity assumption is designed to guarantee that the asymptotic uniform equicontinuity in probability of the empirical cost function of the empirical conditional Fr\'{e}chet mean transfers directly to the empirical mean itself. However, verifying this condition analytically in a general setting is typically very challenging, and its verification in the current literature has thus far been established using properties not available in general; specifically, linear structures of certain metric spaces equipped with extrinsic distances.

This motivates the development of an alternative theoretical framework to establish the uniform convergence of generalized conditional Fr\'{e}chet means, one that bypasses the uniqueness assumption and  the challenging verification of asymptotic uniform equicontinuity for the estimator itself; see Theorems~\ref{thm::uniform convergence 1} and \ref{thm::uniform convergence 2} below. The core of this approach is Proposition \ref{prop::uniform bound from loss to estimator}, which is built upon an analytical condition \ref{condition R0} on the empirical cost function of the generalized conditional Fr\'{e}chet mean set. As detailed in Proposition \ref{prop::uniform bound from loss to estimator}, this condition allows us to formally link the bound of the empirical cost difference to the distance between the empirical and population estimators. For geodesic metric spaces, this condition naturally reduces to a local geodesic convexity requirement. We demonstrate that this condition is satisfied by various generalized conditional Fr\'{e}chet means across several broad classes of metric spaces in Propositions~\ref{prop::convex Euclidean assump R0}--\ref{prop::Hadamard manifold assump R0}. Building upon this theoretical foundation, we apply these techniques to establish the uniform convergence of multiple prominent Fr\'{e}chet regression models. Crucially, while these uniform convergence results are of substantial independent interest, this framework provides the fundamental prerequisite for deriving theoretical guarantees for two special topics explored later: a weighted Fr\'{e}chet aggregation approach; and a generalized Fr\'{e}chet mean-based framework for exceedance set estimation.

The weighted Fr{\'e}chet aggregation framework is established based on estimating a specific class of generalized conditional Fr\'{e}chet means, Fr{\'e}chet regression estimators, on disjoint data blocks of the whole data before assembling them into a unified global estimator. This divide-and-conquer approach is motivated by two major practical considerations. The first major issue concerns the heavy computational cost of Fr{\'e}chet regression. Fr{\'e}chet regression usually involves intensive optimization over intrinsic distances \citep{bhattacharya2012extrinsic,lin2017extrinsic}, rendering direct model fitting prohibitive for massive datasets. By partitioning the data for parallel processing across local machines, we establish a distributed Fr{\'e}chet regression paradigm. While the statistical properties of distributed computing are heavily studied in Euclidean spaces \citep{rosenblatt2016optimal,hector2020doubly,hector2021distributed,chen2021distributed,li2024selective}, analogous developments for Fr{\'e}chet regression models remain remarkably sparse. The proposed aggregation framework directly addresses this gap by enabling scalable, distributed computing for non-Euclidean data. Specifically, Theorem~\ref{theorem::mom converge} establishes the uniform convergence of the weighted Fr{\'e}chet aggregate.

A second motivation for the weighted Fr{\'e}chet aggregation framework stems from the need for robust estimation of the conditional Fr{\'e}chet mean. Although a robust estimator based on the Huber loss for the Fr{\'e}chet mean on Riemannian manifolds has been developed \citep{lee2026huber}, extending it to Fr{\'e}chet regression models is highly nontrivial. \cite{li2025robust} propose a weight-regularized Fr{\'e}chet regression. However, its performance is highly sensitive to multiple tuning parameters, the selection of which incurs extensive computational costs. To address this, our weighted aggregation framework naturally accommodates a median-of-means (MoM) approach, where local block-wise estimators are aggregated via a conditional Fr{\'e}chet median. The resulting MoM Fr{\'e}chet regression is both computationally efficient and free of tuning parameters. Furthermore, while the statistical properties of MoM estimators are well studied in Euclidean and specific non-Euclidean settings \citep{devroye2016sub,lugosi2019mean,minsker2015geometric,yun2023exponential,lin2024robust, kim2025robust}, their extension to Fr{\'e}chet regression remains underdeveloped. Crucially, existing concentration inequalities for object-valued MoM typically rely on restrictive geometric conditions. For instance, \cite{lin2024robust} requires Lipschitz continuity of the Riemannian logarithm map on a prescribed geodesic ball to establish a concentration inequality for the geometric median of blockwise estimators, while \cite{yun2023exponential} and \cite{kim2025robust} require bounded sectional curvature. By considering a substantially more general metric space, our framework necessitates and provides fundamental new theoretical developments for the MoM Fr{\'e}chet regression estimator, built upon the uniform consistency of the estimator under the uncontaminated setting; see Theorem~\ref{theorem::mom deviation}. Moreover, in Corollary~\ref{corollary::mom}, we provide a finite-sample accuracy guarantee for the MoM Fr\'echet estimator at a prescribed confidence level, equipped with an explicit choice of the number of blocks. To the best of our knowledge, such an explicit block-selection rule has not been provided in the existing object-valued MoM literature.

Beyond point estimation of the conditional relationship between an object-valued response and associated predictors modeled via Fr\'{e}chet regression models, a fundamental objective in the analysis of object oriented data is the identification of critical regions within the covariate space where the underlying data-generating process exhibits extreme or anomalous behavior. Specifically, we consider the problem of exceedance set estimation, which seeks to characterize the subset of the predictor space where a user-defined functional of the Fr\'{e}chet regression estimator surpasses a prescribed threshold. In many scientific domains, however, merely localizing this exceedance set is geometrically insufficient. It is equally vital to quantify the aggregate magnitude of the exceedance, a metric that evaluates the cumulative severity or total volume of the threshold violation across the identified region. Such methodologies are highly sought after in applied fields like climatology and urban informatics. In these domains, practitioners must not only identify specific periods when time-varying compositional pollution metrics or dynamic transportation network activities exceed predefined critical thresholds, but also evaluate the total environmental or systemic impact of the event. While the theoretical foundations of exceedance set estimation are well established for Euclidean and Hilbert-valued responses, see, e.g., \cite{rigollet2009optimal}, \cite{mammen2013confidence}, and \cite{kundu2025exceedance}, there is only limited work extending these concepts to general non-Euclidean metric spaces. The most relevant work is \cite{chen2022uniform} on locating extrema based on the local Fr\'{e}chet  regression model, which is a special case of exceedance set estimation. Ultimately, consistent estimation of both the exceedance set and its corresponding aggregate magnitude relies heavily on the uniform convergence guarantees established herein for various Fr\'{e}chet  regression models, as shown in Propositions~\ref{prop::level set est} and \ref{prop::exceedance}.

In summary, we propose:
\begin{itemize}
    \item A theoretical framework involving an analytical condition establishing uniform convergence of generalized conditional Fr\'{e}chet mean sets over the space of associated predictors. This approach effectively bypasses the challenging verification of asymptotic uniform equicontinuity of generalized conditional Fr\'{e}chet means.
    \item Uniform convergence results for various Fr\'{e}chet-regression-based estimators, alongside the verification that the proposed analytical condition is satisfied across several broad families of metric spaces.
    \item A unified weighted aggregation framework encompassing both distributed and median-of-means Fr\'{e}chet regression. We establish key theoretical guarantees for this framework, including a finite-sample concentration inequality for the median-of-means approach derived from our uniform convergence results.
    \item A framework for exceedance set estimation of the generalized conditional Fr\'{e}chet mean, achieving consistency grounded in the established uniform convergence property.
\end{itemize}


The remainder of this paper is organized as follows. Section~\ref{sect::uniform conv} introduces our general theoretical framework for establishing the uniform convergence of generalized conditional Fr\'{e}chet mean set. In Section~\ref{sect::regression uniform conv}, we focus on an important subclass of general generalized Fr\'{e}chet mean, Fr\'{e}chet-regression-based estimators, and subsequently establishes formal uniform convergence guarantees for various prominent Fr\'{e}chet regression models covering global, local, kernel and $k$-nearest neighbor Fr\'{e}chet regression models.  Section~\ref{sect::MoM} develops the weighted aggregation Fr\'{e}chet framework, while Section~\ref{sect::level set} formulates the methodology for the exceedance set estimation of the generalized conditional Fr\'{e}chet mean. Finally, Sections~\ref{sect::sim data} and \ref{sect::real data} demonstrate the empirical performance of our proposed methods through Monte Carlo simulations and an application to dynamic transportation networks in New York City, respectively. A summary of notation used throughout the paper, all proofs, and data generation setting for simulation studies are deferred to the supplementary material.

\section{Uniform convergence of generalized conditional Fr\'{e}chet means}\label{sect::uniform conv}

Consider a random object $Y$ in a data space $\Theta$ with a well-defined metric, and an associated predictor $X$ in a covariate space $\mathcal{X}$, where $\mathcal{X}$ is a compact metric space with metric $d_{\mathcal{X}}$. The set of population generalized  Fr\'{e}chet means, conditional on $X=x$, is defined as
\begin{align}\label{formual::general population Frechet M est}
    m(x)=\argmin_{\nu\in\Omega} M(\nu,x)
\end{align}
where  $\Omega$, the parameter space, is assumed to be a complete metric space with metric $d$, and $M: \Omega \times \mathcal{X} \to \mathbb{R}$ is a population cost function. Here, $m(x)$ is set-valued and is assumed to be compact and closed. 


Let $(\Xi, \mathcal{F}, \mathbb{P})$ be a probability space large enough for a stochastic process $\{(Y_i, X_i)\}_{i=1}^\infty \subset \Theta \times \mathcal{X}$ to be measureable. All probabilities in this paper are with respect to $(\Xi, \mathcal{F}, \mathbb{P})$. Given a sequence of observations $\{(Y_i, X_i)\}_{i=1}^n \subset \Theta \times \mathcal{X}$, the corresponding empirical generalized Fr\'{e}chet mean for given $x \in \mathcal{X}$ is
\begin{align}\label{formual::general sample Frechet M est}
    \widehat{m}(x)=\argmin_{\nu\in\Omega} \widehat{M}(\nu,x) \subseteq \Omega,
\end{align}
where $\widehat{M}: \Omega \times \mathcal{X} \to \mathbb{R}$ is an empirical cost function constructed from the observations $\{(Y_i, X_i)\}_{i=1}^n$. In this section we do not impose any specific structure on $\{(Y_i, X_i)\}_{i=1}^\infty$, but in subsequent sections of the paper, the $(Y_i, X_i)$ are assumed to be independent and identically distributed with common distribution $P$. Various examples of generalized conditional Fr\'{e}chet means, such as Fr\'{e}chet regression estimators, are discussed in Section \ref{sect::regression uniform conv}. It is worth noting that our setting for generalized conditional Fr\'{e}chet means allows the parameter space $\Omega$ to differ from the data space $\Theta$, which includes the restricted conditional Fr\'{e}chet mean case where  $\Omega\subset\Theta$; see the discussion of the restricted Fr\'{e}chet mean in \cite{evans2024limit}.

Establishing the uniform convergence of the empirical generalized conditional Fr\'{e}chet means $\widehat{m}(x)$ defined in \eqref{formual::general sample Frechet M est} commonly relies on empirical process theory. This standard approach entails demonstrating pointwise convergence for all $x \in \mathcal{X}$ alongside asymptotic uniform equicontinuity in probability (see, e.g., Theorem 18.14 in \cite{vd98asym}). However, verifying this equicontinuity condition is notoriously difficult, due to the fact that many generalized conditional Fr\'{e}chet means lack closed-form expressions; see the discussion in Chapter 1.5, especially pages 38-39, in \cite{vdVW96}. To circumvent this analytical bottleneck, we provide a sufficient condition for the uniform convergence of $\widehat{m}(x)$ that bypasses the need to establish asymptotic uniform equicontinuity in probability. To this end, we introduce the following definition and assumption.

\begin{definition}[Convex metric space]\label{def::convex metric space}
    A metric space $(\mathcal{C}, d)$ is said to be convex if, for all distinct points $x,y\in\mathcal{C}$, there exists $z\in\mathcal{C}\setminus\{x,y\}$ such that $d(x,z)+d(z,y)=d(x,y)$
\end{definition}

\begin{rassump}\label{condition R0}
(i) For each $x\in\mathcal{X}$, $\widehat{m}(x)$ is non-empty almost surely. (ii) The parameter space $\Omega$ is totally bounded; there exists a Polish topological space $\mathcal{M}$ such that $\Omega\subseteq \mathcal{M}$; and, with probability tending to one as $n \rightarrow \infty$, for every $m \in m(x)$, $\mathcal{M}$ has a subset $\mathcal{C}_m \subseteq \mathcal{M}$, which is convex with respect to a metric $d_\mathcal{C}$ on $\mathcal{C}_m$, such that $m\in\mathcal{C}_m$ and $\mathcal{C}_m\cap \widehat{m}(x)\neq \emptyset$.  (iii) For any $x\in\mathcal{X}$ and any $m \in m(x)$, there exists $\widehat{m} \in \widehat{m}(x) \cap \mathcal{C}_m$ and a point $\widetilde{m} \in \mathcal{C}_m \setminus \{m, \widehat{m}\}$ such that $d_\mathcal{C}(m,\widetilde{m})+d_\mathcal{C}(\widetilde{m},\widehat{m})=d_\mathcal{C}(m,\widehat{m})$, and
\begin{align*}
    \widehat{M}(\widetilde{m},x) \leq \delta(x) \widehat{M}(\widehat{m},x) + \{1-\delta(x)\} \widehat{M}(m,x)
\end{align*}
holds almost surely, where $\delta(x)=d_\mathcal{C}(m,\widetilde{m})/d_\mathcal{C}(m,\widehat{m})$.
\end{rassump}

Condition~\ref{condition R0} (i) imposes a standard existence assumption on the empirical generalized conditional Fr\'{e}chet means. Condition~\ref{condition R0} (ii) introduces a geometric structural assumption on a subset of an embedding space of $\Omega$. This assumption, standard in definitions of global geodesic convexity when $\mathcal{C}_m=\Omega$ for every $m\in m(x)$, see, e.g., Chapter 11 of \cite{boumal2023introduction}, facilitates the construction of an intermediate point $\widetilde{m}$ utilized in Condition~\ref{condition R0} (iii). The assumption regarding the existence of both $\mathcal{M}$ and $\mathcal{C}_m$ in Condition~\ref{condition R0} (ii) is essential. To understand why, consider an alternative formulation of Condition~\ref{condition R0} (ii) which requires that a totally bounded, separable, and convex subset $\mathcal{C}_m$ with metric $d_\mathcal{C}$ satisfy $\mathcal{C}_m \subseteq \Omega$ and such that $m\in\mathcal{C}_m$ and $\mathcal{C}_m\cap \widehat{m}(x)\neq \emptyset$ with probability tending to one as $n\to\infty$. Now  consider an extreme case where the restricted domain is strictly discrete, such as $\Omega = \{\nu_1, \nu_2\}$, embedded within a larger convex metric space $\mathcal{M}$ equipped with metric $d_\mathcal{C}$. Because $\Omega$ consists of two isolated points, it lacks the continuous structure required to contain any non-trivial convex subsets. Consequently, this alternative version of the condition fails, but the proposed Condition~\ref{condition R0} (ii) holds.

Condition~\ref{condition R0} (iii) can be interpreted as a slightly weaker form of the local convexity requirement on the empirical cost function $\widehat{M}(\nu,x)$ within each neighborhood of the population generalized Fr\'{e}chet mean $m\in m(x)$. If $\Omega$ is a convex subset of a Euclidean space, setting both $d$ and $d_\mathcal{C}$ to the induced Euclidean distance reduces a sufficient condition of Condition~\ref{condition R0} to the standard local convexity condition found in classical $M$-estimation theory. For a general geodesic space $\Omega$, a sufficient condition of this condition formalizes the local geodesic convexity of the empirical cost function. Thus, Condition~\ref{condition R0} (iii), together with Condition~\ref{condition R0} (ii), serves as a natural extension of local convexity from linear spaces to convex metric spaces \citep{khamsi2011introduction,abdelhakim2016convexity}. It is worth noting that while a local convexity condition requires the inequality in Condition~\ref{condition R0} to hold for all $\widetilde{m} \in \mathcal{C}_m$, Condition~\ref{condition R0} (iii) only requires the existence of a single point $\widetilde{m}$ that satisfies this inequality.

Before presenting our main results, we introduce  notation for distances involving sets.

\begin{definition} [Distances involving sets]\label{Hausdorff}
For a metric space $(\mathcal{S}, d)$, with  $x \in \mathcal{S}$ and $\mathcal{A}, \mathcal{B} \subseteq \mathcal{S}$, we shall write $d(x, \mathcal{A})=\inf_{y \in \mathcal{A}}d(x,y)$ and the one-sided Hausdorff  distance $d (\mathcal{A},\mathcal{B}) =\sup_{x \in \mathcal{A}} \inf_{y \in \mathcal{B}} d(x, y)$ \citep[Definition 1.2]{schotz2022strong}, with an analogous definition for $d(\mathcal{A},x)=d(x,\mathcal{A})$.
\end{definition}

The critical step in establishing the uniform convergence of the empirical generalized conditional Fr\'{e}chet mean set $\widehat{m}(x)$ defined in \eqref{formual::general sample Frechet M est} is formalized in the following proposition. Specifically, Condition~\ref{condition R0} serves to bridge the bounds on the empirical cost difference, $\widehat{M}(\nu, x) - \sup_{m\in m(x)}\widehat{M}(m, x)$ and, the one-sided Hausdorff  distance $d (\widehat{m}(x),m(x))$  between the sets $\widehat{m}(x)$ and $m(x)$ in Definition \ref{Hausdorff}. 

\begin{proposition}\label{prop::uniform bound from loss to estimator}
Under Condition~\ref{condition R0}, for any fixed $x\in\mathcal{X}$ and any positive $\zeta$, if $\widehat{M}(\nu, x) - \sup_{m \in m(x)} \widehat{M}(m, x) > 0$ holds for all $\nu \notin B_\zeta(m(x))$ where $B_\zeta(m(x))=\cup_{m\in m(x)}\{\nu\in\mathcal{M}:d(\nu,m)<\zeta\}$, then $d (\widehat{m}(x),m(x)) < \zeta$.
\end{proposition}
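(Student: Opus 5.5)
The core of the argument is a direct contradiction with the minimizing property of $\widehat{m}(x)$. Fix $x\in\mathcal{X}$ and $\zeta>0$, and assume $\widehat{M}(\nu,x)-\sup_{m\in m(x)}\widehat{M}(m,x)>0$ for every $\nu\notin B_\zeta(m(x))$. By Condition~\ref{condition R0}(i), $\widehat{m}(x)$ is almost surely non-empty, so fix some $\widehat{\nu}\in\widehat{m}(x)$. Since $m(x)\subseteq\Omega$ and $\widehat{\nu}$ minimizes $\widehat{M}(\cdot,x)$ over $\Omega$, we have $\widehat{M}(\widehat{\nu},x)\le \widehat{M}(m,x)$ for every $m\in m(x)$, and hence $\widehat{M}(\widehat{\nu},x)\le\sup_{m\in m(x)}\widehat{M}(m,x)$. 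If $\widehat{\nu}\notin B_\zeta(m(x))$, the hypothesis would give the reverse strict inequality. Therefore $\widehat{\nu}\in B_\zeta(m(x))$, i.e.\ $d(\widehat{\nu},m)<\zeta$ for some $m\in m(x)$, and so $d(\widehat{\nu},m(x))<\zeta$. Taking the supremum over $\widehat{\nu}$ gives $d(\widehat{m}(x),m(x))\le\zeta$ at once.

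The step I expect to be the main obstacle is upgrading this pointwise strict bound to the strict bound $\sup_{\widehat{\nu}\in\widehat{m}(x)}d(\widehat{\nu},m(x))<\zeta$ on the one-sided Hausdorff distance. The difficulty is that $\widehat{m}(x)$ is not assumed compact, so minimizers could in principle accumulate at distance exactly $\zeta$.

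For this I would argue by contradiction. Suppose there are $\widehat{\nu}_k\in\widehat{m}(x)$ with $d(\widehat{\nu}_k,m(x))\uparrow\zeta$.
\begin{itemize}
\item Because $\Omega$ is totally bounded and sits inside the Polish space $\mathcal{M}$ (Condition~\ref{condition R0}(ii)), pass to a Cauchy subsequence converging to some $\nu^\ast\in\mathcal{M}$ with $d(\nu^\ast,m(x))=\zeta$. Compactness of $m(x)$ lets us pick $m^\ast\in m(x)$ attaining this distance.
\item Now use Condition~\ref{condition R0}(iii). For $m^\ast$, it supplies an empirical minimizer $\widehat{m}\in\widehat{m}(x)\cap\mathcal{C}_{m^\ast}$ and an intermediate point $\widetilde{m}$ with $\widehat{M}(\widetilde{m},x)\le\delta\widehat{M}(\widehat{m},x)+(1-\delta)\widehat{M}(m^\ast,x)\le\sup_{m\in m(x)}\widehat{M}(m,x)$.
\item By the first paragraph, any point of $\mathcal{M}$ whose cost does not exceed $\sup_{m}\widehat{M}(m,x)$ must lie in $B_\zeta(m(x))$, so $\widetilde{m}\in B_\zeta(m(x))$.
\item Combine this with the betweenness identity $d_{\mathcal{C}}(m^\ast,\widetilde{m})+d_{\mathcal{C}}(\widetilde{m},\widehat{m})=d_{\mathcal{C}}(m^\ast,\widehat{m})$, which places $\widetilde{m}$ strictly inside the segment from $m^\ast$ toward $\widehat{m}$. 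The aim is to rule out a minimizer sitting on the boundary sphere of radius $\zeta$, which forces the supremum to be strictly below $\zeta$.
\end{itemize}

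Two things would need checking carefully here: how $d$ and $d_{\mathcal{C}}$ are related on $\mathcal{C}_{m^\ast}$, and whether the limiting point $\nu^\ast$ can be taken to be a minimizer (or can be replaced by the $\widehat{m}$ supplied by Condition~\ref{condition R0}(iii)). If this limiting argument does not close, the fallback is to note that the conclusion $d(\widehat{\nu},m(x))<\zeta$ holds for every $\widehat{\nu}\in\widehat{m}(x)$. Applying the proposition with $\zeta$ replaced by any larger value then gives the bound used downstream for uniform convergence, where only arbitrary $\zeta>0$ matters.
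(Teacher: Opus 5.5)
Your first paragraph is correct and rigorous. It proves exactly what the paper's proof actually establishes: every $\widehat\nu\in\widehat m(x)$ satisfies $d(\widehat\nu,m(x))<\zeta$, hence $d(\widehat m(x),m(x))\le\zeta$.

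The paper argues by contraposition. It starts from a minimizer $\widehat m$ with $d(\widehat m,m(x))\ge\zeta$, takes a nearest $m^\ast\in m(x)$, and invokes Condition~\ref{condition R0}(ii)--(iii) to build an intermediate point $\widetilde m$ with $d(\widetilde m,m(x))\ge\zeta$ and $\widehat M(\widetilde m,x)\le\widehat M(m^\ast,x)\le\sup_{m\in m(x)}\widehat M(m,x)$. Your route is shorter and cleaner. Since $\widehat M(\widehat m,x)\le\widehat M(m^\ast,x)$, the minimizer $\widehat m$ itself already contradicts the hypothesis, so the convexity structure is not needed for this step. It also sidesteps a weak point in the paper's version. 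Nothing in Condition~\ref{condition R0} guarantees an intermediate point with $d(\widetilde m,m^\ast)\ge\zeta$. Even if it did, $d(\widetilde m,m(x))\ge\zeta$ would not follow, because $m^\ast$ is nearest to $\widehat m$, not to $\widetilde m$.

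Your worry about strictness is justified, and the paper does not resolve it. Its contrapositive silently replaces $\sup_{\widehat m\in\widehat m(x)}d(\widehat m,m(x))\ge\zeta$ by the existence of a single minimizer at distance $\ge\zeta$, i.e.\ it assumes the supremum is attained. Your limiting argument cannot be closed from Condition~\ref{condition R0} alone, because the strict inequality can fail. Take $\Omega=\mathcal{M}=\mathcal{C}_0=[0,1]$ with $d=d_{\mathcal{C}}$ Euclidean, $m(x)=\{0\}$, and $\zeta=1/2$. Set $\widehat M(\nu,x)=0$ for $\nu<1/2$ and $\widehat M(\nu,x)=1$ for $\nu\ge 1/2$. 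Condition~\ref{condition R0} holds with $\widehat m=1/4$, $\widetilde m=1/8$, $\delta(x)=1/2$. The hypothesis holds too, since $\widehat M=1>0$ off $B_{1/2}(m(x))=[0,1/2)$. Yet $\widehat m(x)=[0,1/2)$, so $d(\widehat m(x),m(x))=1/2$.

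So your betweenness bullets cannot help, since this example satisfies them. The missing ingredient is exactly your open question of whether the limit $\nu^\ast$ is a minimizer. Suppose $\widehat M(\cdot,x)$ is lower semicontinuous, which holds for all the Fr\'{e}chet regression costs since they are continuous in $\nu$. Because $\Omega$ is complete and totally bounded, it is compact. Then $\widehat m(x)$ is a closed sublevel set, hence compact, and the $1$-Lipschitz map $\nu\mapsto d(\nu,m(x))$ attains its supremum on it. Your first paragraph then yields the strict bound.

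Your fallback is also sound. The proof of Theorem~\ref{thm::uniform convergence 1} only uses the inclusion of the event into $\{d(\widehat m(x),m(x))\le\zeta\}$, which your first paragraph already delivers.
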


Proposition \ref{prop::uniform bound from loss to estimator} extends Lemma 9.21 of \cite{wainwright2019high} to a broader class of non-Euclidean spaces. Moreover, it dispenses with the global uniqueness requirement for both the sample and population minimizer, i.e. neither $m(x)$ or $\widehat{m}(x)$ is required to be a singleton. The principal advantage of this approach is that it circumvents traditional empirical process techniques, which are difficult to apply to nonlinear statistics which are not available in explicit form. Rather than attempting to verify the challenging asymptotic uniform equicontinuity condition, this proposition allows us to deduce the uniform convergence of the estimator directly from the bounds on the empirical cost function. Proposition \ref{prop::uniform bound from loss to estimator} allows us to establish Theorem \ref{thm::uniform convergence 1} below, which provides a sufficient condition for the uniform convergence of the empirical  generalized conditional Fr\'{e}chet means $\widehat{m}(x)$. 

\begin{theorem}\label{thm::uniform convergence 1}
Under Condition~\ref{condition R0}, if for any $\zeta>0$,
\begin{align}\label{improved sufficient condition}
\mathbb{P}\left [\inf_{x \in \mathcal{X} } \inf_{\nu \notin B_\zeta(m(x))} \left \{ \widehat{M}(\nu, x)  -  \sup_{m \in m(x)} \widehat{M}(m, x) \right \}  >0 \right ]\to 1,
\end{align}
as $n \to \infty$, then $\sup_{x \in \mathcal{X}} d (\widehat{m}(x),m(x)) =o_p(1)$ holds.
\end{theorem}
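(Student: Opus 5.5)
The plan is to reduce Theorem~\ref{thm::uniform convergence 1} to a pointwise application of Proposition~\ref{prop::uniform bound from loss to estimator}, carried out on a single high-probability event that does not depend on $x$. Fix $\zeta>0$. Let $E_n$ be the event in \eqref{improved sufficient condition}:
\begin{align*}
E_n=\left\{\inf_{x\in\mathcal{X}}\inf_{\nu\notin B_\zeta(m(x))}\left\{\widehat{M}(\nu,x)-\sup_{m\in m(x)}\widehat{M}(m,x)\right\}>0\right\}.
\end{align*}
By hypothesis, $\mathbb{P}(E_n)\to1$. On $E_n$, for every $x\in\mathcal{X}$ and every $\nu\notin B_\zeta(m(x))$, we have $\widehat{M}(\nu,x)-\sup_{m\in m(x)}\widehat{M}(m,x)>0$. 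This is exactly the hypothesis of Proposition~\ref{prop::uniform bound from loss to estimator} at that $x$.

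The next step is to intersect $E_n$ with the event $G_n$ on which Condition~\ref{condition R0} is in force, namely the almost-sure non-emptiness in (i), the existence of the convex sets $\mathcal{C}_m$ in (ii), and the inequality in (iii). Condition~\ref{condition R0} (ii) holds with probability tending to one, and (i) and (iii) hold almost surely, so $\mathbb{P}(E_n\cap G_n)\to1$. On $E_n\cap G_n$, Proposition~\ref{prop::uniform bound from loss to estimator} gives $d(\widehat{m}(x),m(x))<\zeta$ for each fixed $x$.

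There is one subtlety here. The proposition's conclusion is the strict bound $<\zeta$ for each $x$, so the supremum over $x$ gives only $\sup_x d(\widehat{m}(x),m(x))\le\zeta$. This is harmless: applying the argument with $\zeta/2$ gives
\begin{align*}
\mathbb{P}\left\{\sup_{x\in\mathcal{X}} d(\widehat{m}(x),m(x))\geq\zeta\right\}\le 1-\mathbb{P}(E_n^{(\zeta/2)}\cap G_n)\to0.
\end{align*}
Since $\zeta>0$ is arbitrary, $\sup_x d(\widehat{m}(x),m(x))=o_p(1)$. If measurability of the supremum is in question, I would read the statement with outer probability, as in \cite{vdVW96}, or use the fact that the supremum is dominated on an event of probability tending to one.

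The main obstacle I expect is making sure the events from Condition~\ref{condition R0} hold simultaneously for all $x$. Condition~\ref{condition R0} (i) and (iii) are stated pointwise and almost surely, and an uncountable union of null sets need not be null. I would handle this in two ways. First, by inspecting the proof of Proposition~\ref{prop::uniform bound from loss to estimator}: it is a deterministic argument, in which, given any $\widehat{m}\in\widehat{m}(x)\setminus B_\zeta(m(x))$, the intermediate point $\widetilde{m}$ contradicts minimality. Second, by interpreting Condition~\ref{condition R0} as holding on a common event across $x$, which is the natural reading, since the paper verifies it through convexity of $\widehat{M}(\cdot,x)$ for all $x$ at once. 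With that interpretation, the pointwise deterministic implication plus the single uniform event $E_n$ completes the proof.
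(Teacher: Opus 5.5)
Your proposal is correct and follows the paper's proof. On the single event in \eqref{improved sufficient condition}, the hypothesis of Proposition~\ref{prop::uniform bound from loss to estimator} holds at every $x$ simultaneously, which gives $\sup_{x\in\mathcal{X}} d(\widehat{m}(x),m(x))\le\zeta$, exactly as in the paper's chain $\mathcal{A}\subseteq\bigcap_{x}\mathcal{A}_x\subseteq\bigcap_{x}\mathcal{B}_x$. Your worry about an uncountable union of null sets in Condition~\ref{condition R0} can in fact be dropped: any $\widehat{m}\in\widehat{m}(x)$ with $d(\widehat{m},m(x))\ge\zeta$ is itself a point $\nu\notin B_\zeta(m(x))$ with $\widehat{M}(\widehat{m},x)-\sup_{m\in m(x)}\widehat{M}(m,x)\le 0$ (because $m(x)\subseteq\Omega$), so the pointwise implication holds deterministically on that event without invoking parts (ii)--(iii) at all.
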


The key step in the proof of Theorem~\ref{thm::uniform convergence 1} is that, from Proposition~\ref{prop::uniform bound from loss to estimator}, we can obtain
\begin{align*}
\mathbb{P} \left \{\sup_{x \in \mathcal{X}}d( \widehat{m}(x), m(x))>\zeta \right \} \leq \mathbb{P}\left [\inf_{x \in \mathcal{X} } \inf_{\nu \notin B_\zeta(m(x))} \left \{ \widehat{M}(\nu, x)  -  \sup_{m \in m(x)} \widehat{M}(m, x) \right \}  >0 \right ].
\end{align*}
This means once the condition (\ref{improved sufficient condition}) in Theorem \ref{thm::uniform convergence 1} is satisfied, one can get the uniform convergence of the generalized Fr\'{e}chet mean under Condition~\ref{condition R0}.

Under the following additional identifiability assumption, Condition \eqref{improved sufficient condition} of Theorem \ref{thm::uniform convergence 1} can be verified directly by establishing the uniform convergence of the empirical cost function $\widehat{M}(\nu,x)$, as detailed in Theorem \ref{thm::uniform convergence 2}.

\begin{rassump}\label{condition R1}
For any $\zeta>0$,
  \begin{align*}
    &\inf_{x\in\mathcal{X}}\inf_{\nu\in\mathcal{M}:d(\nu,m(x))>\zeta} \left\{ M(\nu,x) - \sup_{m\in m(x)}M(m,x) \right\}>0.
  \end{align*}
\end{rassump}

Condition \ref{condition R1} ensures identifiability for the population generalized conditional Fr\'{e}chet means. When $m(x)$ is a singleton, this condition represents a standard prerequisite in classical $M$-estimation theory, see, e.g., Theorem 5.7 in \cite{vd98asym}. Analogous conditions are routinely employed within the Fr\'{e}chet regression literature, where they have been verified across a variety of metric spaces (\citealt{petersen2019frechet}).

\begin{theorem}\label{thm::uniform convergence 2}
Under Conditions~\ref{condition R0} and \ref{condition R1}, if
\begin{align}\label{improved sufficient condition 2}
\sup_{x\in\mathcal{X}}\sup_{\nu\in\mathcal{M}} \left| \widehat{M}(\nu, x) -M(\nu,x) \right| =o_p(1) ,
\end{align}
then $\sup_{x \in \mathcal{X}} d( \widehat{m}(x) , m(x)) =o_p(1)$ holds.
\end{theorem}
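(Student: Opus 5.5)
We reduce Theorem~\ref{thm::uniform convergence 2} to Theorem~\ref{thm::uniform convergence 1}: under Condition~\ref{condition R0} it suffices to show that, for every fixed $\zeta>0$, the event in \eqref{improved sufficient condition} has probability tending to one. Fix $\zeta>0$ and set
\begin{align*}
\eta(\zeta)=\inf_{x\in\mathcal{X}}\inf_{\nu\in\mathcal{M}:d(\nu,m(x))>\zeta}\left\{M(\nu,x)-\sup_{m\in m(x)}M(m,x)\right\},
\end{align*}
which is strictly positive by Condition~\ref{condition R1}. Write $\Delta_n=\sup_{x\in\mathcal{X}}\sup_{\nu\in\mathcal{M}}|\widehat{M}(\nu,x)-M(\nu,x)|$, so that $\Delta_n=o_p(1)$ by \eqref{improved sufficient condition 2}.

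The core step is a deterministic sandwich inequality. For any $x$ and any $\nu\notin B_\zeta(m(x))$ we have $d(\nu,m(x))\ge\zeta$. Condition~\ref{condition R1} uses the strict inequality $>\zeta$, so we cannot apply it to $\nu$ directly. Instead we apply Condition~\ref{condition R1} with $\zeta/2$ in place of $\zeta$; any such $\nu$ satisfies $d(\nu,m(x))\ge\zeta>\zeta/2$. This gives
\begin{align*}
M(\nu,x)-\sup_{m\in m(x)}M(m,x)\ge\eta(\zeta/2)>0.
\end{align*}
Next we compare empirical and population costs. We have $\widehat{M}(\nu,x)\ge M(\nu,x)-\Delta_n$, and for every $m\in m(x)$ also $\widehat{M}(m,x)\le M(m,x)+\Delta_n$. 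Taking the supremum over $m\in m(x)$ gives $\sup_{m}\widehat{M}(m,x)\le\sup_m M(m,x)+\Delta_n$. Combining,
\begin{align*}
\widehat{M}(\nu,x)-\sup_{m\in m(x)}\widehat{M}(m,x)\ge\eta(\zeta/2)-2\Delta_n.
\end{align*}
This lower bound does not depend on $x$ or $\nu$, so it also bounds the double infimum in \eqref{improved sufficient condition}.

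It follows that the event $\{\Delta_n<\eta(\zeta/2)/2\}$ is contained in the event appearing in \eqref{improved sufficient condition}. Its probability tends to one because $\Delta_n=o_p(1)$, which verifies \eqref{improved sufficient condition}. Theorem~\ref{thm::uniform convergence 1} then yields $\sup_{x}d(\widehat{m}(x),m(x))=o_p(1)$.

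The only delicate point is a bookkeeping one: the complement of $B_\zeta(m(x))$ in Proposition~\ref{prop::uniform bound from loss to estimator} must sit inside the region controlled by Condition~\ref{condition R1}. Both are subsets of $\mathcal{M}$, and $d(\nu,m(x))$ here is the infimum distance of Definition~\ref{Hausdorff}, so the two regions match once we pass from $\zeta$ to $\zeta/2$. A secondary concern is measurability of the suprema. We handle it either by reading the probabilities as outer probabilities, or by noting that the inclusion of events is purely deterministic, so monotonicity of (outer) probability is enough.
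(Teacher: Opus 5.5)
Your proof is correct and matches the paper's argument: reduce to Theorem~\ref{thm::uniform convergence 1}, then bound the empirical margin from below by the population margin from Condition~\ref{condition R1} minus $2\sup_{x,\nu}|\widehat{M}(\nu,x)-M(\nu,x)|$. Your passage from $\zeta$ to $\zeta/2$ correctly covers the points with $d(\nu,m(x))=\zeta$, which lie outside $B_\zeta(m(x))$ but are not reached by the strict inequality in Condition~\ref{condition R1}; the paper's proof glosses over this detail.
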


Theorem \ref{thm::uniform convergence 2} demonstrates that under Conditions~\ref{condition R0} and \ref{condition R1}, establishing the uniform convergence of the empirical generalized conditional Fr\'{e}chet means $\widehat{m}(x)$ reduces to verifying the condition \eqref{improved sufficient condition 2}. This condition essentially serves as a uniform law of large numbers. A standard approach to verifying this is to first establish the weak convergence $\widehat{M} \rightsquigarrow M$ in $l^{\infty}(\Omega\times \mathcal{X})$ which is the space of uniformly bounded functions on the product space $\Omega\times \mathcal{X}$, and subsequently apply the continuous mapping theorem, see, e.g., Theorem 1.3.6 in \cite{vdVW96}. By Theorem 1.5.4 of \cite{vdVW96}, this weak convergence holds if two requirements are met: (i) pointwise convergence such as $\widehat{M}(\nu, x) - M(\nu, x) = o_p(1)$ for every $(\nu, x) \in \Omega \times \mathcal{X}$, and (ii) the asymptotic uniform equicontinuity of $\widehat{M}$ in probability. Specifically, the latter requires that for any $\iota > 0$,
\[
\limsup_{n \rightarrow \infty} \mathbb{P}\left( \sup_{d(\nu_1,\nu_2)<\zeta_1} \sup_{d_\mathcal{X}(x_1,x_2)<\zeta_2} \left| \widehat{M}(\nu_1,x_1)-  \widehat{M}(\nu_2,x_2) \right| >\iota \right)\to0,
\]
as $\zeta_1,\zeta_2\to0$.

\subsection{Comparison with related work on the unique generalized conditional Fr\'{e}chet mean}\label{subsect::mean set uniform conv}

To facilitate comparison with existing works that assume a unique population conditional Fr\'{e}chet  mean, we focus on the case where the population generalized conditional Fréchet mean set $m(x)$ in (\ref{formual::general population Frechet M est}) is a singleton. Then, the condition (\ref{improved sufficient condition}) in Theorem~\ref{thm::uniform convergence 1} reduces to the following that for any $\zeta>0$,
\[
\mathbb{P}\left [\inf_{x \in \mathcal{X} } \inf_{\nu \notin B_\zeta(m(x))} \left \{ \widehat{M}(\nu, x)  -   \widehat{M}(m(x), x) \right \}  >0 \right ]\to 1.
\]
This condition is different from a condition widely utilized in existing literature on Fr\'{e}chet regression models; see e.g., \cite{petersen2019frechet}, \cite{chen2022uniform} and \cite{zhou2022network}. Specifically, these works commonly assume that for any $\zeta > 0$, there exists a $\tau = \tau(\zeta) > 0$ such that
\begin{align}\label{sample identificaion condition}
    \lim_{n\to\infty} \mathbb{P}\left [ \inf_{x\in \mathcal{X}} \inf_{\nu\in\mathcal{M}:d(\nu,\widehat{m}(x))>\zeta}  \left\{ \widehat{M}(\nu,x) - \widehat{M}(\widehat{m}(x),x)  \right\}>\tau \right] = 1.
\end{align}
This condition guarantees that the asymptotic uniform equicontinuity of the empirical cost function $\widehat{M}(\cdot,x)$ for generalized Fr\'{e}chet mean $\widehat{m}(x)$ implies the asymptotic uniform equicontinuity of $\widehat{m}(x)$.  However, \eqref{sample identificaion condition} can be very difficult to verify when it involves implicitly defined sample quantities. Establishing probability bounds for an infimum taken over a region that explicitly depends on the sample quantity $\widehat{m}(x)$ is technically demanding. Standard approaches to verifying \eqref{sample identificaion condition} typically rely on establishing a stricter  equation of the form $\widehat{M}(\nu,x) - \widehat{M}(\widehat{m}(x),x) = C d^2(\nu,\widehat{m}(x))$ for some fixed constant $C>0$. This equation has only been verified for a restricted class of metric spaces equipped with extrinsic Euclidean distances; see, e.g., Propositions 1 and 2 in \cite{petersen2019frechet}. 

If one further assumes the uniqueness of the empirical generalized conditional Fr\'{e}chet mean $\widehat{m}(x)$, and if one can show the continuity of the function $d(\widehat{m}(x),m(x))$, that is, for any sequence $\{x_\varkappa\}\in\mathcal{X}$ such that $x_\varkappa\to x$, one has
\begin{align}\label{formula::continuity of estimator}
    \left|d(\widehat{m}(x_\varkappa),m(x_\varkappa))-d(\widehat{m}(x),m(x))\right|\to 0,    
\end{align}
almost surely, then following the techniques from \cite{ziezold1977expected} and \cite{huckemann2011intrinsic}, we can extend the pointwise strong consistency of the generalized conditional Fr\'{e}chet mean to uniform strong consistency. The verification of (\ref{formula::continuity of estimator}) requires  the almost surely continuity of $\widehat{m}(\cdot)$; that is $d(\widehat{m}(x_\varkappa),\widehat{m}(x))\to 0$ holds almost surely for any sequence $\{x_\varkappa\}\in\mathcal{X}$ such that $x_\varkappa\to x$. However, for a generalized Fr\'{e}chet conditional mean which does not admit a closed form solution, it does not appear to be feasible to verify this continuity property without further assumptions. Indeed, if one could verify the almost surely continuity of $\widehat{m}(\cdot)$, the asymptotic equicontinuity of the generalized Fr\'{e}chet conditional mean would then be established, leading to the uniform convergence of the generalized Fr\'{e}chet conditional mean.

Finally, when focusing on time-varying Fr\'{e}chet medians under the uniqueness assumption, \cite{xu2025robust} develops several techniques to establish the uniform convergence based on Assumption (A3). Importantly, Condition~\ref{condition R0}, under the additional uniqueness assumption, is still weaker than Assumption (A3) in \cite{xu2025robust}. We do not require the entire space $\Omega$ to be a convex metric space, nor do we restrict $d$ and $d_\mathcal{C}$ to be identical metrics. For instance, consider the case where the parameter space $\Omega=\mathcal{M}=\mathcal{C}_m$ is the unit sphere $\mathcal{S}^2 \subset \mathbb{R}^3$ equipped with the extrinsic distance $d_{\rm IE}(\nu_1, \nu_2) = \|\nu_1 - \nu_2\|_{\rm E}$, where $\|\cdot\|_{\rm E}$ is the Euclidean norm. Because $(\mathcal{S}^2, d_{\rm IE})$ is not a convex metric space, Assumption (A3) of \cite{xu2025robust} fails. Under Condition~\ref{condition R0}, however, we can select $d_\mathcal{C}$ to be the great-circle distance. This choice ensures that $(\mathcal{S}^2, d_\mathcal{C})$ operates as a convex metric space (\citealt{khamsi2011introduction}), while simultaneously permitting the use of the extrinsic distance $d_{\rm IE}$ for extrinsic model fitting \citep{lin2017extrinsic}. This will be discussed further in Proposition~\ref{prop::Riemannian space extrinsic assump R0} for the extrinsic Fr\'{e}chet regression framework.

\section{Uniform convergence for Fr\'{e}chet regression}\label{sect::regression uniform conv}

In this section, we investigate the uniform convergence of Fr\'{e}chet-regression-based estimators. To characterize the conditional relationship between $Y$ and $X$, a natural choice for the population cost function is the conditional squared distance cost, $R(\nu,x) = \mathbb{E}[d^2(\nu,Y) \mid X=x]$. By setting $M(\nu,x) = R(\nu,x)$ in \eqref{formual::general population Frechet M est}, the general parameter $m(x)$ reduces to the conditional Fr\'{e}chet mean $r(x) = \argmin_{\nu\in\Omega} R(\nu,x)$. Various regression frameworks model this conditional Fr\'{e}chet mean by incorporating a weighting function into the cost, leading to the weighted population target
\begin{align}\label{formula::conditional Fr\'{e}chet mean}
    \widetilde{r}(x)=\argmin_{\nu\in\Omega} \widetilde{R}(\nu,x),~~\widetilde{R}(\nu,x)=\mathbb{E}\left[ w(X,x) d^2(\nu,Y) \right].
\end{align}
The existing literature on Fr\'{e}chet regresion commonly assumes the conditional Fr\'{e}chet mean, either $r(x)$ or $\widetilde{r}(x)$, is unique. In the following, we adapt this uniqueness assumption. The corresponding empirical estimator for the conditional Fr\'{e}chet mean $r(x)$ is then defined as
\begin{align*}
    \widehat{r}(x)=\argmin_{\nu\in\Omega} \widehat{R}(\nu,x),~~\widehat{R}(\nu,x)= \sum_{i=1}^n \widehat{w}(X_i,x) d^2(\nu,Y_i)
\end{align*}
based on a set of data-dependent weights $\{\widehat{w}(X_i,x)\}_{i=1}^n$. In the following, we present five examples of such Fr\'{e}chet-regression-based estimators.

\begin{model}[Global Fr\'{e}chet regression (\citealt{petersen2019frechet})]\label{exmp::global F regression}
 When $\mathcal{X}\subset \mathbb{R}^p$ for some positive integer $p$, the global Fr\'{e}chet regression given by considering the empirical weights $\widehat{w}(X_i,x)=n^{-1}+n^{-1}(X_i-\widehat{\mu})^\top \widehat{\Sigma}^{-1} (x-\widehat{\mu})$ with $\widehat{\mu}=n^{-1}\sum_{i=1}^nX_i$ and $\widehat{\Sigma}=n^{-1}\sum_{i=1}^n (X_i-\widehat{\mu})(X_i-\widehat{\mu})^\top$.
\end{model}

\begin{model}[Local Fr\'{e}chet regression (\citealt{petersen2019frechet})]\label{exmp::local F regression}
When $\mathcal{X}\subset \mathbb{R}$, the local Fr\'{e}chet regression is constructed via considering the empirical weights $\widehat{w}(X_i,x)=n^{-1}\widehat{\sigma}^{-2}K_h(X_i-x)[\widehat{\tau}_2(x)-\widehat{\tau}_1(x)]$ where $\widehat{\tau}_l(x)=n^{-1}\sum_{i=1}^n[K_h(X_i-x)(X_i-x)^l]$ for $l\in\{0,1,2\}$, $\widehat{\sigma}^2=\widehat{\tau}_0(x)\widehat{\tau}_2(x)-\widehat{\tau}_1^2(x)$, and $K_h(\cdot)=K(\cdot/h)/h$ for a smoothing kernel $K(\cdot)$ and a bandwidth sequence $h>0$.
\end{model}


\begin{model}[Kernel Fr\'{e}chet regression (\citealt{davis2010population})]\label{exmp::kernel F regression}
When $\mathcal{X}\subset \mathbb{R}$, a kernel-based Fr\'{e}chet mean is constructed based on the empirical weights are $\widehat{w}(X_i,x)=K_h(X_i-x)/\sum_{j=1}^nK_h(X_j-x)$.
\end{model}

\begin{model}[$k$-nearest neighbor Fr\'{e}chet regression]\label{exmp::kNN F regression}
The $k$-nearest neighbor (kNN) Fr\'{e}chet regression is developed considering the empirical weights $\widehat{w}(X_i,x)=1/k$ if $X_i\in {\rm kNN}(x,d_\mathcal{X})$ and 0 otherwise, where ${\rm kNN}(x,d_\mathcal{X})$ represents the set of $k$ nearest labeled neighbors to $x$ based on a suitable
distance function $d_\mathcal{X}$.
\end{model}

\begin{model}[Deep Fr\'{e}chet regression (\citealt{iao2025deep,zhou2025end})]\label{exmp::deep F regression}
When $\mathcal{X}\subset \mathbb{R}^p$, the deep Fr\'{e}chet regression is developed considering the empirical weights $\{ \widehat{w}(X_i,x): \sum_{i=1}^n \widehat{w}(X_i,x)=1,\widehat{w}(X_i,x)\geq 0~{\rm for}~i=1,2,\ldots,n\}$ learning from  a fully connected neural network with multiple hidden layers using rectified linear unit activations, followed by a softmax output layer. 
\end{model}

Model~\ref{exmp::global F regression} generalizes classical Euclidean linear regression to metric spaces. Specifically, when $\Omega = \mathbb{R}$ is equipped with the Euclidean distance $d_{\rm E}$, the estimator $\widehat{r}(x)$ in Model~\ref{exmp::global F regression} coincides with the fitted values obtained via the Euclidean linear regression using ordinary least squares. Similarly, Model~\ref{exmp::local F regression} extends Euclidean local linear regression to the Fr\'{e}chet setting. Beyond this local linear approach, alternative localized estimators include the kernel and $k$-NN Fr\'{e}chet regressions detailed in Models~\ref{exmp::kernel F regression} and \ref{exmp::kNN F regression}, respectively. For the specific case where the predictor space $\mathcal{X}\subset \mathbb{R}^p$, $k$-NN Fr\'{e}chet regression was analyzed by \cite{qiu2024semi}. Model~\ref{exmp::deep F regression} integrates neural network architectures into the Fr\'{e}chet regression framework. We remark that the literature encompasses a wide array of additional Fr\'{e}chet regression methodologies  (\citealt{capitaine2024frechet,qiu2024random, gyorfi2026metric}), including random-forest-based estimators. Notably, while the empirical weights $\widehat{w}$ may take negative values in the global and local linear frameworks of Model~\ref{exmp::global F regression} and \ref{exmp::local F regression}, the weights employed in the vast majority of other Fr\'{e}chet regression models are strictly nonnegative.

In the following, we develop several propositions to verify that Condition~\ref{condition R0} holds for some specific metric spaces considering Models~\ref{exmp::global F regression}--\ref{exmp::deep F regression}.

\begin{proposition}\label{prop::convex Euclidean assump R0}
Let $\Omega$ be a convex and totally bounded subset of a Euclidean space, equipped with the induced Euclidean metric $d=d_{\rm IE}$. Then, Condition~\ref{condition R0} is satisfied with $d_\mathcal{C}=d_{\rm IE}$, $m(x)=r(x)$, $\widehat{M}(\nu,x)=\widehat{R}(\nu,x)$ described in Models~\ref{exmp::global F regression}--\ref{exmp::deep F regression}.
\end{proposition}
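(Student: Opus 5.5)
The plan is to use the fact that in a Euclidean space the weighted squared-distance cost $\widehat{R}(\cdot,x)$ is an explicit quadratic in $\nu$. This makes all three parts of Condition~\ref{condition R0} elementary once one identity is in hand: in every one of Models~\ref{exmp::global F regression}--\ref{exmp::deep F regression}, the empirical weights sum to one. Throughout, take $\mathcal{M}$ to be the ambient Euclidean space $\mathbb{R}^q$, which is Polish, and set $\mathcal{C}_m=\Omega$ for every $m\in r(x)$ with $d_\mathcal{C}=d_{\rm IE}$.

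\textbf{Step 1 (weights sum to one).} Verify $\sum_{i=1}^n\widehat{w}(X_i,x)=1$ model by model.
\begin{itemize}
\item Global model: $\sum_i (X_i-\widehat{\mu})=0$, so the sum is $1$.
\item Local model: $\sum_i K_h(X_i-x)\{\widehat{\tau}_2(x)-\widehat{\tau}_1(x)(X_i-x)\}/(n\widehat{\sigma}^2)=(\widehat{\tau}_0\widehat{\tau}_2-\widehat{\tau}_1^2)/\widehat{\sigma}^2=1$. Here I read the weight in Model~\ref{exmp::local F regression} with the factor $(X_i-x)$ on $\widehat{\tau}_1$, as in \cite{petersen2019frechet}, and assume $\widehat{\sigma}^2>0$ so that the weights are well defined.
\item Kernel, kNN and deep models: this holds by construction.
\end{itemize}
Expanding then gives
\[
\widehat{R}(\nu,x)=\|\nu-\bar{Y}_w(x)\|_{\rm E}^2+c(x),\qquad \bar{Y}_w(x)=\sum_{i=1}^n\widehat{w}(X_i,x)Y_i,
\]
where $c(x)$ does not depend on $\nu$. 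This is the step I expect to be the main obstacle conceptually. The global and local weights can be negative, so convexity cannot be read off termwise from $\widehat{w}_i d^2(\nu,Y_i)$. It is the sum-to-one identity, not positivity, that makes the leading coefficient of $\|\nu\|^2$ equal to $1>0$.

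\textbf{Step 2 (Condition~\ref{condition R0}(i),(ii)).} $\Omega$ is complete (as assumed in Section~\ref{sect::uniform conv}) and totally bounded, hence compact. $\widehat{R}(\cdot,x)$ is continuous, so its minimum over $\Omega$ is attained; indeed $\widehat{r}(x)$ is the (unique) projection of $\bar{Y}_w(x)$ onto the closed convex set $\Omega$, so it is nonempty. For (ii), $\Omega$ is a convex subset of $\mathbb{R}^q$, so for distinct $a,b\in\Omega$ the midpoint $z=(a+b)/2\in\Omega\setminus\{a,b\}$ satisfies $d_{\rm IE}(a,z)+d_{\rm IE}(z,b)=d_{\rm IE}(a,b)$. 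Hence $(\Omega,d_{\rm IE})$ is convex in the sense of Definition~\ref{def::convex metric space}. It contains both $m=r(x)$ and $\widehat{r}(x)$, deterministically and not merely with probability tending to one.

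\textbf{Step 3 (Condition~\ref{condition R0}(iii)).} Take $m=r(x)$ and $\widehat{m}=\widehat{r}(x)$. If $\widehat{m}\neq m$, pick any $t\in(0,1)$ and set $\widetilde{m}=(1-t)m+t\widehat{m}\in\Omega$ by convexity. Collinearity gives the additivity of distances, with
\[
\delta(x)=d_{\rm IE}(m,\widetilde{m})/d_{\rm IE}(m,\widehat{m})=t .
\]
Since $\widehat{R}(\cdot,x)$ is a convex quadratic by Step 1, we get
\[
\widehat{R}(\widetilde{m},x)\le t\,\widehat{R}(\widehat{m},x)+(1-t)\,\widehat{R}(m,x).
\]
In fact the inequality holds with the explicit slack $-t(1-t)\|m-\widehat{m}\|^2$. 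The degenerate case $\widehat{m}=m$ has $d(\widehat{r}(x),r(x))=0$, so the inequality is vacuous and imposes nothing; I would remark that it is excluded or trivially satisfied there.
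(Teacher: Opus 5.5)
Your proposal is correct and follows the paper's proof: you take $\mathcal{C}_m=\Omega$ with $d_\mathcal{C}=d_{\rm IE}$, use that the weights in all five models sum to one to write $\widehat{R}(\cdot,x)=\|\nu-\bar{Y}_w(x)\|_{\rm E}^2+c(x)$, and deduce (iii) from convexity of this quadratic along the segment from $m$ to $\widehat{m}$. Your additions make the same argument more self-contained. You check the sum-to-one identity model by model, including the $(X_i-x)$ factor in the local weights, which the paper only asserts. You also prove existence in (i) directly by compactness and projection, where the paper cites Proposition 1 of \cite{petersen2019frechet}.
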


A specific metric space covered in Proposition~\ref{prop::convex Euclidean assump R0} is the space of graph Laplacians of undirected networks with the Frobenius distance; this is  considered later, in Sections~\ref{sect::sim data} and \ref{sect::real data}. To see this, recall that a graph Laplacian is a symmetric matrix, and thus, the Frobenius distance between any two graph Laplacians is equivalent to the standard Euclidean distance between their half-vectorizations.

\begin{proposition}\label{prop::Riemannian space assump R0}
Let $(\Omega,d)$ be a compact Riemannian manifold,  and ${\rm inj}(\Omega)$ be the injectivity radius of the manifold, defined in the proof. If the support of $Y_1,Y_2,\ldots,Y_n$ is contained within a ball $B(p, r) \subset \Omega$ with $r < 2^{-1}{\rm inj}(\Omega)$, then Condition~\ref{condition R0} holds with $d_\mathcal{C}=d$, $m(x)=r(x)$, $\widehat{M}(\nu,x)=\widehat{R}(\nu,x)$ for Models~\ref{exmp::kernel F regression}--\ref{exmp::deep F regression}. If we further assume $W^-<[\{ {\rm inj}(\Omega)\}^2-2r{\rm inj}(\Omega)]/(4r{\rm inj}(\Omega)+r^2)$ where $W^-=\sum_{i\notin\mathcal{I}^+}|\widehat{w}(X_i,x)|$ with $\mathcal{I}^+=\{i\in\{1,2,\ldots,n\}:\widehat{w}(X_i,x)\geq 0\}$, then Condition~\ref{condition R0} holds with $d_\mathcal{C}=d$, $m(x)=r(x)$, $\widehat{M}(\nu,x)=\widehat{R}(\nu,x)$ for Models~\ref{exmp::global F regression}--\ref{exmp::local F regression}.
\end{proposition}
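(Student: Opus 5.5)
We verify the three parts of Condition~\ref{condition R0} with $\mathcal{M}=\mathcal{C}_m=\Omega$ (or a suitable geodesic ball inside $\Omega$) and $d_\mathcal{C}=d$. Part (ii) is essentially free: a compact Riemannian manifold is totally bounded and Polish. The convex set $\mathcal{C}_m$ is taken to be a closed geodesic ball $\bar B(p,r')$ with $r\le r'<2^{-1}{\rm inj}(\Omega)$, which is strongly geodesically convex in the sense that any two points are joined by a unique minimizing geodesic lying in the ball. Existence of such a midpoint-type $z$ then gives convexity in the sense of Definition~\ref{def::convex metric space}.

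For part (i), the map $\nu\mapsto\widehat R(\nu,x)$ is continuous on compact $\Omega$, so a minimizer exists for every realisation. We must also show that $r(x)$ and some $\widehat r\in\widehat r(x)$ lie in $\mathcal{C}_m$. For nonnegative weights (Models~\ref{exmp::kernel F regression}--\ref{exmp::deep F regression}) summing to one, we use the standard fact, as in Karcher/Afsari, that weighted Fréchet means of points in a ball of radius $r<2^{-1}{\rm inj}(\Omega)$ lie in that ball. The argument projects any $\nu$ outside the ball toward $p$, and this strictly decreases every $d(\nu,Y_i)$. The same applies to the population $r(x)$, since the support of $Y$ lies in $B(p,r)$.

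For part (iii), take $\widetilde m=\gamma(t)$ on the minimizing geodesic $\gamma$ from $m$ to $\widehat m$, with $t\in(0,1)$, so that $\delta(x)=t$ and the additivity of distances holds. The needed inequality follows from geodesic convexity of $\nu\mapsto d^2(\nu,Y_i)$ along $\gamma$. The key tool is that $d^2(\cdot,y)$ is geodesically convex on $B(y,\rho)$ whenever $\rho<{\rm inj}(\Omega)$ and the ball avoids conjugate/curvature obstructions. Here we use the standing hypotheses: all points involved lie within $2r<{\rm inj}(\Omega)$ of each $Y_i$. If needed, we restrict $t$ to be small, which is allowed since only one $\widetilde m$ is required. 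For nonnegative weights, summing the convexity inequalities gives $\widehat R(\widetilde m,x)\le t\widehat R(\widehat m,x)+(1-t)\widehat R(m,x)$.

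Models~\ref{exmp::global F regression}--\ref{exmp::local F regression} are the main obstacle, because some weights are negative. Write $\widehat R=\sum_{\mathcal I^+}-\sum_{i\notin\mathcal I^+}$. The positive part enjoys a quantitative strong-convexity lower bound along $\gamma$: the second derivative of $d^2(\gamma(t),Y_i)$ is at least a constant times $\ell^2$, where $\ell=d(m,\widehat m)$, with the constant depending on $r$ and ${\rm inj}(\Omega)$ via Hessian comparison. The negative part has second derivative bounded above by a constant times $\ell^2$, using the bound $\{{\rm inj}(\Omega)+r\}$-type distances. Requiring the positive curvature mass to dominate $W^-$ times the upper bound yields a threshold on $W^-$. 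I would try to reproduce the stated threshold $[\{\rm inj\}^2-2r\,{\rm inj}]/(4r\,{\rm inj}+r^2)$ by bounding the positive contribution by ${\rm inj}^2-2r\,{\rm inj}$ and the negative by $4r\,{\rm inj}+r^2$, using $\sum\widehat w=1$ so that the positive mass equals $1+W^-$. Existence and localisation of the minimizer in the ball with negative weights also need this domination, and I would obtain them with the same projection argument. Checking these constants is where I expect the difficulty to lie.
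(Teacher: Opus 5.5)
Your route differs from the paper's, and it has a genuine gap. The gap is most visible in the negative-weight case.

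The threshold $W^-<[\{{\rm inj}(\Omega)\}^2-2r\,{\rm inj}(\Omega)]/(4r\,{\rm inj}(\Omega)+r^2)$ does not come from a second-derivative (Hessian-comparison) bound along $\gamma$. Such bounds would involve sectional curvature, which the statement never assumes. In the paper it is a zeroth-order localisation condition. One requires $\widehat R(q,x)>\widehat R(p,x)$ for every $q$ with $d(p,q)={\rm inj}(\Omega)$, so that minimisers avoid the cut loci of the $Y_i$ and $\widehat R(\cdot,x)$ is differentiable near them. Using only $d(Y_i,p)\le r$, ${\rm inj}(\Omega)-r\le d(Y_i,q)\le{\rm inj}(\Omega)+r$ and $W^+=1+W^-$ (which you did identify), this is implied by
\[
(1+W^-)\{{\rm inj}(\Omega)-r\}^2-W^-\{{\rm inj}(\Omega)+r\}^2>(1+W^-)r^2 .
\]
This rearranges exactly to the stated bound. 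Your two quantities are therefore differences of squared distances, namely $({\rm inj}-r)^2-r^2$ and $({\rm inj}+r)^2-({\rm inj}-r)^2+r^2$, not curvature constants.

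Your projection argument also cannot localise the minimiser here. Moving $\nu$ towards $p$ shrinks its distance to the negatively weighted $Y_i$ as well, which increases $\widehat R$. A quantitative value comparison like the one above is what replaces it. The paper then obtains (iii) from differentiability near the minimisers, not from convexity.

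The positive-weight step has a related problem. Geodesic convexity of $d^2(\cdot,y)$ on $B(y,\rho)$ requires $\rho<\pi/(2\sqrt{\bar K})$, where $\bar K>0$ bounds the sectional curvature from above; $\rho<{\rm inj}(\Omega)$ is not enough. On the unit sphere ${\rm inj}=\pi$, yet the Hessian of $\frac{1}{2}d^2(\cdot,y)$ in directions orthogonal to the radial one is $\theta\cot\theta<0$ at distance $\theta>\pi/2$. With $r$ close to $\pi/2$, points of $\gamma$ can be almost $\pi$ from some $Y_i$, so summing individual convexity inequalities fails.

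Similarly, a ball of radius below ${\rm inj}(\Omega)/2$ need not be strongly convex. The Karcher/Afsari localisation you cite also carries a curvature bound. Your proviso that the ball \emph{avoids conjugate/curvature obstructions} is therefore an extra hypothesis not in the statement.

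Restricting to small $t$ does not help, because (iii) is a secant inequality against the endpoint value $\widehat R(\widehat m,x)$. For example, $g(t)=1-t^2$ attains its minimum over $[0,1]$ at $t=1$, yet $g(t)>(1-t)g(0)+tg(1)$ for every $t\in(0,1)$. Behaviour near $t=0$ cannot rescue a failure of convexity along the whole segment.

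For (ii) no ball is needed. The paper takes $\mathcal{C}_m=\Omega$, which is metrically convex because any two points are joined by a minimising geodesic. The paper itself asserts the localisation of $\widehat m(x)$ in $B(p,r)$, and the passage from differentiability to (iii), without detailed justification. Even so, the stated threshold comes from the value comparison above, and that is the idea missing from your plan.
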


Proposition~\ref{prop::Riemannian space assump R0} examines the case where $\Omega$ is a compact Riemannian manifold. In this setting, Condition~\ref{condition R0} naturally reduces to a weaker requirement for the local geodesic convexity, as discussed in Section~\ref{sect::uniform conv}. Crucially, this condition can be formally verified by evaluating the differentiability of the empirical cost function $\widehat{R}(\nu,x)$ around each minimizer $\widehat{r}\in \widehat{r}(x)$.

As demonstrated by Proposition~\ref{prop::Riemannian space assump R0}, the possibly negative empirical weights $\widehat{w}$, such as those encountered in Models \ref{exmp::global F regression} and \ref{exmp::local F regression}, necessitate additional conditions to guarantee the differentiability of the empirical cost function $\widehat{R}(\cdot,x)$. However, this technical complication is circumvented under the extrinsic Fr\'{e}chet regression framework (\citealt{lin2017extrinsic}), as formalized in Proposition~\ref{prop::Riemannian space extrinsic assump R0}.

\begin{proposition}\label{prop::Riemannian space extrinsic assump R0}
Let $\Omega$ be a compact Riemannian manifold with the induced Euclidean distance $d=d_{\rm IE}$.  Then Condition~\ref{condition R0} holds with $d_\mathcal{C}=d_\Omega$, where $d_\Omega$ is the corresponding intrinsic metric, $m(x)=r(x)$, $\widehat{M}(\nu,x)=\widehat{R}(\nu,x)$ for Models~\ref{exmp::global F regression}--\ref{exmp::deep F regression}.
\end{proposition}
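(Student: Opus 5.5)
The plan is to exploit the fact that, with the extrinsic distance, the empirical cost has a simple closed form in the ambient Euclidean space. I will then use the intrinsic geodesic structure only to build the intermediate point $\widetilde m$ in Condition~\ref{condition R0}~(iii). Regard $\Omega\subset\mathbb{R}^D$ as an embedded compact manifold, assumed connected and without boundary. In each of Models~\ref{exmp::global F regression}--\ref{exmp::deep F regression} the weights satisfy $\sum_{i=1}^n\widehat w(X_i,x)=1$; for the global and local linear weights this is a direct computation, and for the others it holds by construction. Expanding the square then gives
\begin{align*}
\widehat R(\nu,x)=\|\nu-\bar Y_w(x)\|_{\rm E}^2+c(x),\qquad \bar Y_w(x)=\sum_{i=1}^n\widehat w(X_i,x)Y_i,
\end{align*}
where $c(x)$ does not depend on $\nu$. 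This holds even when some weights are negative, which is why no analogue of the condition on $W^-$ from Proposition~\ref{prop::Riemannian space assump R0} is needed.

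Conditions~\ref{condition R0}~(i) and (ii) are checked first. Since $\widehat R(\cdot,x)$ is continuous on the compact set $\Omega$, the set $\widehat r(x)$ of nearest-point projections of $\bar Y_w(x)$ onto $\Omega$ is nonempty, which gives (i). For (ii), compactness makes $\Omega$ totally bounded, and I take $\mathcal M=\Omega$, which is Polish as a compact metric space. For every $m$ I take $\mathcal C_m=\Omega$ with $d_{\mathcal C}=d_\Omega$. By Hopf--Rinow, the complete connected manifold $(\Omega,d_\Omega)$ admits a minimizing geodesic between any two points, so it is a convex metric space in the sense of Definition~\ref{def::convex metric space}. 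Trivially $m\in\mathcal C_m$ and $\mathcal C_m\cap\widehat r(x)\neq\emptyset$.

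For (iii), fix $m=r(x)$ and $\widehat m\in\widehat r(x)$ with $\widehat m\ne m$. Let $\gamma:[0,1]\to\Omega$ be a constant-speed minimizing geodesic from $m$ to $\widehat m$. Every $\widetilde m=\gamma(t)$ with $t\in(0,1)$ satisfies the metric betweenness relation, with $\delta(x)=d_\Omega(m,\widetilde m)/d_\Omega(m,\widehat m)=t$. Set $f(t)=\widehat R(\gamma(t),x)$, which is smooth. Because $\widehat m$ minimizes $\widehat R(\cdot,x)$ over a manifold without boundary, the Riemannian gradient vanishes at $\widehat m$, so $f'(1)=0$ and hence $f(t)-f(1)=O((1-t)^2)$. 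The required inequality $f(t)\le tf(1)+(1-t)f(0)$ is equivalent to
\begin{align*}
f(t)-f(1)\le(1-t)\{f(0)-f(1)\}.
\end{align*}
If $f(0)>f(1)$, the right-hand side is of exact order $1-t$, so the inequality holds for all $t$ sufficiently close to $1$. I would then pick such a $t$ and set $\widetilde m=\gamma(t)$. Note that only existence of one $\widetilde m$ is needed, not geodesic convexity of $\widehat R$, which generally fails for the extrinsic cost.

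The main obstacle is the degenerate situations. One is $m\in\widehat r(x)$, where either $\widehat m=m$ makes $\delta$ undefined, or $f(0)=f(1)$ forces $f(t)=\min f$. I plan to handle this in one of two ways. The first is to choose $\widehat m\in\widehat r(x)$ distinct from $m$ with $f(0)>f(1)$ whenever $\widehat r(x)\not\ni m$. The second is to observe that when $m\in\widehat r(x)$, the conclusion of Proposition~\ref{prop::uniform bound from loss to estimator} holds trivially for that $m$, so (iii) only needs to be checked when $\widehat R(m,x)>\min_\Omega\widehat R(\cdot,x)$. A second point to verify is that the betweenness relation and $\delta$ are taken in $d_\Omega$ while $d=d_{\rm IE}$ is used elsewhere. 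This is consistent with Condition~\ref{condition R0}, since the two metrics induce the same topology on $\Omega$, and that is what Proposition~\ref{prop::uniform bound from loss to estimator} requires when passing between $B_\zeta$ in $d$ and geodesics in $d_{\mathcal C}$.
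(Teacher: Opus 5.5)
Your proposal is correct. For part (iii) it takes a different, and more careful, route than the paper.

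Parts (i) and (ii) are handled in the paper exactly as you do: continuity on a compact set, and $(\Omega,d_\Omega)$ being a convex metric space. For (iii), however, the paper only says that the argument of Proposition~\ref{prop::convex Euclidean assump R0} carries over, namely that the closed form $\widehat R(\nu,x)=\|\nu-\bar Y_w(x)\|_{\rm E}^2+c(x)$ is convex in $\nu$. That convexity holds along straight segments of the ambient space. With $d_{\mathcal C}=d_\Omega$, though, $\widetilde m$ must lie on an intrinsic minimizing geodesic, and along such a geodesic $t\mapsto\|\gamma(t)-\bar Y_w(x)\|_{\rm E}^2$ need not be convex. On the unit sphere its second derivative is $2d_\Omega^2(m,\widehat m)\langle\gamma(t),\bar Y_w(x)\rangle$, which is negative on the hemisphere opposite $\bar Y_w(x)$.

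Your argument supplies exactly what is missing. Because $\widehat m$ is an interior minimizer, $f'(1)=0$. The quadratic term $f(t)-f(1)=O((1-t)^2)$ is then dominated by the linear gap $(1-t)\{f(0)-f(1)\}$ for $t$ near $1$, and (iii) only asks for one such $\widetilde m$.

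The two routes trade off as follows:
\begin{itemize}
\item A genuine convexity argument, where it applies, would give the chord inequality at every intermediate point; yours gives it only near $\widehat m$.
\item Yours needs nothing beyond smoothness of $\widehat R(\cdot,x)$ on a boundaryless $\Omega$. So it does not depend on the closed form or on the sign or normalization of the weights.
\item It is essentially a rigorous version of the differentiability heuristic the paper uses for Proposition~\ref{prop::Riemannian space assump R0}. The key point there is the first-order condition at $\widehat m$, not differentiability near $m$.
\end{itemize}

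On the degenerate case, your first option is not a separate fix; it just restates the non-degenerate case. If $\widehat r(x)=\{m\}$, no $\widetilde m\neq m$ can satisfy $d_\Omega(m,\widetilde m)+d_\Omega(\widetilde m,m)=0$. Hence Condition~\ref{condition R0}(iii) as literally written fails there under any argument. More generally, when $m\in\widehat r(x)$ the condition would force $\widetilde m\in\widehat r(x)$. The paper does not address this.

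Your second option is the right resolution, and it costs nothing. For every $\widehat m\in\widehat r(x)$ one has $\widehat R(\widehat m,x)\le\widehat R(m,x)$. So the hypothesis of Proposition~\ref{prop::uniform bound from loss to estimator} by itself already forces $\widehat m\in B_\zeta(m(x))$, and exempting this case loses nothing downstream.
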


Notably, when the compact Riemannian manifold $\Omega$ is equipped with the induced Euclidean distance $d_{\rm IE}$, the space $(\Omega,d_{\rm IE})$ is unlikely a convex metric space, and thus the requirement of the convex metric space in Assumption (A3) in \cite{xu2025robust} will typically not be satisfied in this situation.

The following proposition studies Hadamard spaces; these spaces are discussed in e.g., \cite{BH11}. The global non-positive curvature characteristic of Hadamard spaces guarantee that the squared distance is geodesically convex. Consequently, the empirical cost function $\widehat{R}(\nu, x)$ inherits this convexity when the empirical weights are nonnegative.

\begin{proposition}\label{prop::Hadamard space assump R0}
Let $(\Omega,d)$ be a Hadamard space.  Then  Condition~\ref{condition R0} holds with $d_\mathcal{C}=d$, $m(x)=r(x)$, $\widehat{M}(\nu,x)=\widehat{R}(\nu,x)$ for Models~\ref{exmp::kernel F regression}--\ref{exmp::deep F regression}.
\end{proposition}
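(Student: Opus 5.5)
We verify the three parts of Condition~\ref{condition R0} in turn, taking $\mathcal{M}=\mathcal{C}_m=\Omega$ and $d_\mathcal{C}=d$ for every $m\in m(x)=r(x)$. Throughout, the weights of Models~\ref{exmp::kernel F regression}--\ref{exmp::deep F regression} are nonnegative and sum to one: in Model~\ref{exmp::kNN F regression} each weight is $1/k$ or $0$, in Model~\ref{exmp::deep F regression} the softmax output is nonnegative, and in Model~\ref{exmp::kernel F regression} we assume, as is standard, that $K\geq 0$ and $\sum_j K_h(X_j-x)>0$. In each case $\widehat{R}(\cdot,x)$ is therefore a convex combination of the squared distance functions $d^2(\cdot,Y_i)$.

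\textbf{Part (ii).} A Hadamard space is a complete CAT(0) geodesic space. It is therefore convex in the sense of Definition~\ref{def::convex metric space}: the geodesic midpoint of two distinct points lies strictly between them and differs from both. For $\Omega$ to sit inside a Polish space, we need separability. We assume, as the paper implicitly does, that $\Omega$ is totally bounded, so separability follows, and a complete, totally bounded space is itself Polish.

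\textbf{Part (i).} Existence of $\widehat{r}(x)$ follows from the strict convexity argument standard for CAT(0) spaces. The map $\nu\mapsto \widehat{R}(\nu,x)$ is continuous, and it is $2\sum_i\widehat{w}_i$-strongly geodesically convex by the CN inequality
\[
d^2(\gamma_t,y)\leq (1-t)d^2(\gamma_0,y)+t d^2(\gamma_1,y)-t(1-t)d^2(\gamma_0,\gamma_1).
\]
Take a minimizing sequence. Applying the CN inequality at $t=1/2$ shows the sequence is Cauchy, and completeness gives a minimizer. Alternatively, one may cite the existence of barycenters of finitely supported probability measures in Hadamard spaces \citep{BH11}. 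The minimizer is even unique, so the only degenerate case is $\widehat{r}(x)=m$.

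\textbf{Part (iii).} Take $\widehat{m}=\widehat{r}(x)$ and $m=r(x)$, and let $\gamma:[0,1]\to\Omega$ be the unique geodesic from $m$ to $\widehat{m}$. Set $\widetilde{m}=\gamma(1/2)$. Then
\[
d(m,\widetilde{m})+d(\widetilde{m},\widehat{m})=d(m,\widehat{m}),\qquad \delta(x)=1/2 .
\]
Summing the CN inequality with nonnegative weights $\widehat{w}(X_i,x)$, and dropping the negative term, gives
\[
\widehat{R}(\widetilde{m},x)\leq \tfrac12\widehat{R}(\widehat{m},x)+\tfrac12\widehat{R}(m,x),
\]
which is exactly the inequality required in (iii). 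This inequality holds deterministically, hence almost surely. The only subtlety is the requirement $\widetilde{m}\notin\{m,\widehat{m}\}$, which needs $m\neq\widehat{m}$. When $m=\widehat{m}$, the conclusion $d(\widehat{m}(x),m(x))=0<\zeta$ that Condition~\ref{condition R0} is used for holds trivially. We therefore read (iii) as vacuous in that case, as its formulation (with $\delta$ undefined) implicitly requires.

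The main obstacle is not the convexity itself, which is the classical CN inequality. It lies in the bookkeeping assumptions: ensuring nonnegative weights for Model~\ref{exmp::kernel F regression}, and supplying the total boundedness and separability needed for (ii), which a general Hadamard space lacks.
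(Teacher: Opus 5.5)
Your proof is correct and follows the paper's route. You take $\mathcal{C}_m=\Omega$ and $d_\mathcal{C}=d$, then sum the CAT(0) (CN) inequality along the geodesic from $m$ to $\widehat{m}$ using the nonnegative weights of Models~\ref{exmp::kernel F regression}--\ref{exmp::deep F regression}; you fix $t=1/2$, whereas the paper keeps a general $t=\delta(x)$. Your explicit minimizing-sequence argument for existence fills a point the paper's proof passes over. So does your remark that total boundedness of $\Omega$ must be added as an assumption, since Condition~\ref{condition R0}(ii) demands it and it fails for Hadamard spaces such as $\mathbb{R}^p$.
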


If we focus on a  subclass of Hadamard spaces possessing smooth structure, Condition~\ref{condition R0} may be verified as in the proposition below using the differentiability of $\widehat{R}(\nu,x)$ at $\widehat{r}(x)$.

\begin{proposition}\label{prop::Hadamard manifold assump R0}
Let $(\Omega,d)$ be a Hadamard manifold, Condition~\ref{condition R0} holds with $d_\mathcal{C}=d$, $m(x)=r(x)$, $\widehat{M}(\nu,x)=\widehat{R}(\nu,x)$ for Models~\ref{exmp::global F regression}--\ref{exmp::deep F regression}.
\end{proposition}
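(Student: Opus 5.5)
The plan is to verify the three parts of Condition~\ref{condition R0} in turn, taking $\mathcal{M}=\mathcal{C}_m=\Omega$ and $d_\mathcal{C}=d$.

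\emph{Existence and total boundedness.} On a Hadamard manifold, every pair of points is joined by a unique minimizing geodesic. Hence $(\Omega,d)$ is a convex metric space: for $\nu_1\neq\nu_2$ the midpoint $\gamma(1/2)$ of that geodesic satisfies Definition~\ref{def::convex metric space}. A Hadamard manifold is complete and separable, so it is Polish. Total boundedness is not automatic on a noncompact manifold. I would therefore read the statement, as for the other examples, with $\Omega$ restricted to a closed bounded geodesically convex subset, for instance a closed geodesic ball. Such a set is compact by Hopf--Rinow, so it is totally bounded, and it contains every geodesic between its points. This makes part (ii) immediate. For part (i), $\widehat{R}(\cdot,x)$ is continuous in $\nu$ as a finite weighted sum of continuous functions. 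A continuous function on a compact set attains its minimum, so $\widehat{r}(x)\neq\emptyset$.

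\emph{Part (iii).} Fix $x$, $r=r(x)$ and any $\widehat r\in\widehat r(x)$. If $\widehat r=r$, the condition is vacuous. Otherwise let $\gamma:[0,1]\to\Omega$ be the geodesic with $\gamma(0)=r$ and $\gamma(1)=\widehat r$, and set $\widetilde m=\gamma(t)$ with $t\in(0,1)$. Then $d(r,\widetilde m)+d(\widetilde m,\widehat r)=d(r,\widehat r)$ and $\delta(x)=t$. Write $f(t)=\widehat R(\gamma(t),x)$. The required inequality is $f(t)\le t f(1)+(1-t)f(0)$ for some $t\in(0,1)$.

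For the nonnegative-weight Models~\ref{exmp::kernel F regression}--\ref{exmp::deep F regression}, this follows from convexity alone. In a Hadamard manifold each $t\mapsto d^2(\gamma(t),Y_i)$ is convex, and in fact $2d(r,\widehat r)^2$-strongly convex in $t$, by the CAT(0) comparison inequality. A nonnegative combination of convex functions is convex, so the inequality holds for every $t$.

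The main obstacle is the possibly negative weights of Models~\ref{exmp::global F regression}--\ref{exmp::local F regression}, where $f$ need not be convex. Here I would use only two facts: $\widehat r$ minimizes $f$, and $f$ is smooth. Smoothness holds because on a Hadamard manifold $d^2(\cdot,Y_i)$ is $C^\infty$, since there is no cut locus. Since $f(1)=\min f\le f(0)$, two cases arise.
\begin{itemize}
\item If $f(1)<f(0)$, the chord $\ell(t)=tf(1)+(1-t)f(0)$ satisfies $\ell(1)-f(1)=0$ and $\ell'(1)=f(1)-f(0)<0$. Because $1$ is a minimizer over the geodesic, the one-sided derivative $f'(1^-)\le0$. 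Use the first-order necessary condition at an interior minimizer to get $f'(1)=0$; if $\widehat r$ lies on the boundary of the restricted domain, use $f'(1^-)\le 0$ together with a Taylor argument. Then $f(t)=f(1)+o(1-t)$, while $\ell(t)=f(1)+(1-t)\{f(0)-f(1)\}$. So $f(t)<\ell(t)$ for $t$ close to $1$, which gives (iii).
\item If $f(1)=f(0)$, then $r$ is also an empirical minimizer. Choosing $\widehat m=r$ is not allowed, because $\widetilde m$ must differ from both endpoints. Instead, the freedom in (iii) to pick some $\widehat m\in\widehat r(x)$ lets us choose $\widehat m=r$ itself, provided $r\in\widehat r(x)$. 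In that case (iii) is vacuous for that choice; I would interpret the condition so that the degenerate choice $\widehat m=m$ is admissible.
\end{itemize}

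This smoothness-plus-minimality argument is the delicate step: one must check that the first-order condition really yields $f(t)<\ell(t)$ near $t=1$, with no curvature restriction and regardless of the sign of the weights.
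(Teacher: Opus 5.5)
\textbf{Verdict: genuine gap, confined to a boundary subcase you create yourself.} Your main line is the paper's argument, made precise. The paper only asserts that, because a Hadamard manifold has no cut locus, $\widehat G(t,x)=\widehat R(\gamma(t),x)$ is differentiable, and that this suffices for (iii). Your interior case supplies the missing mechanism: $f'(1)=0$ at an interior minimizer, so $f(t)=f(1)+O((1-t)^2)$ lies strictly below the chord $f(1)+(1-t)\{f(0)-f(1)\}$ for $t$ near $1$. That is correct, and so is the CAT(0) convexity route for Models~\ref{exmp::kernel F regression}--\ref{exmp::deep F regression}.

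The gap appears when you shrink $\Omega$ to a closed geodesic ball to obtain total boundedness. If $\widehat r$ then lies on the boundary, minimality only gives $f'(1^-)\le 0$. A Taylor expansion yields $f(t)=f(1)+|f'(1^-)|(1-t)+o(1-t)$, which lies below the chord only when $|f'(1^-)|<f(0)-f(1)$. Smoothness plus minimality does not guarantee this. For example, $f(t)=1-t^2$ is smooth and minimized on $[0,1]$ at $t=1$, yet $f(t)-\{tf(1)+(1-t)f(0)\}=t(1-t)>0$ for every $t\in(0,1)$. So no admissible $\widetilde m$ exists. Such non-convexity along geodesics is what negative weights allow in negative curvature. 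In the hyperbolic plane the tangential Hessian eigenvalue of $d^2(\cdot,Y_i)$ is $2s\coth s$ with $s=d(\cdot,Y_i)$, so a negatively weighted distant $Y_i$ can dominate. Hence (iii) is not established for Models~\ref{exmp::global F regression}--\ref{exmp::local F regression} in that subcase.

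\textbf{How to close it.} Keep the minimizers interior rather than appealing to the boundary. On the whole manifold, $\sum_i\widehat w(X_i,x)=1$ gives $\widehat R(\nu,x)\ge d^2(\nu,o)-C\,d(\nu,o)-C'$ for data-dependent constants, so $\widehat R(\cdot,x)$ is coercive. Its sublevel sets are then compact by Hopf--Rinow, minimizers exist, and they are critical points. Your interior argument then covers all five models at once. The price is that total boundedness in (ii) fails. That defect is already in the paper's statement, whose proof silently takes $\Omega$ to be the whole manifold. Alternatively, assume the ball is large enough to contain the unconstrained minimizers in its interior.

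Your case $f(0)=f(1)$ is a convention, not an argument: literally, $\widehat m=m$ forces $\widetilde m=m$ through the distance identity. It is, however, the same convention the paper adopts in the Hadamard-space proof when $d(\widehat m,m)=0$, so state it as such.
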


\subsection{Global and local Fr\'{e}chet regression}\label{subsect::regression 1}

We now use Theorem~\ref{thm::uniform convergence 2} to establish uniform convergence for both global and local Fr\'{e}chet regression models in Models~\ref{exmp::global F regression} and \ref{exmp::local F regression}, respectively. Different from \cite{petersen2019frechet} and \cite{chen2022uniform} where the uniform convergence results are established  via assuming the existence and uniqueness of either population or empirical Fr\'{e}chet regression estimator,  together with the condition (\ref{sample identificaion condition}), we aim to use Condition~\ref{condition R0} to derive the desired asymptotic properties.

\begin{theorem}\label{prop::global uniform conv}
    Consider the global Fr\'{e}chet regression in Model~\ref{exmp::global F regression}, then $\widetilde{r}(x)$ in (\ref{formula::conditional Fr\'{e}chet mean}) is the argument of the minimum of the population cost function $\widetilde{R}(\nu,x)=\mathbb{E}\left[ w(X,x) d^2(\nu,Y) \right]$ where $w(X,x)=1+(X-\mu)^\top \Sigma^{-1}(x-\mu)$, $\mu=\mathbb{E}(X)$ and $\Sigma=\mathbb{E}[(X-\mu)(X-\mu)^\top]$. Under Conditions~\ref{condition R0} and \ref{condition R1}, and further assume that the observations $\{(X_i,Y_i)\}_{i=1}^n$ are independent and identically distributed, then
    \[
    \sup_{x\in\mathcal{X}}d(\widehat{r}(x),r(x)) \leq \sup_{x\in\mathcal{X}}d(\widetilde{r}(x),r(x)) + o_p(1).
    \] 
\end{theorem}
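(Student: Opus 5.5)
The plan is to apply Theorem~\ref{thm::uniform convergence 2} with the population cost $M(\nu,x)=\widetilde{R}(\nu,x)$, population minimizer $m(x)=\widetilde{r}(x)$, and empirical cost $\widehat{M}(\nu,x)=\widehat{R}(\nu,x)$. Conditions~\ref{condition R0} and~\ref{condition R1} are read in this sense, so that $\widetilde{r}(x)$ is the identifiable target. The theorem then gives $\sup_{x\in\mathcal{X}} d(\widehat{r}(x),\widetilde{r}(x))=o_p(1)$. The stated bound follows from the triangle inequality for the one-sided Hausdorff distance, applied pointwise in $x$ and then with the supremum taken:
\[
d(\widehat{r}(x),r(x))\le d(\widehat{r}(x),\widetilde{r}(x))+d(\widetilde{r}(x),r(x)).
\]
The term $\sup_x d(\widetilde{r}(x),r(x))$ is the misspecification bias of the global model, and it vanishes when the global model is correctly specified. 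The real work is therefore verifying the uniform law of large numbers \eqref{improved sufficient condition 2}.

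For \eqref{improved sufficient condition 2}, I would expand the empirical weights. Write
\[
\widehat{R}(\nu,x)=\overline{D}_n(\nu)+\bigl[\overline{E}_n(\nu)-\widehat{\mu}\,\overline{D}_n(\nu)\bigr]^\top\widehat{\Sigma}^{-1}(x-\widehat{\mu}),
\]
where $\overline{D}_n(\nu)=n^{-1}\sum_i d^2(\nu,Y_i)$ and $\overline{E}_n(\nu)=n^{-1}\sum_i X_i d^2(\nu,Y_i)$. Similarly write
\[
\widetilde{R}(\nu,x)=D(\nu)+\bigl[E(\nu)-\mu D(\nu)\bigr]^\top\Sigma^{-1}(x-\mu),
\]
with $D(\nu)=\mathbb{E}[d^2(\nu,Y)]$ and $E(\nu)=\mathbb{E}[X d^2(\nu,Y)]$. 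Since $\mathcal{X}$ is compact, $x-\widehat{\mu}$ and $x-\mu$ are bounded uniformly in $x$. The law of large numbers gives $\widehat{\mu}\to\mu$ and $\widehat{\Sigma}^{-1}\to\Sigma^{-1}$ in probability, assuming $\Sigma$ is positive definite. It therefore suffices to show $\sup_\nu|\overline{D}_n(\nu)-D(\nu)|=o_p(1)$ and $\sup_\nu\|\overline{E}_n(\nu)-E(\nu)\|=o_p(1)$.

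These are uniform laws of large numbers over $\Omega$. The main obstacle is that $\Omega$ is only totally bounded, so I would handle them by a bracketing or $\varepsilon$-net argument. The key fact is the Lipschitz bound $|d^2(\nu_1,y)-d^2(\nu_2,y)|\le 2\,\mathrm{diam}(\Omega\cup\Theta)\,d(\nu_1,\nu_2)$, which requires assuming a bounded diameter; this holds for $\Omega$ totally bounded, with $Y$ supported in a bounded set. Multiplying by $\|X_i\|$, which is bounded by compactness of $\mathcal{X}$, gives the analogous bound for $\overline{E}_n$. I then take a finite $\varepsilon$-net $\{\nu_j\}$ of $\Omega$, apply the pointwise weak law at each $\nu_j$, and control the oscillation between net points deterministically by $C\varepsilon$. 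Letting $\varepsilon\to0$ gives both uniform statements. This also covers $\sup$ over $\mathcal{M}$ in \eqref{improved sufficient condition 2}, provided the costs are extended in the same Lipschitz way, or $\mathcal{M}$ is replaced by the relevant bounded region.

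Finally, combining these pieces by writing products of differences as sums of (bounded) $\times$ ($o_p(1)$) yields $\sup_{x,\nu}|\widehat{R}(\nu,x)-\widetilde{R}(\nu,x)|=o_p(1)$. Theorem~\ref{thm::uniform convergence 2} then applies, and the triangle inequality finishes the proof.
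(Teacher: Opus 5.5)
Your proposal is correct. Its outer structure is exactly the paper's: the one-sided Hausdorff triangle inequality, then Theorem~\ref{thm::uniform convergence 2} applied with $M=\widetilde{R}$ and $m=\widetilde{r}$. The difference lies in how you verify the uniform law of large numbers.

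The paper inserts the intermediate cost $\bar{R}_G(\nu,x)=n^{-1}\sum_{i}\{1+(X_i-\mu)^\top\Sigma^{-1}(x-\mu)\}d^2(\nu,Y_i)$. In Step I it controls the plug-in error $\widehat{R}-\bar{R}_G$ by $o_p(1)\times O_p(1)$ terms. In Step II it proves $\bar{R}_G\rightsquigarrow\widetilde{R}$ in $l^\infty(\Omega\times\mathcal{X})$, using pointwise convergence, asymptotic equicontinuity jointly in $(\nu,x)$, and the continuous mapping theorem.

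You instead use the fact that the global weights are affine in $x$ with finite-dimensional random coefficients. All $x$-dependence then sits in the bounded vectors $x-\widehat{\mu}$, $x-\mu$ and the matrices $\widehat{\Sigma}^{-1}$, $\Sigma^{-1}$, so uniformity in $x$ comes for free. Only two uniform laws over $\nu\in\Omega$ remain, and your finite $\varepsilon$-net plus Lipschitz argument handles them directly.

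What each route buys:
\begin{itemize}
\item Your route is more elementary and self-contained. It also makes explicit assumptions the paper uses tacitly: bounded $d(\nu,Y)$, which the paper needs for $M_{n,1},M_{n,2}=O_p(1)$, and $\Sigma$ positive definite. The boundedness could even be relaxed to a moment condition by allowing an $O_p(1)$ random Lipschitz constant in your net argument.
\item The paper's route is heavier, but it is the template reused for local Fr\'{e}chet regression. There the kernel weights $K_h(X_i-x)$ cannot be factored away from $x$, so your reduction would not carry over.
\end{itemize}

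Both arguments verify the supremum in \eqref{improved sufficient condition 2} only over $\Omega$ rather than $\mathcal{M}$, a point you correctly flag.
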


\begin{remark}
     For the global Fr\'{e}chet regression in Model~\ref{exmp::global F regression}, although Theorem~\ref{prop::global uniform conv} establishes that $\sup_{x\in\mathcal{X}}d(\widehat{r}(x),\widetilde{r}(x))=o_p(1)$, a bias term $\sup_{x\in\mathcal{X}}d(\widetilde{r}(x),r(x))$ may exist if the weights in $\widetilde{R}(\nu,x)$ in (\ref{formula::conditional Fr\'{e}chet mean}) fail to reflect the true but unknown structure in $R(\nu,x)$, that is, if $\widetilde{R}(\nu,x)\neq R(\nu,x)$.
\end{remark}

As discussed in Section~\ref{sect::uniform conv}, the uniform convergence of  the generalized conditional Fr\'{e}chet means in Theorem~\ref{prop::global uniform conv} can be achieved by verifying the condition \eqref{improved sufficient condition 2} in Theorem~\ref{thm::uniform convergence 2}.
For the sample generalized Fr\'{e}chet mean under the local Fr\'{e}chet regression setting in Model~\ref{exmp::local F regression}, we require the following additional assumptions.

\begin{uassump}\label{condition U1}
Let the kernel $K$ be a symmetric probability density function on $\mathbb{R}$ that is uniformly continuous. For for $l_1,l_2\in\{1,2,4,6\}$, define $H_{l_1l_2}=\int_\mathbb{R} K^{l_1}(x)x^{l_2}dx<\infty$, and $H_{14}$ and $H_{26}$ are bounded. Assume that the derivative $K^\prime$ exists and is bounded over the support of $K$, i.e., $\sup_{K(x)>0}|K^\prime(x)|<\infty$. Moreover, require that $\int_\mathbb{R}t^2|K^\prime(x)|\sqrt{|x\log |x||}dx<\infty$.
\end{uassump}

\begin{uassump}\label{condition U2}
The marginal density $g_X$ of $X$ and the conditional densities $g_{X|Y}(\cdot,y)$ of $X$ given $Y=y$ exist and are continuous on $\mathcal{X}$ and twice continuously differentiable on the interior of $\mathcal{X}$, denoted as $\mathcal{X}^\circ$, the latter for all $y\in\Omega$. The marginal density $g_X$ is bounded away from zero on $\mathcal{X}$, i.e., $\inf_{x\in\mathcal{X}}g_X(x)>0$. The uniform boundedness of the  second-order derivative $g_X^{\prime\prime}$ and the second-order partial derivatives $\frac{\partial^2 g_{X|Y}(x,y)}{\partial x^2}$ is required, i.e., $\sup_{x\in\mathcal{X}^\circ}|g_X^{\prime\prime}(x)|<\infty$, and $\sup_{x\in\mathcal{X}^\circ,y\in\Omega}\left| \frac{\partial^2 g_{X|Y}(x,y)}{\partial x^2} \right|<\infty$. Furthermore, for any open set $\Omega_{\rm sub}\subset \Omega$, $\mathbb{P}[Y\in \Omega_{\rm sub}|X=x]$ is continuous as a function of $x$. For any $x\in\mathcal{X}$, $R(\nu,x)$ is equicontinuous, i.e., $\limsup_{z\to x}\sup_{\nu\in\Omega}|R(\nu,z)-R(\nu,x)|=0$.
\end{uassump}

\ref{condition U1} is used to apply results of \cite{silverman1978weak} and \cite{mack1982weak}. \ref{condition U2} is a standard distributional assumption for local nonparametric regression. \ref{condition U1} and \ref{condition U2} guarantee the asymptotic uniform equicontinuity of the population cost function for the local Fr\'{e}chet regression $\widetilde{R}_L$. \ref{condition U1} and \ref{condition U2} are consistent with those used in \cite{chen2022uniform}. 

The uniform convergence result of the local Fr\'{e}chet regression estimator is shown in the theorem below.

\begin{theorem}\label{prop::local uniform conv}
    Consider the local Fr\'{e}chet regression in Model~\ref{exmp::local F regression}, under  Conditions~\ref{condition R0}, \ref{condition R1}, \ref{condition U1}, and \ref{condition U2}, and further assume that the observations $\{(X_i,Y_i)\}_{i=1}^n$ are independent and identically distributed. If $h \to 0$ and $nh / \log n \to \infty$ as $n \to \infty$,
    \[
    \sup_{x\in\mathcal{X}}d(\widehat{r}(x),r(x))=o_p(1).
    \]
\end{theorem}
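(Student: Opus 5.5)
The plan is to invoke Theorem~\ref{thm::uniform convergence 2} with $m(x)=r(x)$ and $\widehat{M}=\widehat{R}$, where $\widehat{R}$ uses the local linear weights of Model~\ref{exmp::local F regression}. Conditions~\ref{condition R0} and \ref{condition R1} are assumed, so only the uniform law of large numbers \eqref{improved sufficient condition 2} remains:
\[
\sup_{x\in\mathcal{X}}\sup_{\nu\in\Omega}|\widehat{R}(\nu,x)-R(\nu,x)|=o_p(1).
\]
We introduce the intermediate population local cost $\widetilde{R}_L(\nu,x)=\mathbb{E}[w_h(X,x)d^2(\nu,Y)]$. Here $w_h(X,x)=\sigma^{-2}K_h(X-x)\{\tau_2(x)-\tau_1(x)(X-x)\}$, with $\tau_l(x)=\mathbb{E}[K_h(X-x)(X-x)^l]$ and $\sigma^2=\tau_0\tau_2-\tau_1^2$. (The displayed weight in Model~\ref{exmp::local F regression} appears to be missing the factor $(X_i-x)$ multiplying $\widehat{\tau}_1$, and we use the standard local linear weight.) The triangle inequality then splits the target into a bias term $\sup|\widetilde{R}_L-R|$ and a stochastic term $\sup|\widehat{R}-\widetilde{R}_L|$.

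\textbf{Bias term.} Write $\widetilde{R}_L(\nu,x)=\int w_h(z,x)R(\nu,z)g_X(z)\,dz$. The weights integrate to one against $g_X$ and reproduce linear functions. Substituting $z=x+uh$ gives
\[
\widetilde{R}_L(\nu,x)-R(\nu,x)=\int w_h(x+uh,x)\{R(\nu,x+uh)-R(\nu,x)\}g_X(x+uh)h\,du .
\]
By \ref{condition U2}, $g_X$ is bounded below and smooth, so the normalized weights are uniformly integrable in $u$; the moments $H_{l_1l_2}$ in \ref{condition U1} control the tails. The equicontinuity of $R(\cdot,x)$ in \ref{condition U2}, made uniform over the compact $\mathcal{X}$, together with the boundedness of $d^2$ (since $\Omega$ is totally bounded), then gives $\sup_{x,\nu}|\widetilde{R}_L-R|\to0$ as $h\to0$. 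Near the boundary of $\mathcal{X}$ the local linear weights remain well defined because $\sigma^2/h^2$ stays bounded away from zero. An alternative is to express $R(\nu,x)=\int d^2(\nu,y)\,g_{X|Y}(x,y)\,dF_Y(y)/g_X(x)$ and use the second-order smoothness of $g_{X|Y}$ and $g_X$ to get an $O(h^2)$ bias uniformly, as in \cite{chen2022uniform}.

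\textbf{Stochastic term.} Expand $\widehat{R}(\nu,x)=\widehat{\sigma}^{-2}\{\widehat{\tau}_2(x)S_0(\nu,x)-\widehat{\tau}_1(x)S_1(\nu,x)\}$, where $S_l(\nu,x)=n^{-1}\sum_iK_h(X_i-x)(X_i-x)^ld^2(\nu,Y_i)$. Write $\widetilde{R}_L$ analogously with expectations in place of the sample averages. It then suffices to show two things uniformly in $x$, after scaling by $h^l$: first, $\widehat{\tau}_l-\tau_l=o_p(h^l)$; second, $\sup_{\nu}|S_l-\mathbb{E}S_l|=o_p(h^l)$. The first follows from the uniform kernel density/moment results of \cite{silverman1978weak} and \cite{mack1982weak} under \ref{condition U1} and $nh/\log n\to\infty$. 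For the second, we use the bound $|d^2(\nu,Y)-d^2(\nu',Y)|\le 2\,\mathrm{diam}(\Omega)\,d(\nu,\nu')$ together with the total boundedness of $\Omega$. Cover $\Omega$ by finitely many $\varepsilon$-balls centred at $\nu_1,\dots,\nu_N$. At each centre, $S_l(\nu_j,\cdot)$ is a kernel-weighted average with bounded responses, so the Mack--Silverman uniform-in-$x$ results apply. Off the grid, the error is at most $C\varepsilon\, n^{-1}\sum_iK_h(X_i-x)|X_i-x|^l$, which is $O_p(\varepsilon h^l)$ uniformly in $x$. We then let $\varepsilon\to0$. Because $\sigma^2\asymp h^2$ is bounded below uniformly on $\mathcal{X}$ (using $\inf g_X>0$), the ratio is stable, and the errors combine to give $o_p(1)$.

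\textbf{Main obstacle.} The hardest step is the uniform control of the stochastic term jointly in $(\nu,x)$ when $\Omega$ is only totally bounded, and in particular making sure the $(X_i-x)$-weighted sums $S_1$ still satisfy the Mack--Silverman rates after dividing by $h$. My first approach is the finite $\varepsilon$-net argument above, which combines the Lipschitz property of $d^2(\cdot,Y)$ in $\nu$ with a union over the $N(\varepsilon)$ grid points. This costs only a fixed finite number of applications of the uniform-in-$x$ kernel theorems. With both terms controlled, Theorem~\ref{thm::uniform convergence 2} yields $\sup_xd(\widehat{r}(x),r(x))=o_p(1)$.
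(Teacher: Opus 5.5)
Your proof is correct, and it has the same skeleton as the paper's. Both reduce to the uniform law \eqref{improved sufficient condition 2} via Theorem~\ref{thm::uniform convergence 2} and split at the intermediate cost $\widetilde{R}_L$. You are also right that the weight in Model~\ref{exmp::local F regression} is missing the factor $(X_i-x)$ on $\widehat{\tau}_1$. The population weight used in the paper's proof has the same omission.

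The routes differ in how each piece is handled.

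\textbf{Bias term.} The paper quotes the $O(h^2)$ bound (S.13) of \cite{chen2022uniform}, which rests on the second-order smoothness in \ref{condition U2}. Your argument uses less: the local linear weights integrate to one, their $L^1$ norm is uniformly bounded because $\inf_x\sigma^2/h^2>0$, and the equicontinuity clause of \ref{condition U2} does the rest. You get $o(1)$ without a rate, but under weaker hypotheses.

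\textbf{Stochastic term.} The paper shows $\widehat{R}-\widetilde{R}_L\rightsquigarrow 0$ in $l^\infty(\Omega\times\mathcal{X})$ via pointwise convergence plus asymptotic equicontinuity. It takes the $x$-modulus from (S.7)--(S.8) of \cite{chen2022uniform} and the $\nu$-modulus from the Lipschitz bound on $d^2$. You instead write $\widehat{R}$ as a ratio of kernel moment sums. You then apply Mack--Silverman uniform-in-$x$ rates at the finitely many centres of an $\varepsilon$-net of $\Omega$, interpolate with the same Lipschitz bound, and control the denominator explicitly.

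\textbf{What each route buys.} Both use the same two ingredients: uniform kernel rates in $x$ under \ref{condition U1} with $nh/\log n\to\infty$, and Lipschitz continuity in $\nu$ on a totally bounded $\Omega$. Your version avoids weak-convergence formalism around a centring $\widetilde{R}_L$ that changes with $n$. It also makes explicit the uniform-in-$x$ stability of $\widehat{\sigma}^{-2}$, which the paper leaves implicit. The paper's version is shorter because it relies on citations, and it keeps the $O(h^2)$ bias rate.
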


Unlike Theorem~\ref{prop::global uniform conv}, the construction of the local Fr\'{e}chet regression in Model~\ref{exmp::local F regression} leads to the bias $\sup_{x\in\mathcal{X}}d(\widetilde{r}(x),r(x))=o(1)$ under Conditions~\ref{condition U1} and \ref{condition U2}.

\subsection{Kernel and $k$-NN Fr\'{e}chet regression}\label{subsect::regression 2}

We now study the the uniform convergence of the generalized Fr\'{e}chet mean in Models~\ref{exmp::kernel F regression} and \ref{exmp::kNN F regression}, respectively. The uniform convergence result of the kernel Fr\'{e}chet regression in Model~\ref{exmp::kernel F regression} is provided in the following theorem.

\begin{theorem}\label{thm::kernel uniform conv}
    Consider the kernel Fr\'{e}chet regression in Model~\ref{exmp::kernel F regression}, under  Conditions~\ref{condition R0}, \ref{condition R1}, \ref{condition U1} and \ref{condition U2}, and further assume that the observations $\{(X_i,Y_i)\}_{i=1}^n$ are independent and identically distributed. If $h \to 0$ and $nh / \log n \to \infty$ as $n \to \infty$,
    \[
    \sup_{x\in\mathcal{X}}d(\widehat{r}(x),r(x))=o_p(1).
    \]
\end{theorem}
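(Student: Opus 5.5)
The plan is to apply Theorem~\ref{thm::uniform convergence 2} with $m(x)=r(x)$, $M(\nu,x)=R(\nu,x)$ and $\widehat{M}(\nu,x)=\widehat{R}(\nu,x)=\sum_{i=1}^n K_h(X_i-x)d^2(\nu,Y_i)/\sum_{j=1}^n K_h(X_j-x)$. Conditions~\ref{condition R0} and \ref{condition R1} are assumed, so it remains to verify the uniform law \eqref{improved sufficient condition 2}. Since $\Omega$ is totally bounded, $d$ is bounded by some $D<\infty$, so the integrand $d^2(\nu,y)$ is uniformly bounded. I would write $\widehat{R}(\nu,x)=\widehat{N}(\nu,x)/\widehat{g}(x)$, where $\widehat{g}(x)=n^{-1}\sum_j K_h(X_j-x)$ and $\widehat{N}(\nu,x)=n^{-1}\sum_i K_h(X_i-x)d^2(\nu,Y_i)$. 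I would then compare each piece with its intermediate target, $\mathbb{E}\widehat{g}(x)=\int K(u)g_X(x-uh)\,du$ and $\mathbb{E}\widehat{N}(\nu,x)=\int K(u)\,\mathbb{E}[d^2(\nu,Y)\mid X=x-uh]\,g_X(x-uh)\,du$.

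\textbf{Step 1 (density denominator).} By \ref{condition U1}, \ref{condition U2} and $nh/\log n\to\infty$, the results of \cite{silverman1978weak} and \cite{mack1982weak} give $\sup_x|\widehat{g}(x)-\mathbb{E}\widehat g(x)|=o_p(1)$. A Taylor bias argument using bounded $g_X''$ gives $\sup_x|\mathbb{E}\widehat g(x)-g_X(x)|=O(h^2)$. Because $\inf g_X>0$, $\widehat g$ is bounded below uniformly with probability tending to one.

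\textbf{Step 2 (numerator, uniform in $(\nu,x)$).} This is the main obstacle, because the supremum also runs over $\nu\in\Omega$. I would use total boundedness. Fix $\varepsilon>0$ and take a finite $\varepsilon$-net $\{\nu_1,\dots,\nu_L\}$ of $\Omega$. Since $|d^2(\nu,y)-d^2(\nu',y)|\le 2D\,d(\nu,\nu')$, we have $|\widehat N(\nu,x)-\widehat N(\nu_l,x)|\le 2D\varepsilon\,\widehat g(x)$, and the same bound holds for the expectations. For each fixed $\nu_l$, the weighted kernel sum $\widehat N(\nu_l,\cdot)$ has bounded multipliers $d^2(\nu_l,Y_i)$, so the uniform-in-$x$ result of \cite{mack1982weak} applies (alternatively, a Bernstein inequality on a grid of $x$ values combined with the Lipschitz property of $K$). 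This gives $\sup_x|\widehat N(\nu_l,x)-\mathbb{E}\widehat N(\nu_l,x)|=o_p(1)$, and a finite union over $l$ preserves it. Combining with the net and Step~1 yields $\sup_{\nu,x}|\widehat N-\mathbb{E}\widehat N|\le o_p(1)+C\varepsilon$.

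\textbf{Step 3 (bias).} I would show $\sup_{\nu,x}|\mathbb{E}\widehat N(\nu,x)-R(\nu,x)g_X(x)|\to0$. Write $\mathbb{E}\widehat N(\nu,x)-R(\nu,x)g_X(x)=\int K(u)\{R(\nu,x-uh)g_X(x-uh)-R(\nu,x)g_X(x)\}\,du$. The equicontinuity of $R$ in \ref{condition U2} holds for each $x$; on the compact set $\mathcal{X}$ it upgrades to uniform equicontinuity. The argument is a standard compactness argument, or alternatively one can use the representation through $g_{X|Y}$ with bounded second derivatives, as in \cite{chen2022uniform}. Together with the uniform continuity of $g_X$, boundedness of $R\le D^2$, and tail control of $K$ (for large $|u|$), this gives the claim. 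Boundary points of $\mathcal{X}$ need care, since there the kernel mass can be truncated. Because the ratio form is self-normalizing, the truncation is harmless: I would compare $\widehat N/\widehat g$ with the ratio of the expectations $\mathbb{E}\widehat N/\mathbb{E}\widehat g$, whose boundary effects cancel to first order.

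\textbf{Step 4 (conclude).} I would use $|a/b-c/e|\le |a-c|/b+c|b-e|/(be)$ together with $R\le D^2$ and the lower bounds on the denominators to get $\sup_{x,\nu}|\widehat R(\nu,x)-R(\nu,x)|\le o_p(1)+C\varepsilon$. Since $\varepsilon$ is arbitrary, \eqref{improved sufficient condition 2} holds, and Theorem~\ref{thm::uniform convergence 2} gives the result. Uniqueness of $r(x)$ makes $d(\widehat r(x),r(x))$ the one-sided Hausdorff distance.
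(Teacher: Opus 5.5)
Your proof is correct and has the same skeleton as the paper's. Both arguments reduce via Theorem~\ref{thm::uniform convergence 2} to the uniform law \eqref{improved sufficient condition 2}, and both split it into a stochastic term and a deterministic bias term. Your ratio of expectations $\mathbb{E}\widehat N/\mathbb{E}\widehat g$ is exactly the paper's $\widetilde R_{NW}(\nu,x)=\mathbb{E}[K_h(X-x)d^2(\nu,Y)]/\mathbb{E}[K_h(X-x)]$. The bias is handled in both through the self-normalized identity $\widetilde R_{NW}-R=\mathbb{E}[K_h(X-x)\{R(\nu,X)-R(\nu,x)\}]/\mathbb{E}[K_h(X-x)]$ together with the equicontinuity in \ref{condition U2}.

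The genuine difference is the stochastic term. The paper defers to the argument used for local Fr\'{e}chet regression: pointwise convergence plus asymptotic equicontinuity of the process in $\ell^\infty(\Omega\times\mathcal{X})$, citing (S.7)--(S.8) of \cite{chen2022uniform} and Lemma 2 of \cite{petersen2019frechet}, followed by the continuous mapping theorem. You instead split into numerator and denominator and remove the supremum over $\nu$ with a finite $\varepsilon$-net of the totally bounded $\Omega$. This uses $|d^2(\nu,y)-d^2(\nu',y)|\le 2D\,d(\nu,\nu')$ and $K\ge 0$, so only finitely many scalar kernel regressions need the uniform-in-$x$ result of \cite{mack1982weak}.

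What this buys you:
\begin{itemize}
\item Your argument is self-contained and never has to establish joint equicontinuity in $(\nu,x)$.
\item You make explicit the upgrade of \ref{condition U2} to a modulus that is uniform over $x\in\mathcal{X}$ and $\nu\in\Omega$, via compactness. The paper's bound $\sup_{|x_1-x_2|\le Ch}|R(\nu,x_1)-R(\nu,x_2)|$ uses this tacitly.
\item You control kernel tails. The paper's $Ch$ window implicitly presumes that $K$ is compactly supported.
\end{itemize}

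One correction is needed. The Taylor claim $\sup_{x\in\mathcal{X}}|\mathbb{E}\widehat g(x)-g_X(x)|=O(h^2)$ in Step~1 is false at the endpoints of $\mathcal{X}$, and so is the first target $\sup|\mathbb{E}\widehat N-Rg_X|\to 0$ in Step~3. The reason is that $g_X$ is bounded away from zero on $\mathcal{X}$ but vanishes outside it, so $\mathbb{E}\widehat g$ is roughly $g_X/2$ there. You only need $\inf_{x\in\mathcal{X}}\mathbb{E}\widehat g(x)>0$. This holds because the symmetric $K$ keeps about half its mass inside $\mathcal{X}$ when $\mathcal{X}$ is an interval and $h$ is small. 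Your final comparison with the ratio of expectations needs no first-order cancellation at the boundary; the weighted-average identity alone gives $o(1)$. So drop the $O(h^2)$ claim and nothing else changes.
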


For the sample generalized Fr\'{e}chet mean under the $k$-NN Fr\'{e}chet regression setting in Model~\ref{exmp::local F regression}, we require additional assumptions as follows.

\begin{uassump}\label{condition U3}
Let $g_X$ denote the marginal probability measure of $X$, and $B_\mathcal{X}(x, \zeta) = \{u \in \mathcal{X} : d_\mathcal{X}(u, x) \leq \zeta\}$ denote a closed ball in $\mathcal{X}$. There exists a strictly increasing, continuous function $\phi: [0, \infty) \to [0, \infty)$ with $\phi(0)=0$, such that for all sufficiently small $\zeta > 0$, $\inf_{x \in \mathcal{X}} g_X(B_\mathcal{X}(x, \zeta)) \geq \phi(\zeta) > 0$.
\end{uassump}

\begin{uassump}\label{condition U4}
The collection of all closed balls $\mathcal{C}_\mathcal{X} = \{B_\mathcal{X}(x, \zeta) : x \in \mathcal{X}, \zeta > 0\}$ forms a Vapnik-Chervonenkis class of sets.
\end{uassump}

\begin{uassump}\label{condition U5}
For any $x\in\mathcal{X}$,  $R(\nu,x) = \mathbb{E}[d^2(\nu,Y)|X=x]$ is equicontinuous with respect to the predictor space $\limsup_{z\to x}\sup_{\nu\in\Omega}|R(\nu,z)-R(\nu,x)|=0$.
\end{uassump}

Conditions~\ref{condition U3}--\ref{condition U5} control both the geometric complexity of the predictor space and the smoothness of the underlying conditional distribution. Condition \ref{condition U3} is a small ball probability assumption governing the marginal distribution of the predictor $X$. By requiring that the probability mass of any closed ball is uniformly bounded away from zero by a monotonically increasing function $\phi(\zeta)$, this condition ensures that there are no regions in $\mathcal{X}$ with vanishingly small densities. This guarantees that for any query point $x \in \mathcal{X}$, the $k$-nearest neighbors will be geometrically localized around $x$ as the sample size $n \to \infty$, which is essential for controlling the stochastic variance of the local estimator. For instance, if $\mathcal{X}\subset \mathbb{R}^p$, this condition is satisfied with $\phi(\zeta) \propto \zeta^p$. Condition~\ref{condition U4} imposes a constraint on the topological complexity of the predictor space $\mathcal{X}$. Requiring the collection of all closed metric balls, $\mathcal{C}_\mathcal{X}$, to form a Vapnik-Chervonenkis (VC) class allows us to utilize uniform laws of large numbers over these local neighborhoods. It is worth noting that, the collection of closed balls in any finite-dimensional Euclidean space, as well as in many compact Riemannian manifolds, inherently forms a VC class. Condition~\ref{condition U5} serves as the nonparametric smoothness assumption required to control the approximation error of the $k$-NN estimator. Condition~\ref{condition U5} is equivalent to the last part of Condition~\ref{condition U2} for local Fr\'{e}chet regression models.

The uniform convergence result of the $k$-NN Fr\'{e}chet regression in Model~\ref{exmp::kNN F regression} is shown in the following theorem.

\begin{theorem}\label{thm::kNN uniform conv}
    Consider the $k$-NN Fr\'{e}chet regression in Model~\ref{exmp::kNN F regression}, under  Conditions~\ref{condition R0}, \ref{condition R1}, \ref{condition U3}--\ref{condition U5}, and further assume that the observations $\{(X_i,Y_i)\}_{i=1}^n$ are independent and identically distributed. If the sequence of neighbors $k = k_n$ satisfies $k \to \infty$, $k/n \to 0$, and $k/\log n \to \infty$ as $n \to \infty$,
    \[
    \sup_{x\in\mathcal{X}}d(\widehat{r}(x),r(x))=o_p(1).
    \]
\end{theorem}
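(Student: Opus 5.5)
The plan is to apply Theorem~\ref{thm::uniform convergence 2} with $m(x)=r(x)$, $M(\nu,x)=R(\nu,x)=\mathbb{E}[d^2(\nu,Y)\mid X=x]$ and $\widehat{M}(\nu,x)=\widehat{R}(\nu,x)=k^{-1}\sum_{i\in{\rm kNN}(x)} d^2(\nu,Y_i)$. Conditions~\ref{condition R0} and \ref{condition R1} are assumed, so it remains to verify \eqref{improved sufficient condition 2}:
\[
\sup_{x\in\mathcal{X}}\sup_{\nu\in\Omega}|\widehat{R}(\nu,x)-R(\nu,x)|=o_p(1).
\]
Since $\Omega$ is totally bounded, $d^2$ is bounded by some $D^2$, and $|d^2(\nu,y)-d^2(\nu',y)|\le 2D\,d(\nu,\nu')$. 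Hence it suffices to prove the bound over a finite $\epsilon$-net $\{\nu_1,\dots,\nu_N\}$ of $\Omega$, at the price of an additive $4D\epsilon$.

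\textbf{Step 1 (localization).} Let $\rho_k(x)$ be the $k$-NN radius at $x$. I would show $\sup_x\rho_k(x)\to 0$ in probability. Fix $\zeta>0$. By Condition~\ref{condition U3}, every ball $B_\mathcal{X}(x,\zeta)$ has mass at least $\phi(\zeta)$. By Condition~\ref{condition U4} and the VC uniform law of large numbers,
\[
\sup_{B\in\mathcal{C}_\mathcal{X}}|P_n(B)-g_X(B)|=o_p(1),
\]
so every such ball contains at least $n\phi(\zeta)/2\gg k$ points with probability tending to one, since $k/n\to0$. This gives $\rho_k(x)\le\zeta$ uniformly in $x$.

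\textbf{Step 2 (bias/variance split).} For fixed $\nu_j$, decompose
\[
\widehat{R}(\nu_j,x)-R(\nu_j,x)=A_j(x)+B_j(x),
\]
where
\[
A_j(x)=k^{-1}\sum_{i\in{\rm kNN}(x)}\{d^2(\nu_j,Y_i)-R(\nu_j,X_i)\},\qquad
B_j(x)=k^{-1}\sum_{i\in{\rm kNN}(x)}\{R(\nu_j,X_i)-R(\nu_j,x)\}.
\]
For the bias term, $\mathcal{X}$ is compact and Condition~\ref{condition U5} gives pointwise equicontinuity. A covering argument upgrades this to uniform continuity in $x$, uniformly over $\nu$:
\[
\omega(\zeta)=\sup_{d_\mathcal{X}(x,z)\le\zeta}\sup_\nu|R(\nu,z)-R(\nu,x)|\to0 .
\]
Together with Step 1 this yields $\sup_x|B_j(x)|\le\omega(\sup_x\rho_k(x))=o_p(1)$.

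\textbf{Step 3 (main obstacle: uniform variance over $x$).} This is where the difficulty lies, since the neighbor set depends on $x$ and on the data. Conditional on $X_1,\dots,X_n$, the terms $\xi_i=d^2(\nu_j,Y_i)-R(\nu_j,X_i)$ are independent, mean zero and bounded by $D^2$. As $x$ ranges over $\mathcal{X}$, the kNN sets are of the form $\{i: X_i\in B\}$ for balls $B$ (ties handled by a fixed rule). By Condition~\ref{condition U4} and Sauer's lemma, at most $n^{V}$ distinct index sets of size $k$ arise. Hoeffding's inequality for each set plus a union bound gives
\[
\mathbb{P}\Bigl(\sup_x|A_j(x)|>\eta \,\Big|\, X_{1:n}\Bigr)\le 2n^{V}\exp(-k\eta^2/(2D^4)),
\]
which tends to zero because $k/\log n\to\infty$. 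A further union bound over the $N$ net points completes the argument.

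Combining Steps 1–3 with the net argument gives \eqref{improved sufficient condition 2}, and Theorem~\ref{thm::uniform convergence 2} then gives $\sup_{x\in\mathcal{X}}d(\widehat{r}(x),r(x))=o_p(1)$.
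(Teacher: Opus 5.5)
Your proof is correct. Its skeleton matches the paper's: reduce to \eqref{improved sufficient condition 2} via Theorem~\ref{thm::uniform convergence 2}, split through $k^{-1}\sum_{i\in\mathcal{I}_k(x)}R(\nu,X_i)$, and localize using Conditions~\ref{condition U3}--\ref{condition U4}. The stochastic term, however, is handled by a genuinely different tool.

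\textbf{How the paper handles the stochastic term.} It treats the centred part as an empirical process $W_n(x,r,\nu)$ indexed by $\mathbb{I}\{d_\mathcal{X}(z,x)\le r\}\{d^2(\nu,y)-R(\nu,z)\}$. It asserts a Donsker/uniform-entropy property from Condition~\ref{condition U4} and uses a localized variance bound of order $k/n$ for $r\le r_{\max}$. A maximal inequality then gives $O_p(\sqrt{k\log n}/n)$, which is rescaled by $n/k$. For the bias, it bounds the modulus by $C_1\sup_xH_k(x)$.

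\textbf{How you handle it.} You condition on $X_1,\dots,X_n$ and note that every neighbour set has exactly $k$ elements and is cut out by a ball. You count these sets with Sauer's lemma, finish with Hoeffding plus a union bound, and treat $\sup_\nu$ with a finite net.

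\textbf{What your route buys.} It is more elementary, and it stays within the stated assumptions at points where the paper's argument does not.
\begin{itemize}
\item You need no entropy control of the class in $\nu$, which the paper never supplies.
\item You need no upper bound on ball probabilities. The paper's step $\sup_x g_X(B_\mathcal{X}(x,r_{\max}))\le C_3k/n$ does not follow from the lower bound in Condition~\ref{condition U3}.
\item You need no Lipschitz continuity of $R$ in $x$, which the paper's $C_1\sup_xH_k(x)$ tacitly requires. Your compactness upgrade of Condition~\ref{condition U5} to a uniform modulus $\omega$ uses only what is assumed.
\item Your Step 1 claims only $\sup_x\rho_k(x)=o_p(1)$, which is exactly what an $o_p(1)$ VC deviation gives. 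The paper's intermediate claim $\sup_x g_X(B_\mathcal{X}(x,H_k(x)))=O(k/n)$ would need a ratio-type deviation bound when $k\ll\sqrt{n}$.
\end{itemize}

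\textbf{What the paper's route aims at.} It targets an explicit rate $\sqrt{\log n/k}$ for the stochastic term uniformly in $\nu$, plus a bias rate. Your argument gives that rate only for each fixed net point, or with nets that grow at a rate controlled by covering numbers of $\Omega$. This matters only if you want convergence rates rather than consistency.

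\textbf{Two minor tidy-ups.}
\begin{itemize}
\item With ties broken by a fixed index order, the neighbour set at $x$ is determined by $\{i:d_\mathcal{X}(X_i,x)<\rho_k(x)\}$ and $\{i:d_\mathcal{X}(X_i,x)\le\rho_k(x)\}$. Each is empty or a trace of a closed ball on the sample, so the count stays polynomial in $n$ and your union bound is unaffected.
\item As in the paper, bounding $d^2(\nu,Y)$ by $D^2$ tacitly requires $Y$ to take values in $\Omega$ or some bounded set.
\end{itemize}
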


\section{Weighted Fr\'{e}chet aggregation for Fr\'{e}chet regressions}\label{sect::MoM}
This section studies the weighted Fr\'{e}chet aggregation framework where a convex aggregation of block-wise Fr{\'e}chet regression estimators is developed. Specifically, let $B \geq 1$ be the number of disjoint blocks, for each block $b=1,\ldots, B$, let $\mathcal{P}_b$ be the joint probability distribution of $(Y^{(b)}, X^{(b)})\in \Omega \times \mathcal{X}$ and $\{(Y_i^{(b)}, X_i^{(b)})\}_{i=1}^{n_b}$ be an independent and identically distributed sample from $\mathcal{P}_b$. We apply the same regression method to each block $b$ to obtain a unique blockwise estimator of the conditional Fr\'echet mean, e.g., for any $x\in\mathcal{X}$, 
\begin{equation}\label{e:blockwise}
\widehat{r}_{b}(x) = \argmin_{\nu\in \Omega} \; n_b^{-1} \sum_{i=1}^{n_b} \widehat{\omega}(X_i^{(b)}, x) d^2 (\nu, Y_i^{(b)}),    
\end{equation}
where the $\{\widehat{\omega}(X_i^{(b)},x)\}_{i=1}^{n_b}$ are weights in Fr{\'e}chet regression models; see Models~\ref{exmp::global F regression}-\ref{exmp::deep F regression} in Section~\ref{sect::regression uniform conv}. 
We then aggregate the collection of block-wise estimators $\{\widehat{r}_{b}(x)\}_{b=1}^B$ and define the weighted Fr{\'e}chet estimator of $\ell$-th order at $x\in \mathcal{X}$ as
\begin{equation}\label{e:mom}
\widehat{r}(x,\pmb{\kappa};\ell) = \argmin_{\nu \in \Omega} ~  \sum_{b=1}^{B} \mathbf{\kappa}_{b} d^{\ell}(\nu, \widehat{r}_{b}(x)),
\end{equation}
where $\ell\in\{1,2\}$ and $\pmb{\kappa}=(\kappa_1,\ldots,\kappa_B)^\top$ is a weight vector in the simplex
\begin{equation}\label{e:Kappa}
\mathcal{K} = \left\{\pmb{\kappa}\in \mathbb{R}^B: \sum_{b=1}^B \kappa_b =1 ~\text{and}~ \kappa_b \ge 0 ~ \text{for} ~ b=1,\ldots,B\right\}.    
\end{equation}
Furthermore, for each block, the conditional Fr{\'e}chet mean is assumed to be globally unique and is defined as
\[r_b(x) = \argmin_{\nu\in \Omega} \; \mathbb{E}_{\mathcal{P}_b}\left[d^2(\nu, Y^{(b)})\big\vert X^{(b)}=x\right],\]
and the population version of $\widehat{r}(x,\pmb{\kappa};\ell)$ is then given by 
\begin{equation*}
r(x,\pmb{\kappa};\ell) = \argmin_{\nu \in \Omega} ~  \sum_{b=1}^{B} \mathbf{\kappa}_{b} d^{\ell}(\nu, r_b(x)), \quad \text{where } \ell\in\{1,2\}.
\end{equation*}   
Notably, the proposed weighted Fr\'{e}chet aggregation framework in (\ref{e:mom}) covers two special models shown in the following examples.
\begin{model}[Distributed Fr\'{e}chet regression]\label{exmp::distributed F regression}
 When $\ell=2$ in (\ref{e:mom}), it gives the distributed Fr\'{e}chet regression where the block-wise estimators are aggregated via the Fr\'{e}chet mean.
\end{model}
\begin{model}[MoM Fr\'{e}chet regression]\label{exmp::MoM F regression}
 When $\ell=1$ in (\ref{e:mom}), it gives the MoM Fr\'{e}chet regression where the block-wise estimators are aggregated via the Fr\'{e}chet median.
\end{model}

\subsection{Distributed Fr{\'e}chet regression}
Multisource data are common in the distributed regression literature, where independent samples collected across sites may follow distinct joint laws, i.e., $\mathcal{P}_k\neq \mathcal{P}_j$ for $k\neq j$ and $k,j=1,\ldots, B$. 
The distributional heterogeneity imposes technical challenges in the study of statistical properties for the corresponding estimators. In this section, we therefore establish the uniform convergence of the weighted Fr{\'e}chet estimator of the $\ell$-th order under heterogeneous distributions.

To this end, we first impose the following condition that ensures the existence of $\widehat{r}(x,\pmb{\kappa};\ell)$ and $r(x,\pmb{\kappa};\ell)$.
\begin{rassump}\label{condition::convexity}
For each $x\in\mathcal{X}$, the block-wise Fr{\'e}chet estimators $\{\widehat{r}_b(x)\}_{b=1}^B$ and their population quantities $\{r_b(x)\}_{b=1}^B$ satisfy Condition~\ref{condition R0}, i.e., $\widehat{r}_b(x), r_b(x)\in \mathcal{C}_{b,x}$, where $\mathcal{C}_{b,x}\subseteq \Omega$ is a convex subset of $\Omega$. Moreover, we assume that $\mathcal{C}_x = \bigcap_{b=1}^B \mathcal{C}_{b,x} \neq \emptyset$.
\end{rassump}
\noindent
Since $\mathcal{C}_x$ is a nonempty and metrically convex subset of $\Omega$, this condition implies that $\mathcal{C}_x$ has a metric convex hull \citep[Theorem 2.17 of ]{khamsi2011introduction}, i.e., given a weight vector $\pmb{\kappa}\in \mathcal{K}$ and $\ell=\{1,2\}$,
\[\mathcal{C}_{x,\pmb{\kappa}}=\left\{\argmin_{\nu\in \Omega} \sum_{b=1}^B \kappa_b d^{\ell}(\nu, r_b(x)) \Big| r_b(x)\in \mathcal{C}_{b,x}, \bigcap_{b=1}^B \mathcal{C}_{b,x} \neq \emptyset \right\}\neq \emptyset.\]
Consequently, both $\widehat{r}(x,\pmb{\kappa};\ell)$ and $r(x,\pmb{\kappa};\ell)$ exist, even though they may not necessarily be unique. Moreover, in order to establish the uniform convergence of $\widehat{r}(x,\pmb{\kappa};\ell)$, analogously to Theorem~\ref{thm::uniform convergence 2}, Conditions~\ref{condition R0} and \ref{condition R1} should be adapted accordingly as follows.
\begin{rassump}
\label{condition R0 adpt}
Write $\widehat{r}^{(\ell)} := \widehat{r}(x,\pmb{\kappa};\ell)$ and $r^{(\ell)} :=r(x,\pmb{\kappa};\ell)$ and assume $\widehat{r}^{(\ell)},r^{(\ell)}\in\mathcal{C}_{x,\pmb{\kappa}}$ with probability tending to one as $n \rightarrow \infty$, where $\ell=\{1,2\}$. For each $x\in\mathcal{X}$ and $\pmb{\kappa}\in \mathcal{K}$, 
there exists $\widetilde{r}^{(\ell)}\in \mathcal{C}_{x,\pmb{\kappa}}$ such that $\widetilde{r}^{(\ell)}\notin \{\widehat{r}^{(\ell)}, r^{(\ell)}\}$ and 
$$d(r^{(\ell)}, \widetilde{r}^{(\ell)}) + d(\widehat{r}^{(\ell)}, \widetilde{r}^{(\ell)}) = d(r^{(\ell)}, \widehat{r}^{(\ell)}).$$
Moreover, the empirical loss function $\widehat{L}^{(\ell)}(\nu; x, \pmb{\kappa})=\sum_{b=1}^B \kappa_b d^{\ell}(\nu,\widehat{r}_b(x))$ satisfies
$$\widehat{L}^{(\ell)}(\widetilde{r}^{(\ell)}; x, \pmb{\kappa}) \leq \delta(x) \widehat{L}^{(\ell)}(\widehat{r}^{(\ell)}; x, \pmb{\kappa}) + \{1-\delta(x)\} \widehat{L}^{(\ell)}(r^{(\ell)}; x, \pmb{\kappa}),$$
for some constant $\delta(x) \in [0,1]$.
\end{rassump}
\begin{rassump}\label{condition R1 adpt}
For each $x\in\mathcal{X}$ and $\pmb{\kappa}\in \mathcal{K}$, suppose that $r^{(\ell)}$ is unique and the population loss function $L^{(\ell)}(\nu;x,\pmb{\kappa}) = \sum_{b=1}^B \kappa_b d^{(\ell)}(\nu, r_b(x))$ satisfies
$$\inf_{x\in\mathcal{X},\pmb{\kappa}\in\mathcal{K}}\inf_{\nu\in\Omega:d(\nu,r^{(\ell)})>\zeta} \left( L^{(\ell)}(\nu;x,\pmb{\kappa}) - L^{(\ell)}(m^{(\ell)}; x,\pmb{\kappa}) \right)>0,$$
for any $\zeta>0$.
\end{rassump}
By noting that $\mathcal{C}_{x,\pmb{\kappa}}$ is a metrically convex set, the convexity in condition~\ref{condition R0 adpt} is straightforward to verify. Condition \ref{condition R1 adpt} is a standard identifiability condition for an M-estimator. The following theorem establishes the uniform convergence of $\widehat{r}(x,\pmb{\kappa};\ell)$.

\begin{theorem}\label{theorem::mom converge}
Assume that the block-wise Fr{\'e}chet estimators are consistent uniformly over $x\in \mathcal{X}$, i.e, $\sup_{x\in \mathcal{X}} d(\widehat{r}_{b}(x), r_{b}(x))=o_p(1)$ for $b=1, \ldots, B$. Under Conditions~\ref{condition::convexity}, \ref{condition R0 adpt} and \ref{condition R1 adpt}, the weighted Fr{\'e}chet estimator $\widehat{r}(x,\pmb{\kappa};\ell)$ for $\ell\in\{1,2\}$ satisfies
\[\sup_{x\in \mathcal{X}, \pmb{\kappa}\in \mathcal{K}} d(\widehat{r}(x,\pmb{\kappa};\ell),r(x,\pmb{\kappa};\ell)) =o_p(1).\]     
\end{theorem}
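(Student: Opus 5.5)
The plan is to apply Theorem~\ref{thm::uniform convergence 2}, with the predictor space enlarged to $\mathcal{X}\times\mathcal{K}$. Set $\Omega$ as the parameter space, $(x,\pmb{\kappa})$ as the ``covariate'', $M(\nu,(x,\pmb{\kappa}))=L^{(\ell)}(\nu;x,\pmb{\kappa})$ and $\widehat{M}(\nu,(x,\pmb{\kappa}))=\widehat{L}^{(\ell)}(\nu;x,\pmb{\kappa})$. The index set $\mathcal{X}\times\mathcal{K}$ is compact, since $\mathcal{K}$ is a simplex. Compactness is in fact not used anywhere in the proofs of Proposition~\ref{prop::uniform bound from loss to estimator} and Theorems~\ref{thm::uniform convergence 1}--\ref{thm::uniform convergence 2}, which only use pointwise arguments plus a sup.

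Condition~\ref{condition R0 adpt} plays the role of Condition~\ref{condition R0}(ii)--(iii), with $\mathcal{C}_m=\mathcal{C}_{x,\pmb{\kappa}}$ and $d_\mathcal{C}=d$. Existence (i) follows from Condition~\ref{condition::convexity} via the metric convex hull argument stated after it. Condition~\ref{condition R1 adpt} is exactly Condition~\ref{condition R1} with the infimum taken over the enlarged index set. One detail needs checking: Condition~\ref{condition R0 adpt} only asserts the convexity inequality for \emph{some} $\delta(x)\in[0,1]$, whereas Condition~\ref{condition R0}(iii) uses $\delta=d(r,\widetilde r)/d(r,\widehat r)$. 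In re-running the proof of Proposition~\ref{prop::uniform bound from loss to estimator}, I will need $\delta<1$. I will argue this as follows. If $\widehat{L}(\widehat r)\le \widehat{L}(r)$, which holds since $\widehat r$ minimizes, then any convex combination with $\delta\in[0,1]$ is at most $\widehat L(r)$. If $\widehat r$ lies outside the $\zeta$-ball, the hypothesis gives $\widehat L(\widehat r)>\widehat L(r)$, contradicting minimality outright. So Proposition~\ref{prop::uniform bound from loss to estimator} follows even more directly here, and the intermediate point is only needed to handle the boundary.

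The substantive step is verifying \eqref{improved sufficient condition 2}:
\[
\sup_{x,\pmb{\kappa}}\sup_{\nu}\bigl|\widehat L^{(\ell)}(\nu;x,\pmb{\kappa})-L^{(\ell)}(\nu;x,\pmb{\kappa})\bigr|=o_p(1).
\]
Convexity of the weights gives
\[
|\widehat L^{(\ell)}-L^{(\ell)}|\le \sum_b\kappa_b|d^\ell(\nu,\widehat r_b(x))-d^\ell(\nu,r_b(x))|\le \max_b\sup_\nu |d^\ell(\nu,\widehat r_b(x))-d^\ell(\nu,r_b(x))|.
\]
For $\ell=1$, the triangle inequality bounds this by $\max_b d(\widehat r_b(x),r_b(x))$. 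For $\ell=2$, factoring $a^2-c^2=(a-c)(a+c)$ bounds it by $2\,\mathrm{diam}(\Omega)\max_b d(\widehat r_b(x),r_b(x))$. Here $\mathrm{diam}(\Omega)<\infty$ because $\Omega$ is totally bounded under Condition~\ref{condition R0}, which Condition~\ref{condition::convexity} imports. Taking the sup over $x$ and using the assumed uniform consistency of each of the finitely many blocks ($B$ fixed), the bound is $o_p(1)$, uniformly in $\pmb{\kappa}$ because the bound does not depend on $\pmb{\kappa}$.

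Then Theorem~\ref{thm::uniform convergence 2} yields $\sup_{x,\pmb{\kappa}} d(\widehat r(x,\pmb\kappa;\ell),r(x,\pmb\kappa;\ell))=o_p(1)$. Since $r^{(\ell)}$ is unique, the one-sided Hausdorff distance is the sup over all elements of the possibly non-unique $\widehat r^{(\ell)}$. The main obstacle I anticipate is not the ULLN, which is elementary here, but the bookkeeping in transferring Condition~\ref{condition R0}. In particular, ``with probability tending to one'' in Condition~\ref{condition R0 adpt} must hold simultaneously over all $(x,\pmb\kappa)$, and I will interpret it that way. I will also handle the weaker $\delta$ specification using the direct minimality argument above, rather than through the general proposition.
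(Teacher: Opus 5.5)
Your argument is correct, and at the key step it takes a more direct route than the paper. Both arguments reduce, via the analogue of Theorem~\ref{thm::uniform convergence 2} on the index set $\mathcal{X}\times\mathcal{K}$, to showing $\sup_{x,\pmb{\kappa},\nu}|\widehat L^{(\ell)}-L^{(\ell)}|=o_p(1)$.

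The paper proves this through weak convergence in $\ell^\infty(\Omega\times\mathcal{X}\times\mathcal{K})$, i.e.\ pointwise convergence plus asymptotic equicontinuity of $\widehat L^{(\ell)}$ jointly in $(\nu,x,\pmb{\kappa})$. The $x$-direction of that equicontinuity forces it to bound $d(\widehat r_b(x_1),\widehat r_b(x_2))$. It does so using the uniform consistency together with a continuity property of $r_b(\cdot)$, which it in fact asserts is Lipschitz. That property is justified by Berge's maximum theorem and an assumption like Condition~\ref{condition U2}, which is not among the theorem's hypotheses.

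Your bound $|\widehat L^{(\ell)}-L^{(\ell)}|\le c_\ell\max_b d(\widehat r_b(x),r_b(x))$, with $c_1=1$ and $c_2=2\,\mathrm{diam}(\Omega)$, needs only two things: the Lipschitz property of $d^\ell(\nu,\cdot)$ on a bounded space, and $\pmb{\kappa}\in\mathcal{K}$. Uniformity in $(\nu,\pmb{\kappa})$ is therefore free, and uniformity in $x$ is exactly the assumed block-wise consistency. This avoids any regularity of $r_b$ and so removes the paper's unjustified step. The paper's weak-convergence template buys nothing extra here, since the limit is deterministic.

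Your observation on the loss-to-estimator step is also right and sharper than the paper's appeal to Proposition~\ref{prop::uniform bound from loss to estimator}. Evaluating the margin hypothesis at $\nu=\widehat r^{(\ell)}$ itself already contradicts minimality. Hence Condition~\ref{condition R0 adpt} is never used, including the unspecified $\delta(x)$ and the simultaneity of ``probability tending to one''.

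The boundary does not need the intermediate point either. $B_\zeta$ is open, so its complement already contains $\{d(\nu,r^{(\ell)})=\zeta\}$. The only strict/non-strict mismatch is between Condition~\ref{condition R1 adpt} ($d>\zeta$) and the complement of $B_\zeta$ ($d\ge\zeta$), and applying Condition~\ref{condition R1 adpt} with $\zeta/2$ handles it.

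Finally, existence of the minimizers is most cleanly obtained from compactness of $\Omega$ (complete and totally bounded) and continuity of $\nu\mapsto\widehat L^{(\ell)}(\nu;x,\pmb{\kappa})$, rather than from the metric convex hull remark.
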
 
\noindent 
This theorem states the uniform convergence of $\widehat{r}(x,\pmb{\kappa};\ell)$ over $x\in \mathcal{X}$ and $\pmb{\kappa}\in \mathcal{K}$. When $\ell=2$ and blocks are homogeneous, so that $\mathcal{P}_1 = \ldots = \mathcal{P}_B=\mathcal{P}$, the population aggregation $r^{(2)}(x, \pmb{\kappa})$ reduces to $r(x)$, i.e., $r^{(2)}(x, \pmb{\kappa}) = \argmin_{\nu\in\Omega} \sum_{b=1}^B \kappa_b d^2(\nu, r(x)) = r(x)$. In this setting, the role of $\pmb{\kappa}$ is therefore to improve the finite-sample behavior of $\widehat{r}^{(2)}(x, \pmb{\kappa})$, with the aim of reducing bias or achieving the same asymptotic efficiency as $\widehat{r}(x)$ using the full datasets; see for example, the jackknifing aggregation method for Euclidean responses \citep{wu2023subsampling}. 
Under distributional heterogeneity, however, the block-wise population targets $r_b(x)$ may differ, and therefore $\pmb{\kappa}$ plays a more important role by determining not only the finite-sample behavior of $\widehat{r}(x,\pmb{\kappa};\ell)$ but also the population aggregate $r^{(2)}(x, \pmb{\kappa})$ to which it converges.

When $\ell=1$, the aggregation $r^{(1)}(x, \pmb{\kappa})$ defines the weighted Fr{\'e}chet median, also known as the weighted geometric median \citep{fletcher2009geometric}. Hence, $\widehat{r}^{(1)}(x, \pmb{\kappa})$ acts as a robust location estimator and Theorem \ref{theorem::mom converge} proves consistency of the weighted Fr{\'e}chet median estimator. In practice, however, the target of interest is often the conditional Fr{\'e}chet mean, from which the Fr{\'e}chet median generally differs. Therefore, we shall study $\widehat{r}^{(1)}(x, \pmb{\kappa})$ as a robust mean estimator in the subsequent section.

\subsection{Robust Mean Estimation via MoM}
Robust estimation of the conditional Fr{\'e}chet mean is one of the primary motivations for introducing the weighted Fr{\'e}chet estimator. As discussed previously, robust approaches via modifying the loss function, i.e., robust M-estimation, may become inadequate in general metric spaces, while median-of-means (MoM)
type estimators have been shown to exhibit strong robustness in several non-Euclidean settings. In particular, the weighted Fr{\'e}chet estimator with $\ell=1$ includes the MoM estimator as a special case and therefore provides an extension of MoM methods to metrically convex spaces. In this section, we will study its robustness and establish a deviation bound under block-wise contamination. 

Specifically, suppose that $\{(X_i, Y_i)_{i=1}^n\}\in\mathcal{X}\times \Omega$ is an i.i.d sample from a joint probability distribution $\mathcal{P}$, while a proportion of the responses may be contaminated. We then randomly and evenly split the sample into $B$ disjoint blocks and obtain the block-wise Fr{\'e}chet estimators as in~\eqref{e:blockwise}. The primary objective is to estimate $r(x) = \argmin_{\nu\in\Omega} \mathbb{E}_{\mathcal{P}} [d^2(\nu, Y )\big\vert X = x]$.
Moreover, the weighted Fr{\'e}chet estimator in~\eqref{e:mom} encompasses the MoM
estimator by specifying $\ell = 1$ and $\kappa_b = B^{-1}$, i.e., $\widehat{r}^{(1)}(x,\pmb{\kappa}) = \argmin_{\nu\in\Omega} B^{-1}\sum_{b=1}^B d(\nu, \widehat{r}_{b}(x))$,
where $\pmb{\kappa} = (B^{-1},\ldots,B^{-1})^\top$. Intuitively, although blocks with contaminated observations may generate distorted estimators, uncontaminated blocks continue to produce local estimators concentrated around $r(x)$. Therefore, the weighted Fr{\'e}chet median $\widehat{r}^{(1)}(x,\pmb{\kappa})$ mitigates the influence of contaminated blocks and yields a robust estimator of $r(x)$.
The following theorem formalizes this intuition by establishing a deviation bound when a proportion $\tau$ of the block-wise estimators are allowed to be arbitrarily corrupted.

\begin{theorem}\label{theorem::mom deviation}
Let $\{\widehat{r}_{b}(x)\}_{b=1}^B$ be the block-wise Fr{\'e}chet estimators as defined in \eqref{e:blockwise}, and let $\widehat{r}^{(1)}(x, \pmb{\kappa})$ be the first-order weighted Fr{\'e}chet median estimator defined in \eqref{e:mom} with $\pmb{\kappa}=(\kappa_1,\ldots,\kappa_B)^\top\in \mathcal{K}$. Suppose that the block index set is partitioned as $\{1,\ldots,B\}=\mathcal{B}_1\cup\mathcal{B}_2$ and $\mathcal{B}_1\cap\mathcal{B}_2=\emptyset$, where $\mathcal{B}_2$ denotes the set of blocks that are arbitrarily corrupted. The corrupted blocks have total weight $\sum_{b\in \mathcal{B}_2}\kappa_b = \tau$, for some $\tau\in [0,\frac{\alpha-p}{1-p})$ and $0< p < \alpha <1/2$. Consequently, $\sum_{b\in \mathcal{B}_1} \kappa_b = (1-\tau)$.
Assume that the blocks that are not corrupted are homogeneous, in the sense that $\mathcal{P}_b = \mathcal{P}$ for all $b\in \mathcal{B}_1$, and that there exists $\epsilon>0$ such that 
\begin{equation}\label{e:block prob}
\mathbb{P}\left\{\sup_{x\in \mathcal{X}} d(\widehat{r}_{b}(x), r(x))>\epsilon\right\}\le p, \qquad b\in \mathcal{B}_1.
\end{equation}
Define $C_{\alpha} = 2(1-\alpha)/(1-2\alpha)$ and 
\begin{equation*}
\psi(\alpha-\tau, p, \pmb{\kappa}) = \sup_{\lambda>0} \left\{\lambda (\alpha-\tau) - \sum_{b\in \mathcal{B}_1} \log (1-p + p \exp(\lambda \kappa_b))\right\}.
\end{equation*}
Then, it holds that
\begin{equation*}
\sup_{x\in \mathcal{X}}  \mathbb{P}\left\{d(\widehat{r}^{(1)}(x,\pmb{\kappa}), r(x))>C_{\alpha}\epsilon \right\}\leq 
\exp\left\{-\psi(\alpha-\tau, p, \pmb{\kappa})\right\}.
\end{equation*}
\end{theorem}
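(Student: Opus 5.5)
The proof combines a deterministic geometric lemma for the weighted Fr\'echet median with a Chernoff bound on the weight carried by the ``bad'' uncontaminated blocks.

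\emph{Step 1: deterministic lemma.} Fix $x$ and write $\widehat{\mu}=\widehat{r}^{(1)}(x,\pmb{\kappa})$ and $z=r(x)$. The lemma is the metric-space analogue of Minsker's geometric median lemma: if a set $J\subseteq\{1,\ldots,B\}$ of indices with $\sum_{b\in J}\kappa_b>1-\alpha$ satisfies $d(\widehat{r}_b(x),z)\le\epsilon$ for all $b\in J$, then $d(\widehat{\mu},z)\le C_\alpha\epsilon$. I would prove it by contradiction. Suppose $d(\widehat{\mu},z)>C_\alpha\epsilon$.
\begin{itemize}
\item For $b\in J$, the triangle inequality gives $d(\widehat{\mu},\widehat{r}_b)-d(z,\widehat{r}_b)\ge d(\widehat{\mu},z)-2\epsilon$.
\item For $b\notin J$, it gives $d(\widehat{\mu},\widehat{r}_b)-d(z,\widehat{r}_b)\ge -d(\widehat{\mu},z)$.
\end{itemize}
Summing with weights, with $W=\sum_{b\in J}\kappa_b$ and $D=d(\widehat{\mu},z)$,
\[
\widehat{L}^{(1)}(\widehat{\mu})-\widehat{L}^{(1)}(z)\ \ge\ W(D-2\epsilon)-(1-W)D=(2W-1)D-2W\epsilon .
\]
This is strictly positive once $D>2W\epsilon/(2W-1)$. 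Since $W>1-\alpha$, the map $W\mapsto 2W/(2W-1)$ is decreasing, so this threshold is at most $2(1-\alpha)/(1-2\alpha)=C_\alpha$. Strict positivity contradicts minimality of $\widehat{\mu}$. This uses only the metric and no convexity; Condition~\ref{condition R0 adpt} is needed only for existence of the minimizer.

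\emph{Step 2: reduce to a bad-weight event.} For $b\in\mathcal{B}_1$ let $\xi_b=\mathbf{1}\{\sup_{x}d(\widehat{r}_b(x),r(x))>\epsilon\}$. These are independent across blocks because the blocks are disjoint and the data are i.i.d. By \eqref{e:block prob}, each $\xi_b$ is stochastically dominated by a Bernoulli$(p)$ variable. Take $J=\{b\in\mathcal{B}_1:\xi_b=0\}$, so every $b\in J$ satisfies the hypothesis of Step 1 simultaneously for all $x$. The corrupted blocks contribute weight $\tau$ to the complement of $J$. Hence the event $\{d(\widehat{\mu},r(x))>C_\alpha\epsilon\}$ for any given $x$ is contained in $\{\tau+\sum_{b\in\mathcal{B}_1}\kappa_b\xi_b\ge\alpha\}$, which does not depend on $x$. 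This gives the supremum over $x$ outside the probability for free. (The paper's indexing is slightly loose, so one should check whether the bound is stated for $\sup_x\mathbb{P}$ or $\mathbb{P}(\sup_x\cdot)$; the argument gives the stronger version.)

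\emph{Step 3: Chernoff bound.} For any $\lambda>0$, Markov's inequality and independence give
\[
\mathbb{P}\Bigl(\textstyle\sum_{b\in\mathcal{B}_1}\kappa_b\xi_b\ge\alpha-\tau\Bigr)\le e^{-\lambda(\alpha-\tau)}\prod_{b\in\mathcal{B}_1}\mathbb{E}\,e^{\lambda\kappa_b\xi_b}\le \exp\Bigl\{-\lambda(\alpha-\tau)+\textstyle\sum_{b\in\mathcal{B}_1}\log\bigl(1-p+pe^{\lambda\kappa_b}\bigr)\Bigr\}.
\]
The second inequality uses $\mathbb{P}(\xi_b=1)\le p$ and monotonicity of $q\mapsto 1-q+qe^{\lambda\kappa_b}$ in $q$. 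Optimizing over $\lambda>0$ yields $\exp\{-\psi(\alpha-\tau,p,\pmb{\kappa})\}$. The condition $\tau<(\alpha-p)/(1-p)$ gives $\alpha-\tau>p(1-\tau)$, which exceeds the mean bound $p\sum_{b\in\mathcal{B}_1}\kappa_b$, so $\psi>0$ and the bound is nontrivial.

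The main obstacle is Step 1: getting the exact constant $C_\alpha$ with only the triangle inequality, and handling the boundary case $W=1-\alpha$. I will use strict inequality in the event of Step 2, or note that $\ge\alpha$ versus $>\alpha$ only affects whether the final inequality is weak or strict.
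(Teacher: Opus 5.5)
Your proposal is correct and follows the paper's own route: a Minsker-type deterministic lemma that gets $C_\alpha$ from the triangle inequality and the minimality of the weighted median, absorption of the corrupted blocks' total weight $\tau$, and a Chernoff bound for the Bernoulli$(p)$-dominated indicators $\sup_x\mathbb{I}\{d(\widehat{r}_b(x),r(x))>\epsilon\}$ of the clean blocks. Your worry about the boundary case is unnecessary, because $2W/(2W-1)\le C_\alpha$ already holds at $W=1-\alpha$ and the Markov step works for the event with $\ge$; your weighted-sum form of the lemma also avoids the paper's appeal to uniqueness of the minimizer.
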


This theorem shows that the Fr{\'e}chet median takes a collection of independent estimators $\{\widehat{r}_b(x)\}_{b=1}^B$, which are only weakly concentrated around $r(x)$, into an aggregate estimator that achieves exponential concentration \citep{minsker2015geometric,kim2025robust}. The constant $C_{\alpha}$ quantifies the geometric discrepancy of $\{\widehat{r}_b(x)\}_{b=1}^B$ around $\widehat{r}^{(1)}(x,\pmb{\kappa})$; see also the geometric interpretation of $C_{\alpha}$ below Lemma 2.1 of \citet[][p.2311]{minsker2015geometric}. This constant coincides with that in \citet{minsker2015geometric} for Banach spaces and it may be improved by imposing additional Riemannian structure, such as the Lipschitz continuity on the Riemannian logarithm map \citep{lin2024robust} or bounded non-positive curvature \citep{kim2025robust}. The condition \eqref{e:block prob} is reasonable provided that the sample sizes of blocks in $\mathcal{B}_1$ are
sufficient and the block-wise Fr{\'e}chet estimators are consistent. Apart from the extension to regression modelling, this theorem extends Theorem 3.1 of \cite{minsker2015geometric} from Euclidean responses to object-valued responses. Related extensions have been studied, for example, by \cite{lin2024robust} for smooth Riemannian manifolds, such as spheres and spaces of positive-definite matrices, and by \cite{kim2025robust} for CAT spaces with bounded sectional curvature, e.g., metric trees. Our result is formulated for metrically convex spaces, a weaker geometric assumption that broadens the scope of applicability. Specifically, it includes prominent examples beyond those in \cite{lin2024robust} and \cite{kim2025robust}, such as Wasserstein spaces over $\mathbb{R}^d, d\ge 2$.

Moreover, in the following collorary, we establish a concentration result of the MoM estimator at a prescribed confidence level with an explicit choice of the number of blocks. Such a choice of the number of blocks is not specified in the related results of \cite{lin2024robust} and \cite{kim2025robust}.

\begin{corollary}\label{corollary::mom}
Let $\{\widehat{r}_{b}(x)\}_{b=1}^B$ be the block-wise Fr{\'e}chet estimators obtained from either the global or local Fr{\'e}chet regression, with the corresponding conditions of Theorem~\ref{prop::global uniform conv} and Theorem~\ref{prop::local uniform conv}, respectively. Furthermore, suppose the equal-weight median-of-means Fr{\'e}chet estimator, $\widehat{r}^{(1)}(x, \pmb{\kappa})$ with $\pmb{\kappa}=(1,\ldots, 1)^\top/B$ and $n_1=\ldots=n_B=\lfloor n/B\rfloor$ where $n=\sum_{b=1}^B n_b$, satisfies conditions in Theorem \ref{theorem::mom deviation}. 
For a fixed confidence level $0<\delta\le 1$ and contamination proportion $0\le \tau<1/2$, we set the number of blocks as $B:= B(\delta,\tau)= \lfloor \frac{\log(1/\delta)}{(1-\tau)\iota(\tau)}\rfloor + 1$. Then, it holds that
\begin{equation*}
    \sup_{x\in\mathcal{X}} \mathbb{P}\left(d(\widehat{r}^{(1)}(x, \pmb{\kappa}), r(x))>U(n,\delta,\tau)\right)\leq \delta,
\end{equation*}
where 
\begin{gather*}
U(n,\delta,\tau)=c(\tau)n^{-\upsilon_j} \left\{\log\left(\frac{2}{\delta(1-\tau)}\right)\right\}^{\upsilon_j},    \\
c(\tau) = \frac{3 - 4\tau^2}{1-4\tau^2} \frac{2D}{(1/2 -\tau)^2} \{(1-\tau) \iota(\tau)\}^{-\upsilon_j},\\
\iota(\tau)=\frac{3/4-\tau^2}{1-\tau}\log\left(\frac{3/4-\tau^2}{(1-\tau)\{1-(1/2-\tau)^2/2\}}\right)+\frac{(1/2-\tau)^2}{1-\tau}\log\left(\frac{2}{1-\tau}\right),
\end{gather*}
and $D>0$ is some constant and $\upsilon_j$, $j=1,2$, corresponds the uniform convergence rate for the block-wise Fr{\'e}chet estimators, i.e., $\upsilon_1 = 1/(2(\alpha' - 1))$ for the global Fr{\'e}chet regression as in as in Theorem 2 of \cite{petersen2019frechet} and $\upsilon_2 = 1/(\alpha+2\beta-3+\eta)$ for the local Fr{\'e}chet regression as in Theorem 2 of \cite{chen2022uniform}. 
\end{corollary}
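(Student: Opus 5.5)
The plan is to apply Theorem~\ref{theorem::mom deviation} with equal weights $\kappa_b=1/B$ and a specific choice of $\alpha$, $p$ and $\epsilon$, all tuned to $\tau$. With equal weights and $|\mathcal{B}_1|=(1-\tau)B$, the exponent $\psi$ becomes $B(1-\tau)$ times a Bernoulli relative entropy. Explicitly, $\sum_{b\in\mathcal{B}_1}\log(1-p+pe^{\lambda/B})=(1-\tau)B\log(1-p+pe^{\lambda/B})$. Substituting $\lambda=B\lambda'$ gives
\[
\psi=(1-\tau)B\sup_{\lambda'>0}\Big\{\lambda'\tfrac{\alpha-\tau}{1-\tau}-\log(1-p+pe^{\lambda'})\Big\}=(1-\tau)B\,\mathrm{KL}\Big(\tfrac{\alpha-\tau}{1-\tau}\,\Big\|\,p\Big).
\]
This identity requires $\tfrac{\alpha-\tau}{1-\tau}>p$, which is exactly the constraint $\tau<(\alpha-p)/(1-p)$ assumed in the theorem.

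I will choose $\alpha$ and $p$ as functions of $\tau$ so that this KL term equals $\iota(\tau)$. The form of $\iota$ points to
\[
\frac{\alpha-\tau}{1-\tau}=\frac{(1/2-\tau)^2}{1-\tau}\cdot\text{const},\qquad p\ \text{of order } (1/2-\tau)^2 .
\]
Concretely, I will try $\alpha=\tfrac12-\tfrac{(1/2-\tau)^2}{2}\cdot(\ldots)$ and $p=(1/2-\tau)^2/2$. The second log term suggests a factor $\log\{2/(1-\tau)\}$, consistent with $q/p$ where $q=(1/2-\tau)^2/(1-\tau)$ and $p=(1/2-\tau)^2/2$. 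I will then check that $C_\alpha=2(1-\alpha)/(1-2\alpha)$ equals the factor $(3-4\tau^2)/(1-4\tau^2)$ appearing in $c(\tau)$. Since $1-2\alpha$ is of order $(1/2-\tau)$ up to the factor $(1/2+\tau)$, this suggests $\alpha=\tfrac14+\tau^2$ or a close variant. The algebra matching $\iota$ and $C_\alpha$ exactly is routine bookkeeping, and I will fix the precise $(\alpha,p)$ by reverse-engineering both expressions. With $B\ge \log(1/\delta)/\{(1-\tau)\iota(\tau)\}$ we get $\exp(-\psi)\le\delta$.

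Next I must verify \eqref{e:block prob} with $p=(1/2-\tau)^2/2$ and an explicit $\epsilon$. The uniform rate results for the uncontaminated blocks (Theorem 2 of \cite{petersen2019frechet} for global regression, and Theorem 2 of \cite{chen2022uniform} for local regression) give
\[
\mathbb{P}\Big\{\sup_x d(\widehat r_b(x),r(x))>t\Big\}\le \frac{D}{(n/B)^{\upsilon_j}\,t}.
\]
These results are stated as $O_p(n_b^{-\upsilon_j})$, so I will have to pass to this explicit Markov-type tail bound with constant $D$; I view this as the main obstacle, since it amounts to reading off a nonasymptotic tail from an $O_p$ statement. Solving $D (B/n)^{\upsilon_j}/\epsilon=p$ then gives $\epsilon=2D(B/n)^{\upsilon_j}/(1/2-\tau)^2$. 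This step relies on Theorems~\ref{prop::global uniform conv} and~\ref{prop::local uniform conv}, which supply uniform consistency, upgraded to rates by peeling.

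Finally I substitute $B\le \log(1/\delta)/\{(1-\tau)\iota\}+1$ and bound $(B/n)^{\upsilon_j}$ by $n^{-\upsilon_j}\{(1-\tau)\iota\}^{-\upsilon_j}\{\log(1/\delta)+(1-\tau)\iota\}^{\upsilon_j}$. I then absorb $(1-\tau)\iota(\tau)\le\log\{2/(1-\tau)\}$, which gives the factor $\log\{2/(\delta(1-\tau))\}$. Multiplying by $C_\alpha$ yields $C_\alpha\epsilon\le U(n,\delta,\tau)$. The supremum over $x$ in the conclusion is inherited directly from Theorem~\ref{theorem::mom deviation}.
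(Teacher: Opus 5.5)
Your proposal is correct and is essentially the paper's proof. It uses the same choices $p=(1/2-\tau)^2/2$, $\alpha=\tau^2+1/4$ (so $\alpha-\tau=2p$ and $C_\alpha=(3-4\tau^2)/(1-4\tau^2)$) and $\epsilon=2D(B/n)^{\upsilon_j}/(1/2-\tau)^2$, the identification $\psi=B(1-\tau)\iota(\tau)$, and the bound $(1-\tau)\iota(\tau)\le 2p\log\{2/(1-\tau)\}\le\log\{2/(1-\tau)\}$ obtained by dropping the negative first term. The step you flag as the main obstacle is handled no more rigorously in the paper: it applies Markov's inequality with an asserted bound $\mathbb{E}[\sup_{x}d(\widehat{r}_b(x),r(x))]=O(n_b^{-\upsilon_j})$ read off from the cited $O_p$ rate theorems, so the existence of $D$ is effectively an assumption in both arguments.
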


This corollary provides a finite-sample accuracy guarantee for the MoM Fr{\'e}chet estimator at confidence level $1-\delta$. Specifically, with probability at least $1-\delta$, $\widehat{r}^{(1)}(x, \pmb{\kappa})$ approximates $r(x)$ with uniform estimation error bounded by $U(n,\delta, \tau)$. The factor $n^{-\upsilon_j}$ in $U(n,\delta, \tau)$ inherits from the block-wise Fr{\'e}chet estimator; hence a larger rate exponent $\upsilon_j$ yields a sharper deviation bound. 
Moreover, for some fixed $\tau_0$ and the number of blocks $B(\delta, \tau_0)$, the MoM Fr{\'e}chet estimator allows for $\tau_0 B_0$ blocks to be contaminated with $B_0 := \lfloor \log(1/\delta) / \iota(0) \rfloor + 1$, as $B_0/B(\delta,\tau_0)\le 1$. 
Suppose further that the number of contaminated blocks grows slower than the sample size, i.e., $\tau_0 B_0 = o(n)$ as $n\to \infty$, and set the confidence level $\delta_0:= \exp(- \{\lceil B_0 \rceil-1\}/\iota(0))$, $\widehat{r}^{(1)}(x, \pmb{\kappa})$ provides asymptotic consistent estimation of $r(x)$ because $\delta_0\to 0$ and $U(n, \delta_0, \tau_0)\to 0$ as $n\to\infty$.

On the other hand, allowing $\tau$ to approach $1/2$ increases robustness in terms of high breakdown but worsens the constant $c(\tau)=O((1/2-\tau)^{-2\upsilon_j+3})$, and hence enlarges the deviation bound $U(n,\delta, \tau)$. Thus, this corollary makes explicit the trade-off between estimation accuracy and robustness.

\section{Exceedance set via Fr\'{e}chet regressions}\label{sect::level set}

In this section, we consider the problem of exceedance set estimation for conditional Fr\'{e}chet mean utilizing certain Fr\'{e}chet regression model such as the ones in Models~\ref{exmp::global F regression}--\ref{exmp::deep F regression}. Our objective is to characterize regions of the covariate space $\mathcal{X}$ where a given functional of the conditional Fr\'{e}chet mean exceeds a prescribed threshold.

Here, we further assume $m(x)$ in (\ref{formual::general population Frechet M est}) is unique. For a compact subset $\bar{\mathbb{R}}$ of $\mathbb{R}$, consider a functional $\Lambda:\Omega^L\to \bar{\mathbb{R}}$ that quantifies a scalar quantity of interest of random objects in the product space $\Omega^L=\underbrace{\Omega \times \cdots \times \Omega}_{L \text{ times}}$, and the conditional Fr\'{e}chet means $\{m_l(x)\}_{l=1}^L$ given $X=x$, one has
\[
\Delta(x)=\Lambda\left[\{m_l(x)\}_{l=1}^L\right],~x\in\mathcal{X}.
\]
Given a specified threshold $\rho$, we define the exceedance set $L(\rho)$ as
\[
L(\rho)=\{x\in\mathcal{X}:\Delta(x)\geq\rho\},~\xi\in\mathbb{R}.
\]
The set $L(\rho)$ is a collection of all points in the compact domain $\mathcal{X}$ where $\Delta(x)$ exceeds or equals to the threshold $\rho$. The estimator of the exceedance set $L(\rho)$ is 
\[
\widehat{L}(\rho)=\{x\in\mathcal{X}:\Lambda[\{\widehat{m}_l(x)\}_{l=1}^L)]>\rho\}
\]
where $\widehat{m}(x)$ is a uniform consistent estimator of $m(x)$ such as $\sup_{x\in\mathcal{X}}d(\widehat{m}(x),m(x))=o_p(1)$, see the uniform convergence results established in Sections~\ref{subsect::regression 1}, \ref{subsect::regression 2}, and \ref{sect::MoM}. Note that when setting $\rho=\max_{x\in\mathcal{X}}\Delta(x)$ in $L(\rho)$, it reduces to the location of extrema in \cite{chen2022uniform}. To establish the theoretical property of this level-set estimation, one assumption is needed as follows.

\begin{lassump}\label{condition L1}
    There exists some constants $D_2$ and $\lambda_1>1$ such that $\left|\Lambda(\nu_1)-\Lambda(\nu_2)\right|\leq D_2 d^{\lambda_1}(\nu_1,\nu_2)$ holds for any $\nu_1,\nu_2\in\Omega$.
\end{lassump}

\ref{condition L1} is a common assumption in the level-set estimation, see, e.g., \cite{willett2007minimax}, which guarantees the consistency of the level-set estimator $\widehat{L}(\rho)$.

\begin{proposition}\label{prop::level set est}
    Under \ref{condition L1}, if $\sup_{x\in\mathcal{X}}d(\widehat{m}_l(x),m_l(x))=o_p(1)$ for all $l=1,2,\ldots,L$, then
    \[
    \mathbb{P}\left\{\liminf_{n\to\infty}\widehat{L}(\rho)  \supset {\rm int}\{L(\rho)\}\right\}\to 1
    \]
    where ${\rm int}\{L(\rho)\}$ is the interior of $L(\rho)$ and 
    \[
    \mathbb{P}\left\{\limsup_{n\to\infty}\widehat{L}(\rho) \subset \bar{L}(\rho)\right\}\to 1
    \]
    where $\bar{L}(\rho)$ is the closure of $L(\rho)$.
\end{proposition}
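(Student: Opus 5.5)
The plan is to reduce both set inclusions to a single uniform bound on the functional. Write $\widehat{\Delta}(x)=\Lambda[\{\widehat{m}_l(x)\}_{l=1}^L]$ and equip $\Omega^L$ with the product metric $d_L(\boldsymbol{\nu},\boldsymbol{\nu}')=\max_l d(\nu_l,\nu_l')$. We read \ref{condition L1} as a Hölder-type bound on $\Omega^L$ with respect to $d_L$. Then
\[
\sup_{x\in\mathcal{X}}|\widehat{\Delta}(x)-\Delta(x)|\le D_2\Big\{\max_{l}\sup_{x\in\mathcal{X}} d(\widehat{m}_l(x),m_l(x))\Big\}^{\lambda_1}=:\varepsilon_n .
\]
Since $L$ is finite and each term is $o_p(1)$, we get $\varepsilon_n=o_p(1)$ by the continuous mapping theorem. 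The whole argument then rests on the event $E_n(\eta)=\{\varepsilon_n<\eta\}$, which satisfies $\mathbb{P}\{E_n(\eta)\}\to1$ for every $\eta>0$.

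For the inner inclusion, fix $x\in{\rm int}\{L(\rho)\}$. The estimator uses a strict inequality, so we need a margin $\Delta(x)>\rho$, or more precisely $\inf_{K}\{\Delta-\rho\}>0$ on compact $K\subset{\rm int}\{L(\rho)\}$. Here I would use the interior together with continuity of $\Delta$. Continuity follows from \ref{condition L1} once each $m_l$ is continuous, and that in turn comes from the equicontinuity of $R$ in \ref{condition U2}/\ref{condition U5} combined with identifiability \ref{condition R1}. On the region where $\Delta-\rho\ge\eta$, the event $E_n(\eta)$ forces $\widehat{\Delta}>\rho$, which gives the inclusion into $\widehat{L}(\rho)$. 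We then let $\eta\downarrow0$ along a sequence $\eta_n\to0$ slowly enough that $\mathbb{P}\{E_n(\eta_n)\}\to1$; such a sequence exists because $\varepsilon_n=o_p(1)$. This yields $\widehat{L}(\rho)\supset\{x:\Delta(x)\ge\rho+\eta_n\}$ with probability tending to one. The liminf of these sets exhausts $\{\Delta>\rho\}$, which contains the interior provided the flat part $\{\Delta=\rho\}$ has empty interior.

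For the outer inclusion, on $E_n(\eta_n)$ every $x\in\widehat{L}(\rho)$ satisfies $\Delta(x)>\rho-\eta_n$. So $\widehat{L}(\rho)\subset\{\Delta\ge\rho-\eta_n\}$, and the limsup lies in $\bigcap_n\{\Delta\ge\rho-\eta_n\}=\{\Delta\ge\rho\}=L(\rho)\subset\bar L(\rho)$. This direction does not even need continuity.

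The main obstacle is making the ``$\liminf/\limsup$ of random sets with probability tending to one'' rigorous. I would interpret the statement as holding along the deterministic enveloping sets $\{\Delta\ge\rho\pm\eta_n\}$ and then pass to set limits. The delicate point is the boundary case $\Delta(x)=\rho$ at interior points. I would handle it by noting that it is excluded by the strict $>$ in $\widehat{L}$ only on a set without interior, or by assuming $\{\Delta=\rho\}$ has empty interior, which is the standard level-set regularity requirement.
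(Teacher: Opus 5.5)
Your core route is the paper's: use \ref{condition L1} to bound $\sup_x|\widehat{\Delta}(x)-\Delta(x)|$ by the uniform rates, then sandwich $\{\Delta>\rho+\eta\}\subset\widehat{L}(\rho)\subset\{\Delta>\rho-\eta\}$ on a high-probability event. Two of your additions are sound and go beyond the paper, which fixes $\epsilon$ and simply asserts both conclusions. First, choosing $\eta_n\downarrow0$ with $\mathbb{P}\{E_n(\eta_n)\}\to1$ and reading the ill-posed ``$\mathbb{P}\{\liminf_n\cdots\}\to1$'' through deterministic envelopes is a legitimate way to make the statement precise. Second, the outer inclusion is correct as written, and it even gives $\limsup\subset L(\rho)$.

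The continuity detour is unnecessary. On $E_n(\eta_n)$ the inclusion $\{\Delta\ge\rho+\eta_n\}\subset\widehat{L}(\rho)$ holds pointwise from the uniform bound alone. The detour also imports \ref{condition U2}, \ref{condition U5} and \ref{condition R1}, which are not hypotheses of this proposition.

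The gap is the step that finishes the inner inclusion. Your envelopes give only $\liminf\supset\{\Delta>\rho\}$. Your claim that $\{\Delta>\rho\}$ contains ${\rm int}\{L(\rho)\}$ once $\{\Delta=\rho\}$ has empty interior is false. So is the related claim that continuity yields $\inf_K(\Delta-\rho)>0$ on compact $K\subset{\rm int}\{L(\rho)\}$; continuity only gives $\ge 0$.

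Here is a counterexample. Take $\mathcal{X}=[0,1]$ and $\Delta(x)=\rho+(x-1/2)^2$. Then $\{\Delta=\rho\}=\{1/2\}$ has empty interior and $L(\rho)=\mathcal{X}$, yet $1/2\notin\{\Delta>\rho\}$. With $\widehat{\Delta}=\Delta-1/n$ we get $1/2\notin\widehat{L}(\rho)$ for every $n$, so the inner inclusion genuinely fails. This is compatible with the hypotheses: take $L=1$, $\Omega\subset\mathbb{R}$ with $d(a,b)=|a-b|^{1/2}$, and $\Lambda$ the identity, which satisfies \ref{condition L1} with $\lambda_1=2$, with $\widehat{m}=m-1/n$. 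The underlying point is that uniform consistency cannot control the sign of $\widehat{\Delta}(x)-\rho$ where $\Delta(x)=\rho$.

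The condition actually needed is ${\rm int}\{L(\rho)\}\subset\{\Delta>\rho\}$. Equivalently, $\{\Delta=\rho\}\subset\partial L(\rho)$: every point where $\Delta=\rho$ is a limit of points where $\Delta<\rho$. Without it, the provable conclusion is $\liminf\widehat{L}(\rho)\supset\{\Delta>\rho\}$. The paper's proof has the same hole, since its fixed-$\epsilon$ sandwich also only reaches $\{\Delta>\rho\}$, so this is a defect of the statement itself. You were right to flag it, but your proposed repair does not close it.
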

    
The level set estimator $\widehat{L}(\rho)$ consistently recovers the true level set $L(\rho)$ up to its boundary. Proposition \ref{prop::level set est} establishes this consistency result under the condition that $\widehat{m}(x)$ is a uniform consistent estimator of the conditional Fr\'{e}chet mean. Such uniform consistency has been rigorously established in different contexts such as Theorem~\ref{prop::global uniform conv} for the global Fr\'{e}chet regression estimate, Theorem~\ref{prop::local uniform conv} for the local Fr\'{e}chet regression estimate, and Theorem \ref{theorem::mom converge} for the MoM Fr\'{e}chet regression estimate. Together, these results provide a unified theoretical foundation guaranteeing that the proposed level set estimator inherits the consistency properties of the underlying regression estimators.

In many applications, interest lies in the aggregate magnitude of the exceedance rather than the actual set $L(\rho)$ itself. Accordingly, inspired by \cite{kundu2025exceedance} on Hilbert-valued responses, we focus on the exceedance function $\ell(\rho)$ given by 
\[
\ell(\rho) = \frac{\lambda \{L(\rho)\} }{ \lambda(\mathcal{X}) },~\rho\in\mathbb{R},
\]
where $\lambda$ denotes the Lebesgue measure. The value $\ell(\rho)$ reflects the proportion of the compact domain $\mathcal{X}$ occupied by the exceedance event. The corresponding empirical estimator takes the form of $\widehat{\ell}(\rho)= \lambda \{\widehat{L}(\rho\}) / \lambda(\mathcal{X})$, with the asymptotic behavior shown in the following proposition.

\begin{proposition}\label{prop::exceedance}
     Under \ref{condition L1}, if $\sup_{x\in\mathcal{X}}d(\widehat{m}(x),m(x))=o_p(1)$, and $\lim_{\varepsilon \to 0} \sum_{\rho\in \bar{\mathbb{R}}} \lambda(\{x \in \mathcal{X} : |\Delta(x) - \rho| \leq \varepsilon\}) = 0$, then $\sup_{\rho\in \bar{\mathbb{R}}}\left|\widehat{\ell}(\rho)-\ell(\rho)\right|=o_p(1)$.
\end{proposition}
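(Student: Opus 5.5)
The plan is a sandwich argument that turns uniform closeness of $\widehat{\Delta}$ to $\Delta$ into a bound on the symmetric difference of the exceedance sets. Write $\widehat{\Delta}(x)=\Lambda[\{\widehat{m}_l(x)\}_{l=1}^L]$ and set $\eta_n=\sup_{x\in\mathcal{X}}|\widehat{\Delta}(x)-\Delta(x)|$.

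\textbf{Step 1: uniform closeness of the functional.} By Condition~\ref{condition L1}, applied on the product space (with $d$ on $\Omega^L$ taken, say, as the maximum of the coordinate distances),
\[
\eta_n\le D_2\,\max_{l}\sup_{x\in\mathcal{X}} d^{\lambda_1}(\widehat{m}_l(x),m_l(x)).
\]
The uniform consistency hypothesis makes the right-hand side $o_p(1)$, so $\eta_n=o_p(1)$. This is the same ingredient as in Proposition~\ref{prop::level set est}.

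\textbf{Step 2: deterministic sandwich.} On the event $\{\eta_n\le\varepsilon\}$ the following holds for every $\rho$ simultaneously:
\begin{itemize}
\item if $\Delta(x)\ge\rho+\varepsilon$, then $\widehat{\Delta}(x)\ge\rho$, so $x$ lies in $\widehat{L}(\rho)$ up to the boundary case $\widehat{\Delta}=\rho$ (handled below);
\item if $\widehat{\Delta}(x)>\rho$, then $\Delta(x)>\rho-\varepsilon$.
\end{itemize}
Hence the symmetric difference satisfies
\[
\widehat{L}(\rho)\,\triangle\, L(\rho)\subseteq \{x\in\mathcal{X}:|\Delta(x)-\rho|\le\varepsilon\}.
\]
To be careful about strict versus non-strict inequalities, I would enlarge to $|\Delta(x)-\rho|\le 2\varepsilon$ if needed.

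\textbf{Step 3: uniformity in $\rho$.} Since $\lambda(\mathcal{X})>0$ and $\mathcal{X}$ is compact,
\[
|\widehat{\ell}(\rho)-\ell(\rho)|\le \frac{\lambda(\{x:|\Delta(x)-\rho|\le 2\varepsilon\})}{\lambda(\mathcal{X})}.
\]
Take the supremum over $\rho\in\bar{\mathbb{R}}$. The hypothesis as written uses a sum over $\rho$, which is at least as large as the supremum of each summand, so it gives $\sup_{\rho}\lambda(\{|\Delta-\rho|\le2\varepsilon\})\to0$ as $\varepsilon\to0$. Given $\iota>0$, choose $\varepsilon$ so that this supremum is below $\iota\,\lambda(\mathcal{X})$. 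Then
\[
\mathbb{P}\Big(\sup_\rho|\widehat{\ell}(\rho)-\ell(\rho)|>\iota\Big)\le\mathbb{P}(\eta_n>\varepsilon)\to0.
\]

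\textbf{Main obstacle.} The delicate part is measure-theoretic rather than probabilistic. The sets $\widehat{L}(\rho)$ must be Lebesgue measurable, which I would get from measurability of $\widehat{\Delta}$ in $x$. The probability statement also needs measurability of $\eta_n$; if that fails, I would phrase everything with outer probability. The boundary sets $\{\Delta=\rho\}$ are handled entirely by the level-set mass assumption, because the sandwich in Step 2 does not care about strict versus weak inequalities once they are absorbed into the $2\varepsilon$ band.
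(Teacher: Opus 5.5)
Your proposal is correct and follows essentially the same route as the paper. Both arguments first get $\sup_{x\in\mathcal{X}}|\widehat{\Delta}(x)-\Delta(x)|=o_p(1)$ from Condition~\ref{condition L1}. They then use a deterministic sandwich that places the discrepancy between $\widehat{L}(\rho)$ and $L(\rho)$ inside a band $\{x:|\Delta(x)-\rho|\le c\varepsilon\}$, and finish with the margin condition read as a supremum over $\rho$.

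The differences are cosmetic. You bound $|\lambda\{\widehat{L}(\rho)\}-\lambda\{L(\rho)\}|$ directly by $\lambda\{\widehat{L}(\rho)\triangle L(\rho)\}$ with a fixed $\varepsilon$, whereas the paper chains monotone inequalities with the random $\epsilon_n$ and concludes via $K(\epsilon_n)=o_p(1)$. Your version also handles the $\geq$ versus $>$ mismatch cleanly, so the $2\varepsilon$ enlargement you mention is not needed.
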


In Proposition~\ref{prop::exceedance}, $\lim_{\varepsilon \to 0} \sum_{\rho\in \bar{\mathbb{R}}} \lambda(\{x \in \mathcal{X} : |\Delta(x) - \rho| \leq \varepsilon\}) = 0$ is a uniform margin condition which means that the probability density function of $\Delta(x)$ exists and is bounded over the interval $\bar{\mathbb{R}}$.

Note that the exceedance function $\ell(\cdot)$ has the characteristics of a survival function as it is a decreasing function of $\rho$. We further have that $F_e(\rho)=1-\ell(\rho)$ behaves like a distribution function, leading to a construction of the exceedance quantile function
\[
Q_e(\varrho)=F_e^{-1}(\varrho)=\inf \left\{ \rho:F_e(\rho)\geq \varrho,~{\rm for~all~}\varrho\in(0,1) \right\}.
\]
The corresponding empirical estimators of $F_e(\rho)$ and $Q_e(\varrho)$ are $\widehat{F}_e(\rho)=1-\widehat{\ell}(\rho)$ and $\widehat{Q}_e(\varrho)=\widehat{F}_e^{-1}(\varrho)$, respectively.

\section{Simulation studies}\label{sect::sim data}

In this section we first evaluate the finite-sample performance of the weighted Fr\'{e}chet aggregation in Section~\ref{subsect::sim MMoM}, and then study the exceedance set estimation in Section~\ref{subsect::exceedance set simulation}.

\subsection{Weighted Fr\'{e}chet aggregation}\label{subsect::sim MMoM}

To evaluate the finite-sample performance of the proposed weighted Fr\'{e}chet aggregation framework in Section~\ref{sect::MoM}, we adopt the network data generation mechanism introduced by \cite{zhou2022network} whose details can be found in Section~\ref{supp subsect::sim MMoM} in the supplementary material. The response located in the space of graph Laplacian with the Frobenius metric $d_{\rm F}$, which is the same with the parameter space $\Omega$.

To quantify the estimation accuracy of the Fr\'{e}chet regression estimators defined in \eqref{e:mom}, we compute the mean integrated squared error (MISE) over $S=200$ independent replications as ${\rm MISE} = S^{-1} \sum_{s=1}^S \int_0^1 d_{\rm F}^2 \big(\widehat{r}^{(\ell,s)}(x), r(x)\big) dx$, where $\widehat{r}^{(\ell,s)}(x)$ represents the estimated conditional mean in the $s$-th replication and $r(x)$ is the true conditional Fr\'{e}chet mean. We evaluate the distributed Fr\'{e}chet regression where $\ell=2$ and the median-of-means Fr\'{e}chet regression where $\ell=1$. For both approaches, we employ uniform aggregation weights, setting $\kappa_1 = \kappa_2 = \dots = \kappa_B = 1/B$.

We first evaluate the uncontaminated setting without outliers. The total sample size is $n = N_B B$, where the block-wise sample size varies across $N_B \in \{200, 500, 1000\}$ and the number of blocks spans $B \in \{60, 80, 100\}$. The empirical MISE results, summarized in Table~\ref{table::uncontaminate setting}, reveal several expected asymptotic behaviors. For any fixed number of blocks $B$, the MISE monotonically decreases for both the distributed and MoM Fr\'{e}chet regressions as the block-wise sample size $N_B$ grows. Similarly, for any fixed $N_B$, increasing the total number of blocks systematically improves estimation accuracy. Most notably, the distributed estimator and the robust MoM estimator exhibit highly comparable MISE profiles across all $(N_B, B)$ configurations. 

\begin{table}[!b]
\caption{Mean integrated squared error ($\times 10^{-2}$) for number of blocks varying in $\{60,80,100\}$ and block-wise sample size $N_B\in\{200,500,1000\}$ with two Fr\'{e}chet models evaluated: distributed Fr\'{e}chet regression (DFR) and median-of-mean Fr\'{e}chet regression (MoMFR). \label{table::uncontaminate setting}}
\centering
\begin{tabular}{c|c|ccc}
  \hline
Model&Block Number&$N_{B}=200$ & $N_{B}=500$ & $N_{B}=1000$ \\ 
  \hline
    &60  &0.1018 & 0.0433 & 0.0280 \\ 
DFR &80  &0.0709 & 0.0437 & 0.0222 \\ 
    &100 &0.0498 & 0.0374 & 0.0170 \\ 
   \hline
       &60  &0.1012 & 0.0442 & 0.0264 \\ 
MoMFR  &80  &0.0699 & 0.0450 & 0.0225 \\ 
       &100 &0.0498 & 0.0388 & 0.0166 \\ 
   \hline
\end{tabular}
\end{table}

We then evaluate the contaminated setting with outliers. We inject adversarial outliers into the simulated graph Laplacian responses; further details may be found in Section~\ref{supp subsect::sim MMoM} in the supplementary material. For a comprehensive evaluation, we benchmark the proposed distributed and MoM Fr\'{e}chet regressions against two established methods covering the global Fr\'{e}chet regression detailed in Model \ref{exmp::global F regression}, and the weight-regularized Fr\'{e}chet regression proposed by \cite{li2025robust} with public code available at \url{https://github.com/lee199950120HAO/robust-FR}. We fix the total sample size at $n=1000$ and vary the contamination ratio $\epsilon \in \{0, 0.01, \dots, 0.06\}$. For both the distributed and MoM estimators, we partition the data into $B=100$ blocks. This gives a theoretical breakdown point of $0.05$ for the MoM regression.

The MISE for all four models are reported in Table~\ref{table::robust compare}. In the uncontaminated regime where $\epsilon=0$, the standard global Fr\'{e}chet regression attains the lowest MISE, while the distributed and MoM estimators incur a efficiency loss. However, as anticipated, the introduction of a small fraction of outliers such as $\epsilon = 0.01$ causes both the global and distributed estimators to break down. In contrast, the MoM estimator remains highly robust. While it eventually breaks down once the contamination ratio exceeds its theoretical limit where $\epsilon \geq 0.05$, its MISE remains strictly superior to those of the global and distributed baselines across all contaminated scenarios. The weight-regularized Fr\'{e}chet regression exhibits several limitations in this setting. As its robustness hinges critically on hyperparameter tuning, any suboptimal selection renders the estimator highly sensitive to outliers. Consequently, its MISE exceeds that of the MoM estimator for all $\epsilon > 0$. Furthermore, when the contamination ratio surpasses $0.02$, the performance of the weight-regularized estimator degrades severely, occasionally yielding errors larger than those of the global Fr\'{e}chet regression. 

\begin{table}[!tb]
\caption{Mean integrated squared error for contamination ratio varying in $\{0,0.01,\ldots,0.06\}$ and average run time with four Fr\'{e}chet models evaluated: global Fr\'{e}chet regression (GFR), distributed Fr\'{e}chet regression (DFR), weight-regularized  Fr\'{e}chet regression (WRFR), and median-of-mean Fr\'{e}chet regression (MoMFR). \label{table::robust compare}}
\centering
\begin{tabular}{c|cccc}
  \hline
Contamination Ratio&GFR & DFR & WRFR & MoMFR \\ 
  \hline
0  &  0.0187 & 0.0244 & 0.0188 & 0.0210 \\ 
0.01&  8.5493 & 9.0511 & 0.1049  & 0.0414 \\ 
0.02&  33.2608 & 35.1038 & 0.3642  & 0.1071 \\ 
0.03&  74.3178 & 77.8056 & 80.9982  & 0.3730 \\ 
0.04&  131.4262 & 136.0896 & 42.0440 & 9.2850 \\ 
0.05&  205.8379 & 213.2294 &  82.1643  & 54.7823 \\ 
0.06&   294.2659 & 302.1813 & 204.5589 & 132.4223 \\ 
   \hline
Run Time & 10 seconds & 1 seconds & 3 hours & 1 seconds \\
   \hline
\end{tabular}
\end{table}

Beyond studying robustness, we also evaluate computational efficiency by recording the average single-core run time per replication for each scenario in Table~\ref{table::robust compare}. Note that the reported times for both the distributed and MoM Fr\'{e}chet regressions represent the entire model-fitting procedure executed sequentially on a single core. Remarkably, both the distributed and MoM estimators require approximately 1 second to fit, executing faster than the standard global regression. Conversely, the weight-regularized Fr\'{e}chet regression demands an average of nearly 3 hours per replication. This high computational cost is driven by the extensive hyperparameter selection.

\subsection{Exceedance set estimation}\label{subsect::exceedance set simulation}

Here we  evaluate the finite-sample performance of the proposed exceedance set estimator via 200 Monte Carlo replicates across varying sample sizes $n \in \{100, 200, 500\}$. Similar to the data generation setting in Section~\ref{subsect::sim MMoM}, both the data space and the parameter space consist of the space of network Laplacians equipped with the Frobenius distance. The data generation procedure is specified in Section~\ref{supp subsect::exceedance set simulation} in the supplementary material.

The quantity of interest formulated in Section~\ref{sect::level set} is defined as $\Delta(x) = d_{\rm F}\big(r_{(1)}(x), r_{(2)}(x)\big)$, where $r_{(1)}(x), r_{(2)}(x)$ are the underlying conditional Fr\'{e}chet means for two classes of generated graph Laplacians given in Section~\ref{supp subsect::exceedance set simulation} in the supplementary material. We wish to evaluate the performance of the exceedance set estimator of $L(8)=\{x\in[0,1]:\Delta(x)\geq 8\}$ based on the local Fr\'{e}chet regression estimators for $r_{(1)}(x)$ and $r_{(2)}(x)$.

\begin{figure}[!b]
\centering
\includegraphics[width=1\linewidth]{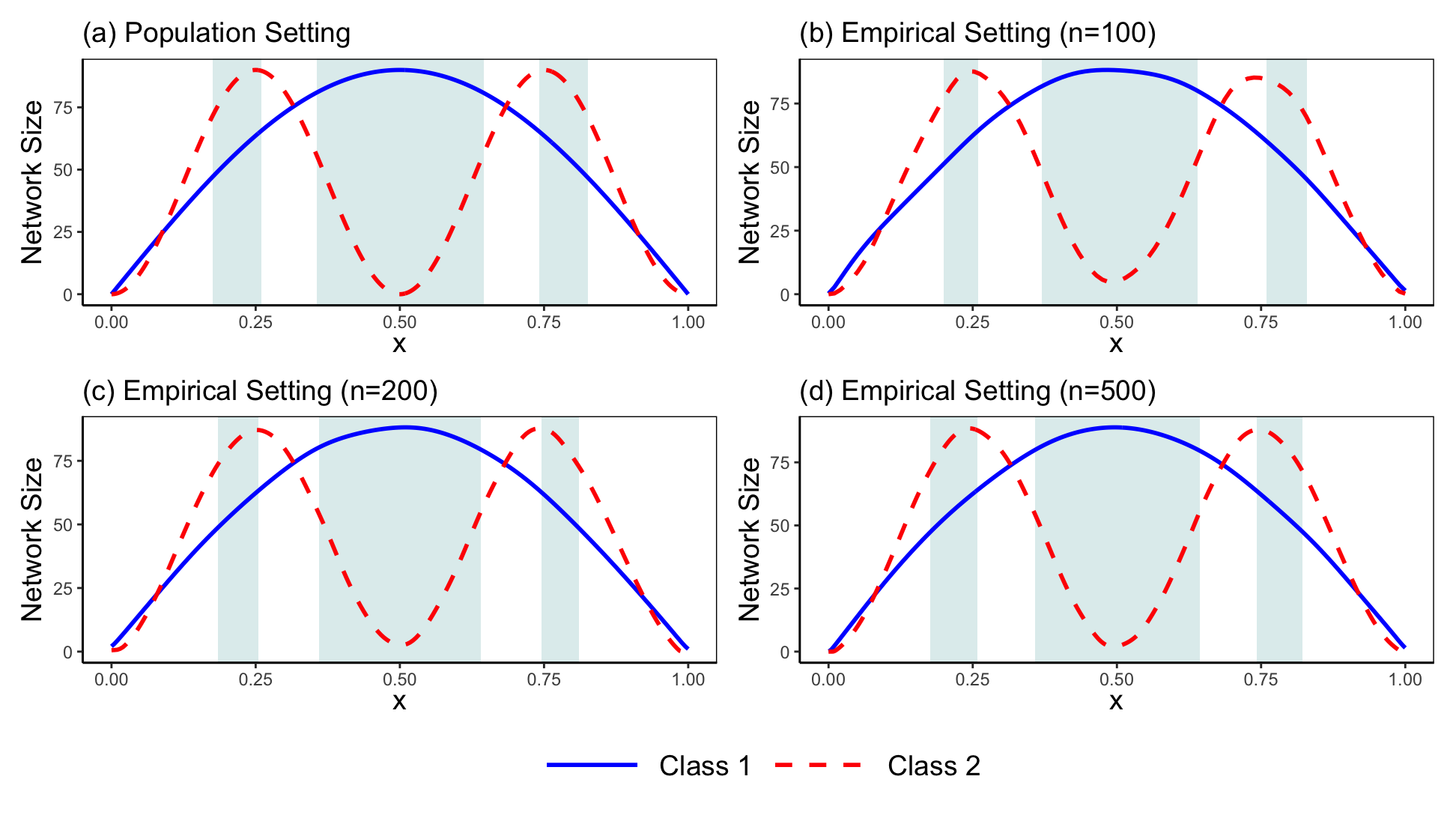} 
\caption{Network sizes and exceedance regions for (a) the true population setting, and empirical settings based on sample sizes of (b) $n=100$, (c) $n=200$, and (d) $n=500$. Solid blue and dashed red lines represent the graph Laplacians for Class 1 and Class 2, respectively. The exceedance regions are shaded in azure.}\label{fig::1}
\end{figure}

Figure~\ref{fig::1} visualizes the network size of the network Laplacians which is calculated via the sum of all the elements of the corresponding adjacency matrix, alongside the identified exceedance regions. Panel (a) in Figure~\ref{fig::1} illustrates that the population network size trajectory for the first class, $r_{(1)}(x)$, exhibits a single global maximum over the domain of the predictor $X$, whereas the second class, $r_{(2)}(x)$, displays a bimodal network size trajectory. Panels (b), (c), and (d) present the network sizes of the estimated conditional Fr\'{e}chet means, together with their corresponding empirical exceedance sets, for a representative simulation replicate across sample sizes $n \in \{100, 200, 500\}$. This replicate demonstrates that the empirical estimators successfully isolate the three disjoint exceedance regions with the accuracy of the estimated boundaries improving substantially as the sample size increases to $n=500$.

To study the finite-sample performance of the proposed exceedance set estimator, note that the estimated number of regions may differ from the ground truth due to sample variability. To deal with this, we employ a nearest-neighbor matching algorithm. Specifically, the midpoint of each true region is calculated and strictly paired with the estimated region possessing the closest midpoint, without imposing a maximum distance threshold. Consequently, if the estimator fails to detect a true underlying region, its boundaries are evaluated against the nearest available spatial estimate. In this framework, we report the empirical bias and mean squared error (MSE) of the boundary estimates for each region. To further characterize structural reliability, we report the proportion of replicates with mismatched region counts, alongside the average deviation in region counts. These results are presented in Table~\ref{table::level set simulation}.

\begin{table}[!b]
    \centering
    \caption{Finite-sample performance of the level set estimator across 200 Monte Carlo replicates. Region number accuracy represents the global rate of incorrect region counts and the mean deviation from the true count. Region boundary accuracy reports the empirical Bias ($\times 10^{-1}$) and mean squared error (MSE $\times 10^{-2}$) for the left boundaries (LB) and right boundaries (RB) of the three true regions, computed via unconstrained nearest-neighbor matching.}
    \label{table::level set simulation}
    \begin{tabular}{c c c c c c c c}
        \toprule
        \multirow{2}{*}{$n$} & \multicolumn{2}{c}{Region Number Accuracy} & \multirow{2}{*}{Region} & \multicolumn{4}{c}{Region Boundary Accuracy} \\
        \cmidrule(lr){2-3} \cmidrule(lr){5-8}
        & Error Rate (\%) & Avg. Diff. & & Bias (LB) & MSE (LB) & Bias (RB) & MSE (RB) \\
        \midrule
        \multirow{3}{*}{100} & \multirow{3}{*}{2.00} & \multirow{3}{*}{-0.02} 
        & 1 & 0.0275  & 0.0411  & -0.0280 &  0.1602 \\
        & & & 2 & -0.0135 &  0.0013 &   0.0200 &  0.0020 \\
        & & & 3 & 0.0405  & 0.1625  & -0.0230  & 0.0411 \\
        \midrule
        \multirow{3}{*}{200} & \multirow{3}{*}{0.00} & \multirow{3}{*}{0.00} 
        & 1 & 0.0260  & 0.0046 &  -0.0562 &  0.0052 \\
        & & & 2 & 0.0075 &  0.0005  & -0.0093 &  0.0005 \\
        & & & 3 & 0.0518  & 0.0044 &  -0.0273 &  0.0056 \\
        \midrule
        \multirow{3}{*}{500} & \multirow{3}{*}{0.00} & \multirow{3}{*}{0.00} 
        & 1 & 0.0109  & 0.0014 &  -0.0313  & 0.0015 \\
        & & & 2 & 0.0115 &  0.0003  & -0.0113  & 0.0003 \\
        & & & 3 & 0.0308  & 0.0015  & -0.0151  & 0.0016 \\
        \bottomrule
    \end{tabular}
\end{table}

Table~\ref{table::level set simulation} demonstrates that at a small sample size of $n=100$, the empirical exceedance set estimator occasionally fails to recover the exact number of disjoint regions across all simulation replicates. However, as the sample size increases to $n=200$ and $n=500$, the estimator consistently achieves perfect structural recovery, identifying the correct number of regions in every replicate. Furthermore, both the empirical bias and the MSE of the boundary estimates exhibit a decreasing trend as $n$ grows.

\section{Real data analysis}\label{sect::real data}

In this section, we study the structural discrepancy between weekday and weekend mobility patterns in the New York City Citi Bike sharing system. The system provides a publicly available historical trip dataset (\url{https://citibikenyc.com/system-data}) which contains the start and end times, as well as the spatial coordinates, of all trips between bike stations across the city at a second-level resolution. We analyze trip records from January 2019 through December 2019 to investigate the intra-day dynamic patterns of urban mobility. We restrict our analysis to the 94 most frequently utilized stations and partition each day into 20-minute intervals, yielding 72 data points per day. For each interval, we construct a network where the 94 nodes correspond to the selected stations, and the edge weights represent the aggregate number of trips between them. At each time point, the network structure is encoded by a $94 \times 94$ graph Laplacian matrix. Specifically, the graph Laplacian is defined as $L = D - A$, where $A$ is the $94 \times 94$ adjacency matrix whose entry $a_{ij}$ denotes the trip count between nodes $i$ and $j$, and $D$ is the diagonal degree matrix with elements $d_{ii} = \sum_{j=1}^{94} a_{ij}$.

To evaluate the dissimilarity between weekday and weekend transportation patterns, we stratify the dataset accordingly. For both subsets, the graph Laplacian serves as the object-valued response, while the covariate takes values in the normalized grid $x \in \{1/72, 2/72, \dots, 1\}$, representing the time index within the day. Figure~\ref{fig::2} illustrates the difference of the trip volumes of the weekday and weekend transportation networks. The weekday mobility pattern exhibits a bimodal structure with local maxima around 9 am and 5 pm, corresponding to traditional commuter rush hours. Conversely, the weekend trajectory lacks these commuting spikes, instead forming an extended, sustained plateau between 11 am and 7 pm.

\begin{figure}[!b]
\centering
\includegraphics[width=1\linewidth]{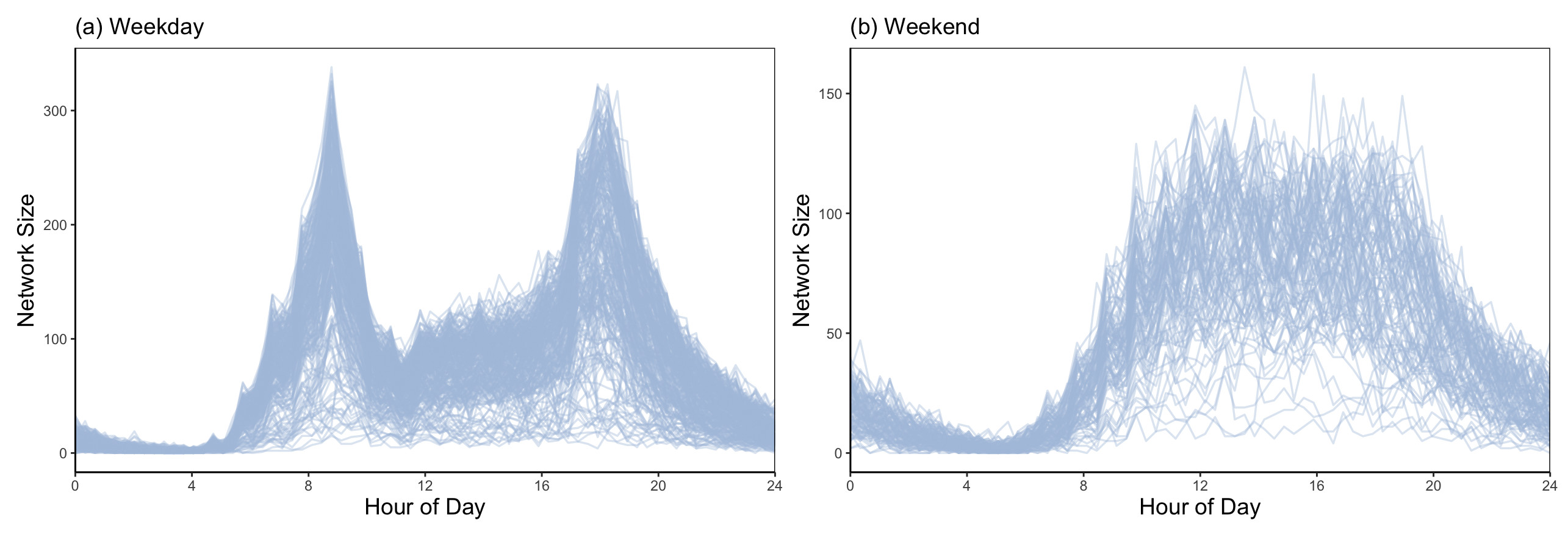} 
\caption{Network sizes of (a) weekday and (b) weekend transportation networks in the New York City Citi Bike sharing system.}\label{fig::2}
\end{figure}

To accommodate the extensive sample sizes and achieve robustness, we apply the MoM framework in Model~\ref{exmp::MoM F regression} in conjunction with the local Fr\'{e}chet regression estimator in Model~\ref{exmp::local F regression}. The block numbers for the weekday dataset and the weekend dataset are set to be 80 and 40, respectively. Utilizing the Frobenius metric $d_{\rm F}$, we estimate the conditional Fr\'{e}chet means evaluated at time index $x$ for the weekday and weekend datasets, denoted by $\widehat{m}_{\rm wd}(x)$ and $\widehat{m}_{\rm we}(x)$, respectively. A primary objective of this analysis is to identify specific periods during the day when the weekday and weekend transportation networks exhibit substantial structural divergence. Consequently, the quantity of interest formulated in Section~\ref{sect::level set} is defined as $\widehat{\Delta}(x) = d_{\rm F}\big(\widehat{m}_{\rm wd}(x), \widehat{m}_{\rm we}(x)\big)$.

\begin{figure}[!tb]
\centering
\includegraphics[width=1\linewidth]{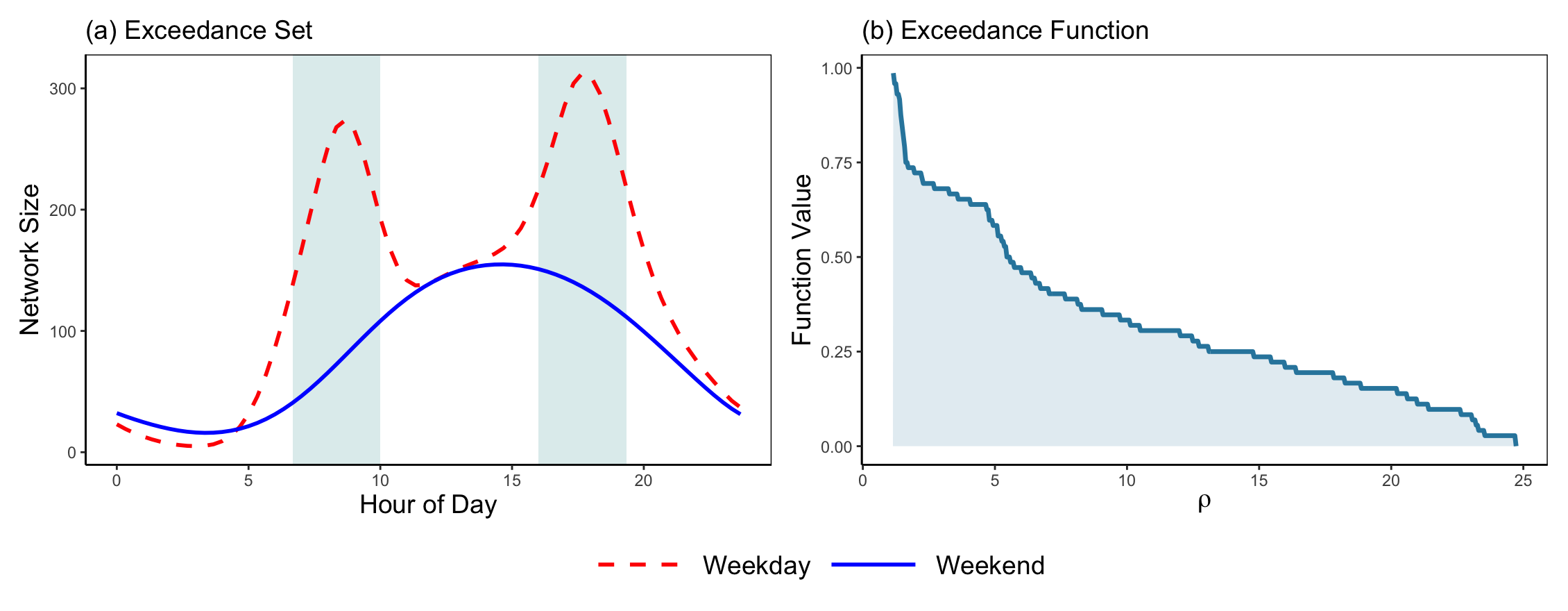} 
\caption{(a) Network sizes and exceedance regions for weekday and weekend transportation networks in the New York City Citi Bike sharing system. Dashed red  and solid blue lines represent the graph Laplacians for weekday and weekend, respectively. The exceedance regions are shaded in azure. (b) The exceedance function for the difference between weekday and weekend transportation networks.}\label{fig::3}
\end{figure}

Figure~\ref{fig::3} illustrates the application of the proposed exceedance set methodology to the New York City Citi Bike system. Panel (a) displays the estimated exceedance set, indicated by the shaded intervals, where the empirical structural discrepancy $\widehat{\Delta}(x)$ surpasses the 70th percentile of $d_{\rm F}\big(\widehat{m}_{\rm wd}(x), \widehat{m}_{\rm we}(x)\big)$. Notably, this exceedance set naturally encompasses the morning and evening commuter rush hours in weekdays, successfully isolating the specific temporal windows where weekday and weekend network dynamics fundamentally diverge. Panel (b) of Figure~\ref{fig::3} presents the corresponding empirical exceedance function, $\widehat{\ell}(\rho)$. Evaluating this function at the aforementioned 70th percentile threshold yields a function value around $0.26$. This mathematically reinforces the visual evidence in Panel (a), explicitly quantifying that these highly divergent commuting periods constitute over one-quarter of the total daily cycle.

\section{Conclusion}

In this paper, we advance the theoretical and methodological foundations of object oriented data analysis by addressing a critical analytical bottleneck in the study of generalized conditional Fr\'{e}chet mean set. Recognizing the profound difficulty of verifying asymptotic uniform equicontinuity in general metric spaces devoid of linear algebraic structure, we introduce a novel theoretical framework to establish uniform convergence. By shifting the analytical focus to an analytical condition imposed directly on the empirical cost function of the generalized conditional Fr\'{e}chet means, our approach bypasses the need for traditional equicontinuity verification for generalized conditional Fr\'{e}chet means.

Building upon the novel foundational guarantees for uniform convergence presented in this paper, we develop two downstream methodologies. First, we propose a weighted Fr\'{e}chet aggregation framework covering both distributed regression for computationally demanding, large-scale datasets and robust median-of-means estimation for contaminated data. Importantly, our analysis distinguishes itself from the existing literature by successfully establishing these properties without relying on restrictive smoothness assumptions. Second, we extend classical concepts of level set estimation to non-Euclidean responses, providing a framework for exceedance set estimation. This methodology empowers practitioners not only to geometrically isolate regions of extreme or anomalous behavior within the covariate space, but also to rigorously quantify the aggregate magnitude of these structural threshold violations. The empirical performance of these proposed methods is validated through comprehensive Monte Carlo simulations and real data analysis of New York City Citi Bike mobility patterns, underscoring their practical utility. 

\section*{Supplementary Material}

The supplementary material includes the summary of notations, proofs and auxiliary lemmas, and the data generation setting for simulation studies.

\clearpage

\begin{frontmatter}
\title{SUPPLEMENTARY MATERIAL for Uniform Convergence of Generalized Conditional Fr\'{e}chet Means, with Applications to Fr\'{e}chet Moment-of-Mean Regression}

\thankstext{t1}{Equal contribution, ordered alphabetically.}
\begin{aug}
\author[A]{\fnms{Houren}~\snm{Hong}\thanksref{t1}}
\author[B]{\fnms{Jiazhen}~\snm{Xu}\thanksref{t1}}
\and
\author[A]{\fnms{Andrew} ~\snm{T.A.}~\snm{Wood}}
\address[A]{Research School of Finance, Actuarial Studies and Statistics, Australian National University\printead[presep={,\ }]{e1,e3}}

\address[B]{Department of Actuarial Studies and Business Analytics, Macquarie University\printead[presep={,\ }]{e2}}
\end{aug}
\end{frontmatter}
\setcounter{section}{0}
\setcounter{equation}{0}

\makeatletter
\renewcommand\thesection{S\@arabic\c@section}
\renewcommand\thetable{S\@arabic\c@table}
\renewcommand \thefigure{S\@arabic\c@figure}
\renewcommand \theequation{S\arabic{equation}}
\makeatother

This Supplementary Material consists of six parts. Section~\ref{supp sect::notation} presents a summary of key notations. Proof of results in Sections~\ref{sect::uniform conv}, \ref{sect::regression uniform conv}, \ref{sect::MoM}, and \ref{sect::level set} are given in Sections~\ref{supp sect::proof1},\ref{supp sect::proof2}, \ref{supp sect::proof3}, and \ref{supp sect::proof4}, respectively. Section~\ref{supp sect::data generation} covers the data generation setting used in the simulation studies.

\section{Notations}\label{supp sect::notation}

Table~\ref{tab:notations} provides a summary of key notations used throughout the paper.

\renewcommand{\arraystretch}{1.3} 
\begin{xltabular}{\linewidth}{l X}
    
    \caption{Summary of Key Notations}
    \label{tab:notations} \\
    \toprule
    \textbf{Notation}  & \textbf{Description} \\
    \midrule
    \endfirsthead

    \caption[]{Summary of Key Notations (Continued)} \\
    \toprule
    \textbf{Notation}  & \textbf{Description} \\
    \midrule
    \endhead

    \midrule
    \multicolumn{2}{r}{\textit{\small Continued on next page...}} \\
    \endfoot

    \bottomrule
    \endlastfoot

    $\Theta$& data space $\Theta$ with a suitable metric \\
    $(\Omega,d)$& parameter space $\Omega$ with metric $d$ \\
    $(\mathcal{X},d_\mathcal{X})$ & covariate space $\mathcal{X}$ with metric $d_\mathcal{X}$ \\
    $m(x),M(\nu,x)$ & population generalized conditional Fr\'{e}chet mean set $m(x)$ with the cost function $M(\nu,x)$\\
    $\widehat{m}(x),\widehat{M}(\nu,x)$ & empirical generalized conditional Fr\'{e}chet mean set $\widehat{m}(x)$ with the empirical cost function $\widehat{M}(\nu,x)$\\
    $d(\nu, \mathcal{A}),d(\mathcal{B},\mathcal{A})$ & a metric $d(\nu, \mathcal{A}) = \inf_{a \in \mathcal{A}} d(\nu, a)$ for a non-empty set $\mathcal{A}$ and the one-sided Hausdorff  distance $d (\mathcal{B},\mathcal{A}) =\sup_{b \in \mathcal{B}} d(b, \mathcal{A})$ for non-empty sets $\mathcal{A}$ and $\mathcal{B}$\\
    $\mathcal{M},\mathcal{C}_m,d_\mathcal{C}$ & ambient space $\mathcal{M}$ such that $\Omega\subseteq \mathcal{M}$, a convex subset $\mathcal{C}_m \subseteq \mathcal{M}$ equipped with metric $d_\mathcal{C}$\\
    $\widetilde{m}$ & point used to establish local convexity of $\widehat{M}$ \\
    $\delta(x)$ & ratio of two geodesics used to establish local convexity of $\widehat{M}$ \\
    $d_{\rm E},d_{\rm IE},d_{\rm F}$ & the Euclidean distance $d_{\rm E}$, the induced Euclidean distance $d_{\rm IE}$, and the Frobenius metric $d_{\rm F}$ \\
    $r(x),R(\nu,x)$ & a specific conditional Fr\'{e}chet means $r(x)$ with the cost function $R(\nu,x)$ for Fr\'{e}chet regression models\\
    $\widetilde{r}(x),\widetilde{R}(\nu,x)$ & population Fr\'{e}chet-regression-based means $\widetilde{r}(x)$ with the cost function $\widetilde{R}(\nu,x)$\\
    $\widehat{r}(x),\widehat{R}(\nu,x)$ & empirical Fr\'{e}chet-regression-based means $\widehat{r}(x)$ with the cost function $\widehat{R}(\nu,x)$\\
    $r_b(x)$,$\widehat{r}_b(x)$ & the block-wise conditional Fr\'echet means and their estimators\\
    $\widehat{r}(x,\pmb{\kappa};\ell)$, $r(x,\pmb{\kappa};\ell)$& the weighted Fr\'echet estimator of $\ell$-th order with a weight vector $\pmb{\kappa}$\\
    $\tau$ & the proportion of contaminated blocks\\
    $\zeta$ & distance between any two points in a certain space\\
    $\nu$ & a generic point in $\Omega$  \\
    $w(X,x)$ & weight in Fr\'{e}chet regression models \\
    $\Lambda,\Delta(x)$ & two functions used in the construction of exceedance sets $\Lambda$ and $\Delta(x)$ that quantifies a scalar quantity given object-valued inputs\\
    $L(\rho), \ell(\rho)$ & the population exceedance set $L(\rho)$ and the population exceedance function $\ell(\rho)$\\
    $\widehat{L}(\rho), \widehat{\ell}(\rho)$ & the empirical exceedance set $\widehat{L}(\rho)$ and the empirical exceedance function $\widehat{\ell}(\rho)$\\

\end{xltabular}

\section{Proof of results in Section \ref{sect::uniform conv}}\label{supp sect::proof1}

Throughout the proofs, for statements involving suprema over uncountable sets, probabilities should be understood as outer probabilities to avoid standard measurability issues.


\begin{proof}[Proof of Proposition~\ref{prop::uniform bound from loss to estimator}]

We prove the contrapositive statement of Proposition~\ref{prop::uniform bound from loss to estimator}, i.e. we show that for every fixed $x\in\mathcal{X}$, if there exists an empirical minimizer $\widehat{m} \in \widehat{m}(x)$ such that $d(\widehat{m}, m(x)) \geq \zeta$, then there exists some $\widetilde{m}\neq \widehat{m}$ with $d(\widetilde{m}, m(x)) \geq \zeta$ such that $\widehat{M}(\widetilde{m},x) - \sup_{m \in m(x)} \widehat{M}(m, x)  \leq 0$.

As $m(x)$ is assumed to be closed, there exists $m^* \in m(x)$ such that
\[
d(\widehat{m}, m^*) = d(\widehat{m}, m(x)) \geq \zeta.
\]
Here we choose the $m^*$ such that $d(\widehat{m},m^*)\leq d(\widehat{m},m)$ for any other $m\in m(x)$. By Condition~\ref{condition R0}(ii), $m^*$ and $\widehat{m} $ are contained within a convex subset $\mathcal{C}_{m^*}$ and there exists a $\widetilde{m}\in \mathcal{C}_{m^*}$ such that $d(\widetilde{m},m^*)\geq \zeta$ and thus $d(\widetilde{m}, m(x)) \geq \zeta$. Let $\delta = d_\mathcal{C}(m^*, \widetilde{m}) / d_\mathcal{C}(m^*, \widehat{m})$. By Condition~\ref{condition R0} (iii), we obtain
\[
\widehat{M}(\widetilde{m}, x) \leq \delta \widehat{M}(\widehat{m}, x) + (1-\delta) \widehat{M}(m^*, x),
\]
which gives
\[
\widehat{M}(\widetilde{m}, x) - \widehat{M}(m^*, x) \leq \delta \left\{ \widehat{M}(\widehat{m}, x) - \widehat{M}(m^*, x) \right\} \leq 0
\]
where the last inequality holds as $\widehat{m} \in \widehat{m}(x)$ is a minimizer of the empirical cost function. Finally, because $ \widehat{M}(m^*, x)\leq \sup_{m\in m(x)} \widehat{M}(m, x)$ by definition, it follows that
\[
\widehat{M}(\widetilde{m}, x) - \sup_{m\in m(x)} \widehat{M}(m, x) \leq \widehat{M}(\widetilde{m}, x) - \widehat{M}(m^*, x) \leq 0
\]
This completes the proof.
\end{proof}

\begin{proof}[Proof of Theorem \ref{thm::uniform convergence 1}]
By Proposition~\ref{prop::uniform bound from loss to estimator}, for any fixed $x\in\mathcal{X}$, one always gets
\[
   \left\{  \inf_{\nu \notin B_\zeta(m(x))} \left\{ \widehat{M}(\nu, x)  - \sup_{m\in m(x)} \widehat{M}(m, x) \right\} >0  \right\} \subseteq \left \{ d (\widehat{m}(x),m(x))   \leq \zeta  \right \}.
\]
Then, it is sufficient to show that 
\begin{align}\label{formula::supre subset}
   \left  \{ \inf_{x \in \mathcal{X}} \inf_{\nu \notin B_\zeta(m(x))} \left\{ \widehat{M}(\nu, x)  - \sup_{m\in m(x)} \widehat{M}(m, x) \right\} >0  \right   \} \subseteq \left \{\sup_{x \in \mathcal{X}}d (\widehat{m}(x),m(x)) \leq \zeta \right \}.
\end{align}
To see this, note that (\ref{formula::supre subset}), in conjunction with 
\[
\mathbb{P}\left [\inf_{x \in \mathcal{X}} \inf_{\nu \notin B_\zeta(m(x))} \left\{ \widehat{M}(\nu, x)  - \sup_{m\in m(x)} \widehat{M}(m, x)\right\} >0 \right ] \to 1,
\]
holds for every  $\zeta>0$, will give that 
\[
\sup_{x \in \mathcal{X}} d( \widehat{m}(x), m(x)) =o_p(1).
\]

We now derive (\ref{formula::supre subset}). Let 
\begin{align*}
&\mathcal{A}_x = \left\{ \inf_{\nu \notin B_\zeta(m(x))} \left\{ \widehat{M}(\nu, x)  - \sup_{m\in m(x)} \widehat{M}(m, x) \right\}  > 0 \right\},\\ 
&\mathcal{B}_x = \left\{ d (\widehat{m}(x),m(x)) \leq \zeta \right\}, \\
{\rm and} ~&\mathcal{A} = \left\{ \inf_{x \in \mathcal{X}} \inf_{\nu \notin B_\zeta(m(x))} \left\{ \widehat{M}(\nu, x)  - \sup_{m\in m(x)} \widehat{M}(m, x) \right\} > 0 \right\}.
\end{align*}
First observe that if the event $\mathcal{A} $ occurs, the infimum of the margin over the entire space $\mathcal{X}$ is strictly positive. This implies that the margin evaluated at every fixed $x \in \mathcal{X}$ must also be strictly positive. Therefore, the occurrence of $\mathcal{A}$ guarantees the occurrence of $\mathcal{A}_x$ for all $x \in \mathcal{X}$. This implies that $\mathcal{A} \subseteq \bigcap_{x \in \mathcal{X}} \mathcal{A}_x$.

Recall that by Proposition~\ref{prop::uniform bound from loss to estimator}, $\mathcal{A}_x \subseteq \mathcal{B}_x$ holds for all $x$, then $\bigcap_{x \in \mathcal{X}} \mathcal{A}_x \subseteq \bigcap_{x \in \mathcal{X}} \mathcal{B}_x$. Note that $\bigcap_{x \in \mathcal{X}} \mathcal{B}_x$ represents the scenario where $d(\widehat{m}(x), m(x)) \leq \zeta$ holds simultaneously for all $x \in \mathcal{X}$. If the distance is bounded by $\zeta$ everywhere in $\mathcal{X}$, it follows that the supremum of the distance over $\mathcal{X}$ is also bounded by $\zeta$. This implies $\bigcap_{x \in \mathcal{X}} \mathcal{B}_x \subseteq \left\{ \sup_{x \in \mathcal{X}} d(\widehat{m}(x), m(x)) \leq \zeta \right\}$, which indicates (\ref{formula::supre subset}) holds. This completes the proof.
\end{proof}

\begin{proof}[Proof of Theorem \ref{thm::uniform convergence 2}]
By Theorem \ref{thm::uniform convergence 1}, it is sufficient to verify condition (\ref{improved sufficient condition}), that is,
\[
\mathbb{P}\left [\inf_{x \in \mathcal{X}} \inf_{\nu \notin B_\zeta(m(x))} \left\{ \widehat{M}(\nu, x)  -  \sup_{m\in m(x)} \widehat{M}(m, x)  \right\} >0 \right ] \to 1,
\]
for any $\zeta>0$ as $n \to \infty$. Therefore, we next show that condition (\ref{improved sufficient condition}) follows from Condition~\ref{condition R1} 
and (\ref{improved sufficient condition 2}) in Theorem \ref{thm::uniform convergence 1}. In particular, by Condition~\ref{condition R1}, for given $\zeta>0$, there exists a $\beth=\beth(\zeta)>0$ such that
\[
\beth =  \inf_{x\in\mathcal{X}}\inf_{\nu\in\Omega:d(\nu,m(x))>\zeta} \left\{M(\nu,x)- \sup_{m\in m(x)}M(m,x)\right\}.,
\]
while (\ref{improved sufficient condition 2}) states that
\[
\sup_{x\in\mathcal{X}}\sup_{\nu\in\mathcal{M}} \left| \widehat{M}(\nu, x) -M(\nu,x) \right| =o_p(1) .
\]

Let
\[
\widetilde{M}(\nu, x)=\widehat{M}(\nu, x) - \sup_{m\in m(x)}\widehat{M}(m, x)-M(\nu,x)+ \sup_{m\in m(x)}M(m,x).
\]
From the definitions of $\widehat{M}(\nu, x)$ and $\widetilde{M}(\nu, x)$, we have
\[
\widehat{M}(\nu, x)- \widehat{M}(m(x), x) = \widetilde{M}(\nu, x) +  M(\nu,x)- \sup_{m\in m(x)}M(m,x).
\]
Therefore, 
\begin{align*}
&\inf_{x \in \mathcal{X}} \inf_{\nu \notin B_\zeta(m(x))} \left\{\widehat{M}(\nu, x)  - \sup_{m\in m(x)}\widehat{M}(m, x) \right\} \\
\geq & \inf_{x \in \mathcal{X}} \inf_{\nu \notin B_\zeta(m(x))}  \widetilde{M}(\nu, x) +  \inf_{x \in \mathcal{X}} \inf_{\nu \notin B_\zeta(m(x))} \left\{M(\nu,x)- \sup_{m\in m(x)} M(m,x)\right\} \\
\geq & \beth- \sup_{x \in \mathcal{X}} \sup_{\nu \in \mathcal{M}} | \widetilde{M}(\nu, x) |.
\end{align*}
\begin{align*}
    \sup_{x \in \mathcal{X}} \sup_{\nu \in \mathcal{M}} \left| \widetilde{M}(\nu, x) \right| & \leq \sup_{x\in\mathcal{X}}\sup_{\nu\in\mathcal{M}} \left\{ \left| \widehat{M}(\nu, x) -M(\nu,x) \right| + \left|   \sup_{m\in m(x)}\widehat{M}(m, x)-  \sup_{m\in m(x)} M(m,x)\right|\right\}\\
    \\
    &\leq \sup_{x\in\mathcal{X}}\sup_{\nu\in\mathcal{M}} \left| \widehat{M}(\nu, x) -M(\nu,x) \right| + \sup_{x\in\mathcal{X}}\sup_{m\in m(x)} \left| \widehat{M}(m, x) -M(m,x) \right|\\
    &\leq 2 \sup_{x\in\mathcal{X}}\sup_{\nu\in\mathcal{M}} \left| \widehat{M}(\nu, x) -M(\nu,x) \right|.
\end{align*}
It then follows from (\ref{improved sufficient condition 2}) that $\sup_{x \in \mathcal{X}} \sup_{\nu \in \mathcal{M}} \left| \widetilde{M}(\nu, x) \right| = o_p(1)$, and therefore (\ref{improved sufficient condition}) can be verified.
\end{proof}

\section{Proof of results in Section \ref{sect::regression uniform conv}}\label{supp sect::proof2}

\begin{proof}[Proof of Proposition \ref{prop::convex Euclidean assump R0}]
For the first part (i) of Condition~\ref{condition R0} that for each $x\in\mathcal{X}$, as $\Omega$ be a convex subset of the Euclidean space with the induced Euclidean distance $d_{\rm IE}$, $\widehat{m}(x)$ exists and is further unique, this follows by Proposition 1 in \cite{petersen2019frechet}. Now, choose $\mathcal{C}_m=\Omega$ for each $m\in m(x)$ and $d_\mathcal{C}$ to be the induced Euclidean distance, which shows the second part (ii) in Condition~\ref{condition R0} holds as $\Omega$ is a convex subset of the Euclidean space . 

We then aim to verify the third part (iii) of Condition~\ref{condition R0} via establishing the global convexity condition on the empirical cost function $\widehat{M}(\nu,x)$. Recall that we now have $\widehat{M}(\nu,x)=\widehat{R}(\nu,x)$. Denote $\|\cdot\|_{\rm IE}$ be the induced Euclidean norm and $\langle \cdot,\cdot\rangle_{\rm IE}$ be the standard inner product on the ambient space, one can see that 
\begin{align*}
    \widehat{M}(\nu,x) =& \widehat{R}(\nu;x) = n^{-1}\sum_{i=1}^n\widehat{w}(X_i,x)\|\nu-Y_i\|_{\rm IE}^2\\
    =&\left[ n^{-1}\sum_{i=1}^n\widehat{w}(X_i,x) \right] \|\nu\|_{\rm IE}^2 - 2\left\langle\nu,n^{-1}\sum_{i=1}^n\widehat{w}(X_i,x)Y_i\right\rangle_{\rm IE} \\
    &+ n^{-1}\sum_{i=1}^n\widehat{w}(X_i,x)\|Y_i\|_{\rm IE}^2 .
\end{align*}
Let $\bar{w}_1=n^{-1}\sum_{i=1}^n\widehat{w}(X_i,x)$ and $\bar{w}_2=n^{-1}\sum_{i=1}^n\widehat{w}(X_i,x)Y_i$. Note that for Models~\ref{exmp::global F regression}--\ref{exmp::deep F regression}, one always gets $\bar{w}_1=1$, which, together with the above result, implies that
\begin{align*}
     \widehat{M}(\nu,x) =& \|\nu\|_{\rm IE}^2 - 2\langle\nu,\bar{w}_2\rangle_{\rm IE} + C\\
     = & \|\nu-\bar{w}_2+\bar{w}_2\|_{\rm IE}^2 - 2\langle\nu,\bar{w}_2\rangle_{\rm IE} + C\\
     = & \|\nu-\bar{w}_2\|_{\rm IE}^2 + \|\bar{w}_2\|_{\rm IE}^2 + C
\end{align*}
where $C$ is constant not depending on $\nu$. Thus one can conclude that $\widehat{M}(\nu,x)$ is convex in $\nu$. This gives that $\widehat{m}(x)$ exists and the third part of Condition~\ref{condition R0} holds for Models~\ref{exmp::global F regression}--\ref{exmp::deep F regression}. This completes the proof that Condition~\ref{condition R0} holds for $(\Omega,d_{\rm IE})$.
\end{proof}

\begin{proof}[Proof of Proposition \ref{prop::Riemannian space assump R0}]
    Before presenting the proof, we first introduce some notions. The global injectivity radius of a Riemannian manifold $\Omega$, ${\rm inj}(\Omega)$, is defined as the infimum of the local injectivity radii over all points in $\Omega$ as ${\rm inj}(\Omega) = \inf_{p \in \Omega} {\rm inj}(p)$. Here, ${\rm inj}(p)$ is the local injectivity radius at a point $p$ on $\Omega$ and is the largest radius $\zeta>0$ for which the exponential map ${\rm Exp}_p : B_\zeta(p) \to \Omega$ is a diffeomorphism.

    Note that a compact Riemannian manifold is a convex metric space and thus the second part (ii) of Condition~\ref{condition R0} holds by choosing $\mathcal{C}_m=\Omega$ for each $m\in m(x)$ and $d_\mathcal{C}=d$. Recall that the distance $d(\cdot, \cdot)$ is continuous, and thus the empirical cost function $\widehat{M}(\cdot,x)$ is also continuous for every $x\in\mathcal{X}$. By the extreme value theorem, a continuous function on a compact space is bounded and guaranteed to attain its global minimum. This gives that the first part (i) of Condition~\ref{condition R0} holds.

    We then aim to derive the third part (iii) of Condition~\ref{condition R0}, which can be verified by showing the differentiability of $\widehat{M}(\nu,x)$ around each $m\in m(x)$. To see this, note that the differentiability of $\widehat{M}(\nu,x)$ around each $m$ ensures that $\widehat{M}(\nu,x)$ is not strictly geodesic concave around $m$. This implies that, for each $m\in m(x)$, one can pick the $\widehat{m}^*$ such that $0<d(\widehat{m}^*,m)\leq d_(\widehat{m},m)$ for any other $\widehat{m}\in \widehat{m}(x)$, there exists $\widetilde{m}\notin\{m,\widehat{m}^*\}$ such that $d_\mathcal{C}(m,\widetilde{m})+d_\mathcal{C}(\widetilde{m},\widehat{m}^*)=d_\mathcal{C}(m,\widehat{m}^*)$ and 
    \[
    \widehat{M}(\widetilde{m},x) \leq \delta(x) \widehat{M}(\widehat{m}^*,x) + \{1-\delta(x)\} \widehat{M}(m,x)
    \]
    holds for all $x\in\mathcal{X}$ where $\delta(x)=d(m,\widetilde{m})/d(m,\widehat{m}^*)$. The sufficient condition of the differentiability of $\widehat{M}(\nu,x)$ around each $m\in m(x)$ is that none of $Y_1,Y_2,\ldots,Y_n$ and $\widehat{m}(x)$ will meet at the cut locus. Observe that when the support of $Y_1,Y_2,\ldots,Y_n$ is contained within a ball $B(p, r) \subset \Omega$ with $r < 2^{-1}{\rm inj}(\Omega)$, together with the positive weights $\widehat{w}(X_i,x)$ for any $i=1,2,\ldots,n$ and $x\in\mathcal{X}$, gives that each $\widehat{m}\in \widehat{m}(x)$ also locates within the ball $B(p, r)$. Thus one can conclude that Models~\ref{exmp::kernel F regression}--\ref{exmp::deep F regression} satisfy the third part (iii) of Condition~\ref{condition R0}.
    
    We then evaluate the situation where some of the weights can be negative. To ensure that each $m\in m(x)$ avoids the cut loci of $Y_1,Y_2,\ldots,Y_n$ for any $x\in\mathcal{X}$, one needs 
    \begin{align}\label{ineq::prop::Riemannian space 1}
    \widehat{M}(q,x)>\widehat{M}(p,x)    
    \end{align} 
    for any $d(p,q)={\rm inj}(\Omega)$. Note that by the triangle inequality, we obtain $d(Y_i,p)\leq r$, $d(Y_i,q)\geq {\rm inj}(\Omega)-r$, and $d(Y_i,q)\leq {\rm inj}(\Omega)+r$ for every $i=1,2,\ldots,n$. This, together with~(\ref{ineq::prop::Riemannian space 1}), gives 
    \[
    W^+ ({\rm inj}(\Omega)-r)^2 - W^-({\rm inj}(\Omega)+r)^2 > W^+r^2,
    \]
    where $W^+=\sum_{i\in\mathcal{I}^+}\widehat{w}(X_i,x)$ and $W^-=\sum_{i\notin\mathcal{I}^+}|\widehat{w}(X_i,x)|$ with $\mathcal{I}^+=\{i\in\{1,2,\ldots,n\}:\widehat{w}(X_i,x)\geq 0\}$. The above result holds when $W^-<\left[\left\{ {\rm inj}(\Omega)\right\}^2-2r{\rm inj}(\Omega)\right]/(4r{\rm inj}(\Omega)+r^2)$. This completes the proof.

\end{proof}

\begin{proof}[Proof of Proposition \ref{prop::Riemannian space extrinsic assump R0}]
    Following the the proof of Proposition~\ref{prop::Riemannian space assump R0}, a compact Riemannian manifold is a convex metric space with the corresponding intrinsic metric $d_\Omega$ and thus the second part (ii) of Condition~\ref{condition R0} holds by choosing $d_\mathcal{C}=d_\Omega$. Moreover, for $d=d_{\rm IE}$, the empirical cost function $\widehat{M}(\cdot,x)$ is also continuous for every $x\in\mathcal{X}$. Then applying the same techniques to those used in the proof of Proposition~\ref{prop::Riemannian space assump R0}, the firth part (i) of Condition~\ref{condition R0} holds. The third part (iii) of Condition~\ref{condition R0} is verified by borrowing the similar arguments to those used in the proof of Proposition~\ref{prop::convex Euclidean assump R0}. This completes the proof.  
\end{proof}

\begin{proof}[Proof of Proposition \ref{prop::Hadamard space assump R0}]
    First note that a Hadamard space is a convex metric space and thus the convex subset requirement holds in the second part (ii) of Condition~\ref{condition R0} by choosing $\mathcal{C}_m=\Omega$ for each $m\in m(x)$ and $d_\mathcal{C}=d$. The existence of the sample generalized Fr\'{e}chet mean will be verified together with the global convexity of the cost function below.

    Let $\nu_1,\nu_2,\nu_3\in\Omega$ and let $\gamma:[0,1]\to \Omega$ denote the geodesic between $\nu_1$ and $\nu_2$. As $(\Omega,d)$ is a Hadamard space, the following inequality holds
    \[
    d^2(\gamma(t),\nu_3)\leq t d^2(\nu_2,\nu_3) + (1-t) d^2(\nu_1,\nu_3) - t(1-t) d^2(\nu_1,\nu_2),
    \]
    for $t\in[0,1]$. This indicates that, for each $m\in m(x)$ and each $\widehat{m}\in \widehat{m}(x)$, we always get 
    \begin{align}\label{ineq::cat 0 ineq}
        d^2(\widetilde{m},Y_i)\leq \delta(x) d^2(\widehat{m},Y_i) + \{1-\delta(x)\} d^2(m,Y_i) - t(1-t) d^2(m,\widehat{m}),
    \end{align}
    for $i=1,2,\ldots,n$ by borrowing the notations in Condition~\ref{condition R0}. Recall that for Models~\ref{exmp::kernel F regression}--\ref{exmp::deep F regression}, one can always write $\widehat{M}(\nu,x)=n^{-1}\sum_{i=1}^n \widehat{w}_i(x) d^2(\nu,Y_i)$ with $\sum_{i=1}^n \widehat{w}_i(x)=1$. Thus, to verify that
    \[
    \widehat{M}(\widetilde{m},x) \leq \delta(x) \widehat{M}(\widehat{m},x) + \{1-\delta(x)\} \widehat{M}(m,x)
    \]
    in Condition~\ref{condition R0} holds, it is sufficient to show that
    \[
    \sum_{i=1}^n \widehat{w}_i(x) d^2(\widetilde{m},Y_i) \leq \delta(x)\sum_{i=1}^n \widehat{w}_i(x) d^2(\widehat{m},Y_i) + \{1-\delta(x)\}\sum_{i=1}^n \widehat{w}_i(x)d^2(m,Y_i) .
    \]
    Observe that when $d(\widehat{m},m)=0$, the inequality above holds in equality. Thus we only need to verify the inequality above by considering $d(\widehat{m},m)>0$. Let 
    \[
    r_i(x)=\delta(x)d^2(\widehat{m},Y_i)+\{1-\delta(x)\}d^2(m,Y_i) - d^2(\widetilde{m},Y_i).
    \]
    It is sufficient to show that $\sum_{i=1}^n \widehat{w}_i(x)r_i(x)\geq 0$. Note that by (\ref{ineq::cat 0 ineq}), we obtain
    \begin{align}\label{ineq::ri(x)}
        r_i(x) \geq  t(1-t) d^2(m,\widehat{m})\geq C > 0,
    \end{align}
    for some positive constant $C$ as $\widetilde{m}\notin\{m,\widehat{m}\}$. Note that for Models~\ref{exmp::kernel F regression}--\ref{exmp::deep F regression}, the weights $\widehat{w}_i(x) \geq 0$ and thus the first and the third parts of Condition~\ref{condition R0} hold, which completes the proof.
    
\end{proof}

\begin{proof}[Proof of Proposition \ref{prop::Hadamard manifold assump R0}]
    Note that a Hadamard manifold is a special case of a Hadamard space and is a complete and smooth manifold. Therefore, following arguments analogous to those used in the proofs of Propositions~\ref{prop::Riemannian space assump R0} and \ref{prop::Hadamard space assump R0}, we only need to verify Condition~\ref{condition R0} (iii).

    To this end, we first choose $d_\mathcal{C}$ and define a geodesic $\gamma:[0,1]\to \Omega$ where $\gamma(0)=r(x)$ and $\gamma(0)=\widehat{r}(x)$. Thus, Condition~\ref{condition R0} (iii) reduces to show that there exist a $t\in(0,1)$ such that
    \[
    \widehat{R}(\gamma(t), x) \leq t \widehat{R}(\gamma(1), x) + (1-t) \widehat{R}(\gamma(0), x).
    \]
    Let $\widehat{G}(t,x)=\widehat{R}(\gamma(t), x)$. Then, it is sufficient to show $\widehat{G}(t,x)$ is partially differentiable with respect to $t\in(0,1)$. As a Hadamard manifold is a complete, simply connected Riemannian manifold with non-positive sectional curvature everywhere, the cut locus is empty and the injectivity radius is infinite. Given the special form of $\widehat{G}(t,x)$ which is a weighted sum of the squared distances, $\widehat{G}(t,x)$ is partially differentiable with respect to $t\in(0,1)$. This completes the proof.
\end{proof}

\begin{proof}[Proof of Theorem \ref{prop::global uniform conv}]
    First observe that 
    \begin{align}\label{ineq:: one sided H distance}
    \sup_{x\in\mathcal{X}}d(\widehat{r}(x),r(x)) \leq \sup_{x\in\mathcal{X}}d(\widetilde{r}(x),r(x)) + \sup_{x\in\mathcal{X}}d(\widehat{r}(x),\widetilde{r}(x)).    
    \end{align} 
    To see this, note that by the triangle inequality, we get $\inf_{r\in r(x)} d(\widehat{r},r) \leq d(\widehat{r},\widetilde{r}) + \inf_{r\in r(x)} d(r,\widetilde{r})$, which gives $\inf_{r\in r(x)} d(\widehat{r},r) \leq d(\widehat{r},\widetilde{r}) + \sup_{\widetilde{r}\in \widetilde{r}(x)} \inf_{r\in r(x)} d(r,\widetilde{r})$. Thus one has $\inf_{r\in r(x)} d(\widehat{r},r) \leq  \inf_{\widetilde{r}\in \widetilde{r}(x)}  d(\widehat{r},\widetilde{r}) + \sup_{\widetilde{r}\in \widetilde{r}(x)} \inf_{r\in r(x)} d(r,\widetilde{r})$ via taking the infimum over all $\widetilde{r}\in \widetilde{r}(x)$. This leads to $\sup_{\widehat{r}\in \widehat{r}(x)}\inf_{r\in r(x)} d(\widehat{r},r) \leq  \sup_{\widehat{r}\in \widehat{r}(x)}\inf_{\widetilde{r}\in \widetilde{r}(x)}  d(\widehat{r},\widetilde{r}) + \sup_{\widetilde{r}\in \widetilde{r}(x)} \inf_{r\in r(x)} d(r,\widetilde{r})$, and thus (\ref{ineq:: one sided H distance}) holds.

    We then aim to show that $\sup_{x\in\mathcal{X}}d(\widehat{r}(x),\widetilde{r}(x))=o_p(1)$. Recall that for the global Fr\'{e}chet regression in Model~\ref{exmp::global F regression}, $\widetilde{R}(\nu,x)=\mathbb{E}\left[ w(X,x) d^2(\nu,Y) \right]$ where $w(X,x)=1+(X-\mu)^\top \Sigma^{-1}(x-\mu)$, $\mu=\mathbb{E}(X)$ and $\Sigma=\mathbb{E}[(X-\mu)(X-\mu)^\top]$, while $\widehat{R}(\nu,x)=\sum_{i=1}^n \widehat{w}(X_i,x) d^2(\nu,Y_i)$ where  $\widehat{w}(X_i,x)=n^{-1}+n^{-1}(X_i-\widehat{\mu})^\top \widehat{\Sigma}^{-1} (x-\widehat{\mu})$ with $\widehat{\mu}=n^{-1}\sum_{i=1}^nX_i$ and $\widehat{\Sigma}=n^{-1}\sum_{i=1}^n (X_i-\widehat{\mu})(X_i-\widehat{\mu})^\top$. By Theorem \ref{thm::uniform convergence 2}, it is sufficient to show that
    \[
    \sup_{x\in\mathcal{X}}\sup_{\nu\in\Omega} \left| \widehat{R}(\nu, x) -\widetilde{R}(\nu,x) \right| =o_p(1) .
    \]
    Define 
    \[
    \bar{R}_G(\nu,x)=n^{-1}\sum_{i=1}^n \{1+(X_i-\mu)^\top \Sigma^{-1}(x-\mu)\}d^2(\nu,Y_i).
    \]
    One can see that 
    \begin{align*}
        \sup_{x\in\mathcal{X}}\sup_{\nu\in\Omega} \left| \widehat{R}(\nu, x) -\widetilde{R}(\nu,x) \right|  \leq &\sup_{x\in\mathcal{X}}\sup_{\nu\in\Omega} \left| \widehat{R}(\nu, x) -\bar{R}_G(\nu,x) \right| \\
        &+ \sup_{x\in\mathcal{X}}\sup_{\nu\in\Omega} \left| \bar{R}_G(\nu, x) -\widetilde{R}(\nu,x) \right|.
    \end{align*}
    Thus, in the remainder of this proof, we first show that 
    \begin{align}\label{eq::global uniform conv part 1}
        \sup_{x\in\mathcal{X}}\sup_{\nu\in\Omega} \left| \widehat{R}(\nu, x) -\bar{R}_G(\nu,x) \right|=o_p(1),
    \end{align}
    and then we establish that
    \begin{align}\label{eq::global uniform conv part 2}
        \sup_{x\in\mathcal{X}}\sup_{\nu\in\Omega} \left| \bar{R}_G(\nu, x) -\widetilde{R}(\nu,x) \right|=o_p(1).
    \end{align}

    \textbf{Step I}. Observe that 
    \begin{align*}
        &\left| \widehat{R}(\nu, x) -\bar{R}_G(\nu,x) \right| \\
        =& \left| n^{-1}\sum_{i=1}^n \left\{(X_i-\widehat{\mu})^\top \widehat{\Sigma}^{-1}(x-\widehat{\mu})-(X_i-\mu)^\top \Sigma^{-1}(x-\mu)\right\}d^2(\nu,Y_i) \right|\\
        =& \left|n^{-1}\sum_{i=1}^n A_i(x) d^2(\nu, Y_i)\right|,
    \end{align*}
     where $A_i(x)=(X_i-\widehat{\mu})^\top \widehat{\Sigma}^{-1}(x-\widehat{\mu})-(X_i-\mu)^\top \Sigma^{-1}(x-\mu)$. Rearranging $A_i(x)$ as follows,
     \[
     A_i(x)= (X_i-\mu)^\top \left\{ \widehat{\Sigma}^{-1}(x-\widehat{\mu})- \Sigma^{-1}(x-\mu) \right\} + (\mu-\widehat{\mu})^\top\widehat{\Sigma}^{-1}(x-\widehat{\mu}),
     \]
     we obtain 
     \begin{align}\label{ineq::global uni 1}
         &\sup_{x\in\mathcal{X}}\sup_{\nu\in\Omega} \left| \widehat{R}(\nu, x) -\bar{R}_G(\nu,x) \right|  \nonumber\\
         \leq & \sup_{x\in\mathcal{X}} \| \widehat{\Sigma}^{-1}(x-\widehat{\mu})- \Sigma^{-1}(x-\mu) \|_{\rm E} M_{n,1} \nonumber \\
         &+ \sup_{x\in\mathcal{X}} \|\mu-\widehat{\mu}\|_{\rm E} \|\widehat{\Sigma}^{-1}\|_{\rm op}\| x-\widehat{\mu} \|_{\rm E} M_{n,2},
     \end{align}
     where $\|\cdot\|_{\rm E}$ is the Euclidean norm, $\|\cdot\|_{\rm op}$ is the spectral norm, $M_{n,1}=\sup_{\nu\in\Omega}n^{-1}\sum_{i=1}^n\|X_i-\mu\|_{\rm E}d^2(\nu,Y_i)$, and $M_{n,2}=\sup_{\nu\in\Omega}n^{-1}\sum_{i=1}^nd^2(\nu,Y_i)$. As $\mathcal{X}$ is a compact set, $\widehat{\mu}-\mu=o_p(1)$, and $\widehat{\Sigma}^{-1}-\Sigma^{-1}=o_p(1)$ by the law of large numbers, we obtain that, for some constant $C_1$,
     \begin{align}\label{ineq::global uni 2}
         &\sup_{x\in\mathcal{X}} \| \widehat{\Sigma}^{-1}(x-\widehat{\mu})- \Sigma^{-1}(x-\mu) \|_{\rm E}  \nonumber\\
         =& \sup_{x\in\mathcal{X}} \| (\widehat{\Sigma}^{-1}-\Sigma^{-1})(x-\mu)+ \widehat{\Sigma}^{-1}(\mu-\widehat{\mu}) \|_{\rm E}  \nonumber\\
         \leq & C_1 \| \widehat{\Sigma}^{-1}-\Sigma^{-1} \|_{\rm op}+ \|\widehat{\Sigma}^{-1}\|_{\rm op} \|\mu-\widehat{\mu}\|_{\rm E}  \nonumber\\
         =&o_p(1).
     \end{align}
     Similarly, one can see that 
    \begin{align}\label{ineq::global uni 3}
        \sup_{x\in\mathcal{X}} \|\mu-\widehat{\mu}\|_{\rm E} \|\widehat{\Sigma}^{-1}\|_{\rm op}\| x-\widehat{\mu} \|_{\rm E}=o_p(1).
    \end{align}
    Recall that $\Omega$ is totally bounded, we also have $M_{n,1}=O_p(1)$ and $M_{n,2}=O_p(1)$. This, together with (\ref{ineq::global uni 1}), (\ref{ineq::global uni 2}) and (\ref{ineq::global uni 3}), indicates that (\ref{eq::global uniform conv part 1}) holds.
         
    \textbf{Step II}. It is sufficient to show 
    \begin{align}\label{formula::global conv weak conv}
        \bar{R}_G\rightsquigarrow \widetilde{R}
    \end{align}  
    in $l^{\infty}(\Omega\times \mathcal{X})$, where $\rightsquigarrow$ denotes weak convergence. To see this, note that when (\ref{formula::global conv weak conv}) holds, an application of the continuous mapping theorem (Theorem 1.3.6 in \cite{vdVW96}) gives (\ref{eq::global uniform conv part 2}). 
    
    By Theorem 1.5.4 of \cite{vdVW96}, (\ref{formula::global conv weak conv}) can be established via showing (i) the pointwise convergence, i.e., $\bar{R}_G(\nu, x) -\widetilde{R}(\nu,x)=o_p(1)$ for all $\nu\in\Omega$ and $x\in\mathcal{X}$, and (ii) $\bar{R}_G$ is asymptotically equicontinuity in probability. The pointwise convergence follows by the law of large numbers, noting that $\mathbb{E}\{\bar{R}_G(\nu,x)\}=\widetilde{R}(\nu,x)$. For the asymptotically equicontinuity in probability, we aim to show that for any $\iota>0$, $\nu_1, \nu_2 \in \Omega$ and $x_1, x_2 \in \mathcal{X}$, 
    \begin{align}\label{formula::global conv equi cont}
         \limsup_n \mathbb{P}\left( \sup_{d(\nu_1,\nu_2)<\zeta_1} \sup_{\|x_1-x_2\|_{\rm E}<\zeta_2} \left| \bar{R}_G(\nu_1,x_1)-  \bar{R}_G(\nu_2,x_2) \right| >\iota \right)\to 0,
    \end{align}
    as $\zeta_1,\zeta_2\to0$

    By utilizing similar techniques to those used in (\ref{ineq::global uni 1}), one obtains
    \begin{align*}
    &\vert \bar{R}_G(\nu_1, x_1)  - \bar{R}_G(\nu_2, x_2) \vert  \\
    \leq &\left\vert  n^{-1}\sum_{i=1}^n \left\{ (X_i-\mu)^\top\Sigma^{-1}(x_1-x_2) d^2(\nu_2,Y_i) \right\} \right\vert  \\
    &+ \left\vert  n^{-1}\sum_{i=1}^n \left\{ 1+ (X_i-\mu)^\top\Sigma^{-1}(x_2-\mu)\left\{ d^2(\nu_1,Y_i) - d^2(\nu_2,Y_i) \right\} \right\} \right\vert  \\
    =& O_p\left( \| x_1 - x_2 \|_{\rm E}\right) + O_p\left(d(\nu_1,\nu_2)\right).
    \end{align*}
    This implies that 
    \[
    \sup_{d(\nu_1,\nu_2)<\zeta_1} \sup_{\|x_1-x_2\|_{\rm E}<\zeta_2} \left| \bar{R}_G(\nu_1,x_1)-  \bar{R}_G(\nu_2,x_2) \right| = O_p(\zeta_1) + O_p(\zeta_2),
    \]
    which proves (\ref{formula::global conv equi cont}).

    Combing Step I and Step II verifies the condition (\ref{improved sufficient condition 2}) in Theorem \ref{thm::uniform convergence 2}, and thus we have $\sup_{x\in\mathcal{X}}d(\widehat{r}_G(x),m(x))=o_p(1)$ by Theorem \ref{thm::uniform convergence 2}. This completes the proof.

\end{proof}

\begin{proof}[Proof of Theorem \ref{prop::local uniform conv}]

    Recall that for the local Fr\'{e}chet regression in Model~\ref{exmp::local F regression},
    $\widehat{R}(\nu,x)=\sum_{i=1}^n \widehat{w}(X_i,x) d^2(\nu,Y_i)$ where $\widehat{w}(X_i,x)=n^{-1}\widehat{\sigma}^{-2}K_h(X_i-x)[\widehat{\tau}_2(x)-\widehat{\tau}_1(x)]$ with $\widehat{\tau}_l(x)=n^{-1}\sum_{i=1}^n[K_h(X_i-x)(X_i-x)^l]$ for $l\in\{0,1,2\}$, $\widehat{\sigma}^2=\widehat{\tau}_0(x)\widehat{\tau}_2(x)-\widehat{\tau}_1^2(x)$. By Theorem \ref{thm::uniform convergence 2}, it is sufficient to show that
    \[
    \sup_{x\in\mathcal{X}}\sup_{\nu\in\Omega} \left| \widehat{R}(\nu, x) -R(\nu,x) \right| =o_p(1) .
    \]
    Observe that 
    \begin{align*}
        \sup_{x\in\mathcal{X}}\sup_{\nu\in\Omega} \left| \widehat{R}(\nu, x) -R(\nu,x) \right|  \leq &\sup_{x\in\mathcal{X}}\sup_{\nu\in\Omega} \left| \widehat{R}(\nu, x) -\widetilde{R}_L(\nu,x) \right| \\
        &+ \sup_{x\in\mathcal{X}}\sup_{\nu\in\Omega} \left| \widetilde{R}_L(\nu, x) -R(\nu,x) \right|,
    \end{align*}
    where $\widetilde{R}_L(\nu, x)=\mathbb{E}[w(X,x)d^2(\nu,Y)]$ with $w(X,x)=\sigma^{-2}K_h(X-x)[\tau_2(x)-\tau_1(x)]$, $\tau_l(x)=\mathbb{E}[K_h(X-x)(X-x)^l]$ for $l\in\{0,1,2\}$, $\sigma^2=\tau_0(x)\tau_2(x)-\tau_1^2(x)$.
    By (S.13) from \cite{chen2022uniform}, we have
    \[
    \sup_{x\in\mathcal{X}}\sup_{\nu\in\Omega} \left| \widetilde{R}_L(\nu, x) -R(\nu,x) \right| = O(h^2)
    \]
    as $h\to 0$, which implies that 
     \[
    \sup_{x\in\mathcal{X}}\sup_{\nu\in\Omega} \left| \widetilde{R}_L(\nu, x) -R(\nu,x) \right| = o(1).
    \]
    Thus it is sufficient to show 
    \begin{align*}
        \sup_{x\in\mathcal{X}}\sup_{\nu\in\Omega} \left| \widehat{R}(\nu, x) -\widetilde{R}_L(\nu,x) \right|=o_p(1)
    \end{align*}
    to establish the uniform convergence result. Following similar arguments to those used in Step II in the proof of Theorem \ref{prop::global uniform conv}, we only need to show (i) the pointwise convergence such as $\widetilde{R}(\nu, x) -\widetilde{R}_L(\nu,x)=o_p(1)$ for all $\nu\in\Omega$ and $x\in\mathcal{X}$, and (ii) $\widehat{R}-\widetilde{R}_L$ is asymptotically equicontinuity in probability. Following the proof of Lemma 2 in \cite{petersen2019frechet}, one has $\widehat{R}(\nu, x) -\widetilde{R}_L(\nu,x)=o_p(1)$ for all $\nu\in\Omega$ and $x\in\mathcal{X}$.
    
    For the asymptotically equicontinuity in probability, we aim to show that for any $\iota>0$, 
    \begin{align}\label{formula::local conv equi cont}
         \limsup_n \mathbb{P}\Bigg( \sup_{d(\nu_1,\nu_2)<\zeta_1} \sup_{|x_1-x_2|<\zeta_2} \Big| &\widehat{R}(\nu_1,x_1 )- \widetilde{R}_L(\nu_1,x_1)\nonumber\\
         &-  \widehat{R}(\nu_2,x_2) + \widetilde{R}_L(\nu_2,x_2)\Big| >\iota \Bigg)\to0,
    \end{align}
    as $\zeta_1,\zeta_2\to0$. Observe that 
    \begin{align*}
        &\sup_{d(\nu_1,\nu_2)<\zeta_1} \sup_{|x_1-x_2|<\zeta_2} \left| \widehat{R}(\nu_1,x_1 )- \widetilde{R}_L(\nu_1,x_1)-  \widehat{R}(\nu_2,x_2) + \widetilde{R}_L(\nu_2,x_2)\right| \\
        \leq & \sup_{d(\nu_1,\nu_2)<\zeta_1} \sup_{|x_1-x_2|<\zeta_2} \left| \widehat{R}(\nu_1,x_1 )- \widetilde{R}_L(\nu_1,x_1)-  \widehat{R}(\nu_1,x_2) + \widetilde{R}_L(\nu_1,x_2)\right| \\
        &+ \sup_{d(\nu_1,\nu_2)<\zeta_1} \sup_{|x_1-x_2|<\zeta_2} \left| \widehat{R}(\nu_1,x_2 )- \widetilde{R}_L(\nu_1,x_2)-  \widehat{R}(\nu_2,x_2) + \widetilde{R}_L(\nu_2,x_2)\right| \\
        \leq & \sup_{\nu\in\Omega} \sup_{|x_1-x_2|<\zeta_2} \left| \widehat{R}(\nu,x_1 )- \widetilde{R}_L(\nu,x_1)-  \widehat{R}(\nu,x_2) + \widetilde{R}_L(\nu,x_2)\right| \\
        &+ \sup_{d(\nu_1,\nu_2)<\zeta_1} \sup_{x\in\mathcal{X}} \left| \widehat{R}(\nu_1,x )- \widetilde{R}_L(\nu_1,x)-  \widehat{R}(\nu_2,x) + \widetilde{R}_L(\nu_2,x)\right| .
    \end{align*}
    This indicates that, to verify (\ref{formula::local conv equi cont}), it is sufficient to show for any $\iota>0$,
    \begin{align}\label{formula::local conv equi cont sub1}
         \limsup_n \mathbb{P}\Bigg( \sup_{\nu\in\Omega} \sup_{|x_1-x_2|<\zeta} \Big| &\widehat{R}(\nu,x_1 ) -  \widehat{R}(\nu,x_2) \Big| >\iota \Bigg)\to0,
    \end{align}
    \begin{align}\label{formula::local conv equi cont sub2}
         \limsup_{h\to0}   \sup_{\nu\in\Omega} \sup_{|x_1-x_2|<\zeta} \Big| &\widetilde{R}_L(\nu,x_1 ) -  \widetilde{R}_L(\nu,x_2) \Big| >\iota \to0,
    \end{align}
    and
    \begin{align}\label{formula::local conv equi cont sub3}
         \limsup_n \mathbb{P}\Bigg( \sup_{d(\nu_1,\nu_2)<\zeta} \sup_{x\in\mathcal{X}} \Big| &\widehat{R}(\nu_1,x )- \widetilde{R}_L(\nu_1,x)\nonumber\\
         &-  \widehat{R}(\nu_2,x) + \widetilde{R}_L(\nu_2,x)\Big| >\iota \Bigg)\to0,
    \end{align}
    as $\zeta\to0$. By (S.7) and (S.8) in \cite{chen2022uniform}, (\ref{formula::local conv equi cont sub1}) and (\ref{formula::local conv equi cont sub2}) hold. Therefore, we only need to verify (\ref{formula::local conv equi cont sub3}).

    Note that, as $\mathcal{X}$ is compact and $\Omega$ is totally bounded, 
    \begin{align*}
        &\left| \widehat{R}(\nu_1,x )- \widetilde{R}_L(\nu_1,x)-  \widehat{R}(\nu_2,x) + \widetilde{R}_L(\nu_2,x)\right|\\
        \leq & \left| \widehat{R}(\nu_1,x )-  \widehat{R}(\nu_2,x) \right| + \left| \widetilde{R}_L(\nu_1,x )-  \widetilde{R}_L(\nu_2,x) \right|\\
        = & O_p(d(\nu_1,\nu_2))+ O(d(\nu_1,\nu_2)),
    \end{align*}
    where the last equality holds by utilizing similar techniques to those used in the proof of Lemma 2 in \cite{petersen2019frechet}. This indicates that (\ref{formula::local conv equi cont sub3}) holds and thus we have $\sup_{x\in\mathcal{X}}d(\widehat{r}_L(x),m(x))=o_p(1)$ by Theorem \ref{thm::uniform convergence 2}. This completes the proof.

\end{proof}

\begin{proof}[Proof of Theorem \ref{thm::kernel uniform conv}]
    Recall that for the kernel Fr\'{e}chet regression in Model~\ref{exmp::kernel F regression}, $\widehat{R}(\nu,x)=\sum_{i=1}^n \widehat{w}(X_i,x) d^2(\nu,Y_i)$ where $\widehat{w}(X_i,x)=K_h(X_i-x)/\sum_{j=1}^nK_h(X_j-x)$. By Theorem \ref{thm::uniform convergence 2}, it is sufficient to show that
    \[
    \sup_{x\in\mathcal{X}}\sup_{\nu\in\Omega} \left| \widehat{R}(\nu, x) -R(\nu,x) \right| =o_p(1) .
    \]
    Following the arguments used in the proof of Proposition~\ref{thm::kernel uniform conv}, it is sufficient to show that
    \begin{align}\label{eq::kernel thm uniform 1}
        \sup_{x\in\mathcal{X}}\sup_{\nu\in\Omega} \left| \widehat{R}(\nu, x) -\widetilde{R}_{NW}(\nu,x) \right| =o_p(1), 
    \end{align}
    and
    \begin{align}\label{eq::kernel thm uniform 2}
        \sup_{x\in\mathcal{X}}\sup_{\nu\in\Omega} \left| \widetilde{R}_{NW}(\nu, x) -R(\nu,x) \right| =o(1), 
    \end{align}
    as $h \to 0$ and $nh / \log n \to \infty$ when $n \to \infty$, where $\widetilde{R}_{NW}(\nu, x)=\mathbb{E}[ K_h(X-x) d^2(\nu,Y) ]/\mathbb{E}[ K_h(X-x)]$.

    Note that (\ref{eq::kernel thm uniform 1}) follows by borrowing similar arguments to those used in the proof of Proposition~\ref{thm::kernel uniform conv}, thus we only need to verify (\ref{eq::kernel thm uniform 2}). By the law of total expectation, one gets
    \[
    \widetilde{R}_{NW}(\nu, x) -R(\nu,x) = \frac{\mathbb{E}\left[ K_h(X-x)\left\{ R(\nu,X)-R(\nu,x) \right\} \right]}{\mathbb{E}[K_h(X-x)]}.
    \]
    This result, together with \ref{condition U2}, gives 
    \begin{align*}
        \left|\widetilde{R}_{NW}(\nu, x) -R(\nu,x)\right| \leq  \sup_{|x_1-x_2|\leq C h} |R(\nu,x_1)-R(\nu,x_2)|
    \end{align*}
    for some positive constant $C$. Thus as $h\to 0$, (\ref{eq::kernel thm uniform 2}) holds. This completes the proof.
    
\end{proof}

\begin{proof}[Proof of Theorem~\ref{thm::kNN uniform conv}]
    Recall that for the $k$-NN Fr\'{e}chet regression in Model~\ref{exmp::kNN F regression}, $\widehat{R}(\nu,x)=k^{-1}\sum_{i=1}^n \sum_{i\in\mathcal{I}_k} d^2(\nu,Y_i)$ where $\mathcal{I}_k$ is the index set of the $k$ nearest neighbors to $x$. By Theorem \ref{thm::uniform convergence 2}, it is sufficient to show that
    \[
    \sup_{x\in\mathcal{X}}\sup_{\nu\in\Omega} \left| \widehat{R}(\nu, x) -R(\nu,x) \right| =o_p(1) .
    \]
    Let $H_k(x)$ denote the distance from $x$ to its $k$-th nearest neighbor in $\{X_1, \dots, X_n\}$, note that $\widehat{R}(\nu, x)$ can be represented as
    \[
    \widehat{R}(\nu, x) = \frac{n}{k}  \frac{1}{n} \sum_{i=1}^n \mathbb{I}(d_\mathcal{X}(X_i, x) \le H_k(x)) d^2(\nu, Y_i).
    \]
    As $\left| \widehat{R}(\nu, x) -R(\nu,x) \right|\leq  \left| \widetilde{R}_{kNN}(\nu,x) -R(\nu,x) \right|+ \left| \widehat{R}(\nu, x) -\widetilde{R}_{kNN}(\nu,x) \right|$ where $\widetilde{R}_{kNN}(\nu,x)=k^{-1} \sum_{i \in \mathcal{I}_k(x)} R(\nu, X_i)$. It is sufficient to verify
    \begin{align}\label{formula::kNN uni conv 1}
         \sup_{x\in\mathcal{X}}\sup_{\nu\in\Omega}\left| \widetilde{R}_{kNN}(\nu,x) -R(\nu,x) \right| = o(1),
    \end{align}
    and 
    \begin{align}\label{formula::kNN uni conv 2}
         \sup_{x\in\mathcal{X}}\sup_{\nu\in\Omega}\left|  \widehat{R}(\nu, x) -\widetilde{R}_{kNN}(\nu,x) \right| = o_p(1).
    \end{align}

    Regarding (\ref{formula::kNN uni conv 1}), observe that $\left| \widetilde{R}_{kNN}(\nu,x) -R(\nu,x) \right|\leq  \sup_{d_\mathcal{X}(u, x) \le H_k(x)} \left| R(\nu, u) - R(\nu, x) \right|$. As $\mathcal{X}$ is compact, under \ref{condition U5}, this leads to
    \begin{align}\label{formula::kNN uni conv 1-1}
        \sup_{x\in\mathcal{X}}\sup_{\nu\in\Omega}\left| \widetilde{R}_{kNN}(\nu,x) -R(\nu,x) \right| \leq C_1 \sup_{x\in\mathcal{X}} H_k(x)
    \end{align}
    for some constant $C_1$. Let $g_{\mathcal{X},n}$ denote the empirical measure of the covariates $X_1,X_2,\ldots,X_n$. By construction, $g_{\mathcal{X},n}(B_\mathcal{X}(x, H_k(x))) = k/n$. By \ref{condition U4}, $\mathcal{C}_\mathcal{X}$ is a Vapnik-Chervonenkis-class, which implies that $\sup_{B \in \mathcal{C}_\mathcal{X}} |g_{\mathcal{X},n}(B) - g_\mathcal{X}(B)| = O_p(n^{-1/2})$. Therefore, 
    \[
    \sup_{x \in \mathcal{X}} g_X(B_\mathcal{X}(x, H_k(x)))= O(k/n)
    \]
    almost surely. By \ref{condition U3}, we then obtain $\phi(\sup_{x \in \mathcal{X}} H_k(x)) = O(k/n)$. Because $\phi$ is strictly increasing with $\phi^{-1}(0)=0$, we deduce that $\sup_{x \in \mathcal{X}} H_k(x) = O\left( \phi^{-1}\left(k/n\right) \right)$. As $k/n\to 0$ as $n\to\infty$, these results, together with (\ref{formula::kNN uni conv 1-1}), imply that (\ref{formula::kNN uni conv 1}) holds.

    Regarding (\ref{formula::kNN uni conv 2}), let $D_i(\nu) = d^2(\nu, Y_i) - R(\nu, X_i)$ and $W_n(x, r, \nu) = n^{-1} \sum_{i=1}^n \mathbb{I}(d_\mathcal{X}(X_i, x) \le r) D_i(\nu)$. Note that $|D_i(\nu)| \leq 2C_2$ for some positive constant $C_2$. Consider the  function class
    \[
    \mathcal{F} = \left\{ (z, y) \mapsto \mathbb{I}\left\{d_\mathcal{X}(z, x) \le r \right\} \left\{d^2(\nu, y) - R(\nu, z)\right\} : x \in \mathcal{X}, \nu \in \Omega, r > 0 \right\}.
    \]
    By \ref{condition U4}, the indicators of balls form a Vapnik-Chervonenkis-class.  Consequently, $\mathcal{F}$ is a Donsker class with a finite uniform entropy integral.
    
    Recall that from the above analysis on (\ref{formula::kNN uni conv 1}), $\sup_x H_k(x) \leq r_{max}$ with probability approaching 1, where $r_{max} = O( \phi^{-1}(k/n))$. For any $f \in \mathcal{F}$ restricted to $r \le r_{max}$, one obtains 
    \begin{align*}
        \sup_{f \in \mathcal{F}, r \le r_{max}}\mathbb{E}[f^2] & = \sup_{f \in \mathcal{F}, r \le r_{max}}\mathbb{E}\left[ \left\{ \mathbb{I}(d_\mathcal{X}(X, x) \le r)(d^2(\nu, Y) - R(\nu, X)) \right\}^2 \right]\\
        &\leq C_3 \sup_{x \in \mathcal{X}} g_\mathcal{X}(B_\mathcal{X}(x, r_{max})) \le C_3k/n
    \end{align*}
    for some positive constant $C_3$. This gives 
    \[
    \sup_{x \in \mathcal{X}, \nu \in \Omega, r \le r_{max}} |W_n(x, r, \nu)|  = O_p\left( \frac{\sqrt{k \log n}}{n} \right).
    \]
    Let $S_n(\nu, x) = n W_n(x, H_k(x), \nu)/k$, we obtain $\sup_{x \in \mathcal{X}, \nu \in \Omega} |S_n(\nu, x)| = O_p\left( \sqrt{\log n/k} \right)$. As $k/\log n \to \infty$, $\sup_{x, \nu} |S_n(\nu, x)| = o_p(1)$. This ensures that (\ref{formula::kNN uni conv 2}) hold and thus completes the proof.
\end{proof}

\section{Proof of results in Section \ref{sect::MoM}}\label{supp sect::proof3}
Recall that the weighted Fr\'echet estimator of the $\ell$-th order ($\ell \in\{1,2\}$) at $x\in\mathcal{X}$ is defined as
\begin{equation}\label{e:mom_supp}
\widehat{r}(x,\pmb{\kappa};\ell) = \argmin_{\nu\in\Omega} \widehat{L}^{(\ell)}(\nu; x,\pmb{\kappa}),~~\widehat{L}^{(\ell)}(\nu; x,\pmb{\kappa}) = \sum_{b=1}^B \kappa_b d^\ell(\nu, \widehat{r}_b(x)),
\end{equation}
where $\{\widehat{r}_b(x)\}_{b=1}^B$ are the block-wise Fr\'echet estimators, as defined in \eqref{e:blockwise} and $\pmb{\kappa}=(\kappa_1,\ldots,\kappa_B)^\top$ is the weight vector in the simplex
\begin{equation*}
\mathcal{K} = \left\{\pmb{\kappa}\in \mathbb{R}^B\Big| \sum_{b=1}^B \kappa_b =1 ~\text{and}~ \kappa_b \ge 0 ~ \text{for} ~ b=1,\ldots,B\right\}.
\end{equation*}  
Note that $\widehat{r}(x,\pmb{\kappa};\ell)$ is essentially a weighted Fr{\'e}chet means as a convex combination of $\{\widehat{r}_b(x)\}_{b=1}^B$. It is natural to define the population quantity of $\widehat{r}(x,\pmb{\kappa};\ell)$ as
\begin{equation*}
r(x,\pmb{\kappa};\ell) = \argmin_{\nu\in\Omega} L^{(\ell)}(\nu; x,\pmb{\kappa}),~~L^{(\ell)}(\nu; x,\pmb{\kappa}) = \sum_{b=1}^B \kappa_b d^\ell(\nu, r_b(x)),
\end{equation*}
where $r_b(x)$ is the conditional Fr\'{e}chet mean in the $b$-th block, i.e., 
$$r_b(x) = \argmin_{\nu\in\Omega} \mathbb{E}\{d^2(\nu,Y)|X = x\}.$$

\begin{proof}[Proof of Theorem \ref{theorem::mom converge}] 
Analogously to the proofs of Proposition~\ref{prop::uniform bound from loss to estimator} and Theorem~\ref{thm::uniform convergence 2}, it is then sufficient to examine the uniform convergence of the loss function, i.e.,
\[\sup_{x\in \mathcal{X}, \pmb{\kappa}\in \mathcal{K}}~\sup_{\nu \in \Omega} ~ \left\vert \widehat{L}^{(\ell)}(\nu; x,\pmb{\kappa}) - L^{(\ell)}(\nu; x, \pmb{\kappa})\right\vert = o_p(1).\]
This can be established by showing the weak convergence of $\widehat{L}^{(\ell)}(\nu; x,\pmb{\kappa})$ in $\ell^{\infty}(\Omega\times\mathcal{X}\times \mathcal{K})$. Then, applying the continuous mapping theorem leads to the desired result. The weak convergence of $\widehat{L}^{(\ell)}(\nu; x,\pmb{\kappa})$ can be further established by verifying (i) its pointwise convergence and (ii) $\widehat{L}^{(\ell)}(\nu; x,\pmb{\kappa})$ is asymptotically equicontinuous. 

The point convergence of $\widehat{L}^{(\ell)}(\nu;x,\pmb{\kappa})$ is immediate, applying Slutsky's theorem and the point convergence of $\widehat{r}_b(x)$. It then remains to verify the asymptotic equicontinuity, that is, for any $\iota>0$, as $\zeta_1,\zeta_2,\zeta_3 \to 0$,
\begin{equation*}
\limsup_{n\to \infty} \mathbb{P}\left(\sup_{d(\nu_1,\nu_2)<\zeta_1; \|x_1 - x_2\|<\zeta_2; \|\pmb{\kappa}_1 - \pmb{\kappa}_2\|<\zeta_3} \left|\widehat{L}^{(\ell)}(\nu_1; x_1,\pmb{\kappa}_1) - \widehat{L}^{(\ell)}(\nu_2; x_2,\pmb{\kappa}_2)\right| > \iota\right) = 0.
\end{equation*}

Denote $\text{diam}(\Omega) = \sup_{\nu_1,\nu_2\in \Omega}d(\nu_1,\nu_2)$. It follows that, for $\ell=1$, 
\begin{align*}
&\left|\widehat{L}^{(1)}(\nu_1; x_1,\pmb{\kappa}_1) - \widehat{L}^{(1)}(\nu_2; x_2,\pmb{\kappa}_2)\right| \\
&= \left| \sum_{b=1}^B \kappa_{1b} d(\nu_1,\widehat{r}_b(x_1)) - \sum_{b=1}^B \kappa_{2b} d(\nu_2,\widehat{r}_b(x_2))\right|\\
&\le \left| \sum_{b=1}^B (\kappa_{1b}- \kappa_{2b}) d(\nu_1,\widehat{r}_b(x_1))\right| + \sum_{b=1}^B \kappa_{2b}\left|d(\nu_1,\widehat{r}_b(x_1))- d(\nu_2,\widehat{r}_b(x_2))\right|\\
&\le B\text{diam}(\Omega)\|\pmb{\kappa}_1 - \pmb{\kappa}_2\|
+ \sum_{b=1}^B \kappa_{2b} \left|d(\nu_1,\widehat{r}_b(x_1)) - d(\nu_2,\widehat{r}_b(x_1))\right| \\
&\quad + \sum_{b=1}^B \kappa_{2b} \left|d(\nu_2,\widehat{r}_b(x_1)) - d(\nu_2,\widehat{r}_b(x_2))\right|,
\end{align*}
where the second inequality can be obtained by the Cauchy-Schwarz inequality and $d(\nu_1, \widehat{r}_b(x_1))\le \text{diam}(\Omega)$, and the remaining inequalities are established using the triangle inequality. 
The first term vanishes since $\|\pmb{\kappa}_1 - \pmb{\kappa}_2\| \to 0$ as $\zeta_3 \to 0$. Applying the triangle inequality to the second term yields 
$$\sum_{b=1}^B \kappa_{2b} \left|d(\nu_1,\widehat{r}_b(x_1)) - d(\nu_2,\widehat{r}_b(x_1))\right|\le \sum_{b=1}^B \kappa_{2b} d(\nu_1, \nu_2) =d(\nu_1, \nu_2)<\zeta_1.$$ Similarly, the last term is given by
\begin{align*}
&\sum_{b=1}^B \kappa_{2b} \left|d(\nu_2,\widehat{r}_b(x_1)) - d(\nu_2,\widehat{r}_b(x_2))\right|\\
&\leq \sum_{b=1}^B \kappa_{2b} d(\widehat{r}_b(x_1), \widehat{r}_b(x_2)) \\
&\leq \sum_{b=1}^B \kappa_{2b} \left\{d(\widehat{r}_b(x_1), r_b(x_1))+d(\widehat{r}_b(x_2), r_b(x_2))+d(r_b(x_1), r_b(x_2))\right\}.
\end{align*}
The block-wise Fr\'{e}chet estimators are consistent uniformly over $x\in \mathcal{X}$, i.e., 
$$\sup_{x\in \mathcal{X}} d(\widehat{r}_b(x), r_b(x))= o_p(1).$$
Moreover, we have $d(r_b(x_1), r_b(x_2))\leq C \|x_1 - x_2\|< C\zeta_2$ for some constant $C>0$ due to the continuity of $r_b(x)$ over $x\in \mathcal{X}$. The continuity of $r_b(x)$ is a direct consequence of the equicontinuity condition of $M(\nu, x) = \mathbb{E}[d^2(Y,\mu)|X=x]$, e.g., Condition~\ref{condition U2}, and the Berge's maximum theorem. Hence, we have
$$\sum_{b=1}^B \kappa_{2b} \left|d(\nu_2,\widehat{r}_b(x_1)) - d(\nu_2,\widehat{r}_b(x_2))\right|\leq C \zeta_2 +o_p(1).$$
Consequently, as $\zeta_1,\zeta_2,\zeta_3\to 0$ and $n\to \infty$,
\begin{equation*}
\limsup_{n\to \infty} \mathbb{P}\left(\sup_{d(\nu_1,\nu_2)<\zeta_1; \|x_1 - x_2\|<\zeta_2; \|\pmb{\kappa}_1 - \pmb{\kappa}_2\|<\zeta_3} \left|\widehat{L}^{(1)}(\nu_1; x_1,\pmb{\kappa}_1) - \widehat{L}^{(1)}(\nu_2; x_2,\pmb{\kappa}_2)\right| > \iota\right) = 0.
\end{equation*}

Similarly, for $\ell=2$, we have
\begin{align*}
&\left|\widehat{L}^{(2)}(\nu_1; x_1,\pmb{\kappa}_1) - \widehat{L}^{(2)}(\nu_2; x_2,\pmb{\kappa}_2)\right|\\
& = \left|\sum_{b=1}^B \kappa_{1b} d^2(\nu_1,\widehat{r}_{b}(x_1)) - \sum_{b=1}^B \kappa_{2b} d^2(\nu_2,\widehat{r}_{b}(x_2))\right|\\
&\le \left|\sum_{b=1}^B (\kappa_{1b} -\kappa_{2b}) d^2(\nu_1,\widehat{r}_{b}(x_1))\right| + \sum_{b=1}^B \kappa_{2b}\left|d^2(\nu_1,\widehat{r}_{b}(x_1))- d^2(\nu_2,\widehat{r}_{b}(x_2))\right|\\
& \le B (\text{diam}(\Omega))^2 \|\pmb{\kappa}_1 - \pmb{\kappa}_2\| + 2 \text{diam}(\Omega) \sum_{b=1}^B \kappa_{2b}\left|d(\nu_1,\widehat{r}_{b}(x_1))- d(\nu_2,\widehat{r}_{b}(x_2))\right|,
\end{align*}
where the first inequality is derived by the triangle inequality and the second inequality is obtained by the Cauchy-Schwarz inequality. The second term has been investigated in the case $\ell=1$, which further implies, for $\ell=2$, 
as $\zeta_1,\zeta_2,\zeta_3\to 0$ and $n\to \infty$,
\[
\left|\widehat{L}^{(2)}(\nu_1; x_1,\pmb{\kappa}_1) - \widehat{L}^{(2)}(\nu_2; x_2,\pmb{\kappa}_2)\right| \to 0.
\]
This desired result is immediate.
\end{proof}

The following lemma is a cornerstone for proving Theorem~\ref{theorem::mom deviation}, which extends Lemma 2.1 in \citet{minsker2015geometric} to metrically convex spaces, e.g., examples in Section 2. Intuitively, this lemma asserts that, if a point $z\in \Omega$ is far away from $\widehat{r}^{(1)}(x,\pmb{\kappa})$, e.g., $d(z,\widehat{r}^{(1)}(x,\pmb{\kappa}))> C_{\alpha}(x)\epsilon$, it also has to be far away from at least an proportion of the block-wise Fr\'{e}chet estimators, e.g., $B_1\subseteq\{1,\ldots, B\}$ such that $\sum_{b\in B_1} \kappa_b B> \alpha(x) B$ and $d(z, \widehat{r}_{b}(x))>\epsilon$. 

\begin{lemma}
\label{l::mom geometric discrepancy}
Suppose $\widehat{r}^{(1)}(x, \pmb{\kappa})$ as defined in \eqref{e:mom_supp} uniquely exists almost surely over $x\in \mathcal{X}$ and $\pmb{\kappa}\in \mathcal{K}$, with $\mathcal{K}$ as per \eqref{e:Kappa} and the corresponding block-wise estimators $\{\widehat{r}_{b}(x)\}_{b=1}^B$ as per \eqref{e:blockwise}. For a fixed $x\in \mathcal{X}$ and $\pmb{\kappa}\in \mathcal{K}$, assume that $z\in \Omega$ is such that $d(z,\widehat{r}^{(1)}(x,\pmb{\kappa}))> C_{\alpha}\epsilon$, where $\epsilon>0$, $\alpha\in(0,1/2)$ and $C_{\alpha} = 2(1-\alpha)/\{1-2\alpha\}$. Then, there exists a subset $B_1\subseteq \{1,\ldots, B\}$ such that $d(z, \widehat{r}_{b}(x))> \epsilon$ for all $b\in B_1$ and $\sum_{b\in B_1} \kappa_b > \alpha$.
\end{lemma}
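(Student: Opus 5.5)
We argue by contradiction, following the geometric argument behind Lemma 2.1 of \citet{minsker2015geometric} but using only the triangle inequality, so that no linear or Riemannian structure is needed. Write $\widehat{r}_* := \widehat{r}^{(1)}(x,\pmb{\kappa})$ and $\widehat{L}(\nu) := \widehat{L}^{(1)}(\nu;x,\pmb{\kappa}) = \sum_{b}\kappa_b d(\nu,\widehat{r}_b(x))$. Suppose the conclusion fails. Set $J = \{b : d(z,\widehat{r}_b(x))\le \epsilon\}$. Applying the lemma's condition to $B_1 = J^c$ shows that the total weight of $J^c$ is at most $\alpha$, so $\sum_{b\in J}\kappa_b \ge 1-\alpha$. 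We will show that then $\widehat{L}(z) < \widehat{L}(\widehat{r}_*)$, which contradicts the fact that $\widehat{r}_*$ minimizes $\widehat{L}$.

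To compare the two costs, write $D = d(z,\widehat{r}_*) > C_\alpha\epsilon$.

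For $b\in J$, the reverse triangle inequality gives $d(\widehat{r}_*,\widehat{r}_b) \ge D - d(z,\widehat{r}_b) \ge D-\epsilon$. Hence
\[
d(\widehat{r}_*,\widehat{r}_b) - d(z,\widehat{r}_b) \ge D - 2\epsilon .
\]

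For $b\notin J$, the triangle inequality gives
\[
d(\widehat{r}_*,\widehat{r}_b) - d(z,\widehat{r}_b) \ge -D .
\]

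Weighting and summing these two bounds yields
\[
\widehat{L}(\widehat{r}_*) - \widehat{L}(z) \ge (1-\alpha)(D-2\epsilon) - \alpha D = (1-2\alpha)D - 2(1-\alpha)\epsilon .
\]
This step uses the weight split $\sum_{J}\kappa_b \ge 1-\alpha$ and $\sum_{J^c}\kappa_b \le \alpha$. One caveat: the map $w\mapsto w(D-2\epsilon) - (1-w)D$ is increasing in $w$ only because $D-2\epsilon > -D$. This holds since $D>0$, so the bound is valid.

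The right-hand side is strictly positive exactly when $D > 2(1-\alpha)\epsilon/(1-2\alpha) = C_\alpha\epsilon$, which is our hypothesis. So $\widehat{L}(z) < \widehat{L}(\widehat{r}_*)$, contradicting minimality. Therefore the weight of $\{b: d(z,\widehat{r}_b)>\epsilon\}$ must exceed $\alpha$, and we take $B_1$ to be that set.

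The main obstacle is only to make sure the argument never needs more than the triangle inequality. Minsker's Hilbert-space proof uses directional derivatives, and we replace them by the direct cost comparison above. That comparison needs only that $\widehat{r}_*$ is a minimizer of $\widehat{L}$; uniqueness is not essential. We also need to handle the degenerate cases, such as $z$ coinciding with some $\widehat{r}_b$, and these are covered since all the bounds above are non-strict where needed.
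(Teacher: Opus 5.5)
Your proof is correct and is essentially the paper's argument: both use the triangle inequality to bound $d(\widehat{r}_b(x),z)-d(\widehat{r}_b(x),\widehat{r}^{(1)}(x,\pmb{\kappa}))$ by $D$ for the far blocks and by $2\epsilon-D$ for the near blocks, and then contradict the minimality of $\widehat{r}^{(1)}(x,\pmb{\kappa})$. The paper normalizes by $D$ and simply asserts that the weight split can be taken to be exactly $\alpha$ and $1-\alpha$, whereas you apply the negated conclusion to the full far set and justify the reduction by monotonicity. One small fix: $D-2\epsilon>-D$ is equivalent to $D>\epsilon$ and does not follow from $D>0$; it holds here because $D>C_\alpha\epsilon>2\epsilon$ for $\alpha\in(0,1/2)$.
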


\begin{proof}[Proof of Lemma~\ref{l::mom geometric discrepancy}]
Assume that the implication is not true. Then, for each $x\in\mathcal{X}$, $\pmb{\kappa}\in \mathbb{{K}}$ and $z\in\Omega$ such that $d(z,\widehat{r}^{(1)}(x,\pmb{\kappa}))> C_{\alpha}\epsilon$, there exists a subset $B_1\subseteq\{1,\ldots, B\}$ such that 
$$\sum_{b\in B_1} \kappa_b \leq \alpha \quad \text{and} \quad d(\widehat{r}_{b}(x), z)> \epsilon \quad \text{for all } b\in B_1.$$
That is, if the point $z\in \Omega$ is far away from the weight Fr\'{e}chet estimator, i.e.,$d(z,\widehat{r}^{(1)}(x,\pmb{\kappa}))> C_{\alpha}\epsilon$, the proportion of the block-wise estimator $\widehat{r}_{b}(x)$ such that $d(\widehat{r}_{b}(x), z)> \epsilon$ is no larger than $\alpha$.

Moreover, for each $x\in\mathcal{X}$, $\pmb{\kappa}\in \mathbb{{K}}$ and $z\in\Omega$ such that $d(z,\widehat{r}^{(1)}(x,\pmb{\kappa}))> C_{\alpha}\epsilon$, the block-wise estimators in the complement subset $B_2 = \{1,\ldots, B\} \setminus B_1$ satisfy
$$\sum_{b\in B_2} \kappa_b \ge 1- \alpha \quad \text{and} \quad d(\widehat{r}_{b}(x), z)\leq \epsilon \quad \text{for all } b\in B_2.$$
Without loss of generality, we consider $\sum_{b\in B_1} \kappa_b = \alpha$ and $\sum_{b\in B_2} \kappa_b = 1-\alpha$.
Since $\widehat{r}^{(1)}(x,\pmb{\kappa})$ uniquely minimizes the loss function $\widehat{L}^{(1)}(\nu;x,\pmb{\kappa}) = \sum_{b=1}^B \kappa_b d(\nu, \widehat{r}_{b}(x))$, it follows that, for any $z\in \Omega$ such that $d(z,\widehat{r}^{(1)}(x,\pmb{\kappa}))> C_{\alpha}(x)\epsilon$,
\begin{equation}\label{e:mom loss discrepancy}
\frac{\widehat{L}^{(1)}(z;x,\pmb{\kappa}) - \widehat{L}^{(1)}(\widehat{r}^{(1)}(x,\pmb{\kappa});x,\pmb{\kappa})}{d(z,\widehat{r}^{(1)}(x,\pmb{\kappa}))} = \frac{\sum_{b=1}^B \kappa_b \{d(\widehat{r}_{b}(x), z) - d(\widehat{r}_{b}(x), \widehat{r}^{(1)}(x,\pmb{\kappa}))\}}{d(z,\widehat{r}^{(1)}(x,\pmb{\kappa}))}  > 0.
\end{equation}
Write $A_b(x) = d(\widehat{r}_{b}(x), z) - d(\widehat{r}_{b}(x), \widehat{r}^{(1)}(x,\pmb{\kappa}))$. For $b\in B_1$, the triangle inequality implies that 
$$A_b(x)\leq d(z,\widehat{r}^{(1)}(x,\pmb{\kappa})).$$ 
For $b\in B_2$, applying the triangle inequality again yields
$$d(\widehat{r}_{b}(x), \widehat{r}^{(1)}(x,\pmb{\kappa})) \geq d(z,\widehat{r}^{(1)}(x,\pmb{\kappa})) - d(\widehat{r}_{b}(x),z)> C_{\alpha}\epsilon - \epsilon,$$
which further implies
$$A_b(x) / d(z, \widehat{r}^{(1)}(x,\pmb{\kappa}))< 
\frac{\epsilon - (C_{\alpha}\epsilon - \epsilon)}{C_{\alpha}\epsilon} = \frac{2}{C_{\alpha}}-1.$$
Consequently, \eqref{e:mom loss discrepancy} simplifies to
\begin{align*}
\frac{\widehat{L}^{(1)}(z;x,\pmb{\kappa}) - \widehat{L}^{(1)}(\widehat{r}^{(1)}(x,\pmb{\kappa});x,\pmb{\kappa})}{d(z,\widehat{r}^{(1)}(x,\pmb{\kappa}))} 
&= \frac{\sum_{b\in B_1} \kappa_b A_b(x)}{d(z,\widehat{r}^{(1)}(x,\pmb{\kappa}))} + \frac{\sum_{b\in B_2} \kappa_b A_b(x)}{d(z,\widehat{r}^{(1)}(x,\pmb{\kappa}))}\\
& < \alpha + (1-\alpha)\left(\frac{2}{C_{\alpha}}-1\right) \le 0,
\end{align*}
where the last inequality holds whenever $C_{\alpha}\ge 2(1-\alpha)/(1-2\alpha)$ and $\alpha\in(0,1/2)$, which leads to a contradiction to \eqref{e:mom loss discrepancy}. 
\end{proof}

Note that $C_{\alpha}$ depends only on the proportion $\alpha\in(0,1/2)$, and coincides with the constant obtained for general Banach spaces \citep[see Lemma 2.1(b) of][]{minsker2015geometric}. The following theorem establishes the deviation bound of $\widehat{r}^{(1)}(x, \pmb{\kappa})$.

\begin{proof}[Proof of Theorem \ref{theorem::mom deviation}]
It follows from Lemma~\ref{l::mom geometric discrepancy} that, for every $x\in \mathcal{X}$ and $\pmb{\kappa}\in \mathcal{K}$, if the event $\{d(\widehat{r}^{(1)}(x,\pmb{\kappa}), r(x))>C_{\alpha}\epsilon_x\}$ occurs for some constant $\epsilon_{x}>0$, then $r(x)$ has to be also far away from at least an proportion of the block-wise estimators, i.e.,
\[\mathbb{P}\left\{d(\widehat{r}^{(1)}(x,\pmb{\kappa}), r(x))>C_{\alpha}\epsilon_x \right\}\le \mathbb{P}\left\{\sum_{b=1}^B\kappa_b \mathbb{I}\left\{d(\widehat{r}_b(x), r(x))>\epsilon_x\right\} > \alpha\right\},\]
which can further induce that
\begin{align*}
&\sup_{x\in \mathcal{X}} \mathbb{P}\left\{d(\widehat{r}^{(1)}(x,\pmb{\kappa}), r(x))>C_{\alpha}\epsilon_x \right\} \\
&\leq  \sup_{x\in \mathcal{X}} \mathbb{P}\left\{\sum_{b=1}^B\kappa_b \mathbb{I}\left\{d(\widehat{r}_b(x), r(x))>\epsilon_x\right\} > \alpha\right\}\\
&\leq \mathbb{P} \left\{\sum_{b=1}^B \kappa_b \sup_{x\in \mathcal{X}} \mathbb{I}\left\{d(\widehat{r}_b(x), r(x))>\epsilon\right\} > \alpha\right\},
\end{align*}
where $\epsilon:= \inf_{x\in \mathcal{X}} \epsilon_x$.
Recall that $B_2\subset \{1,\ldots,B\}$ with $\sum_{b\in B_2}\kappa_b = \tau$, is the set of blocks that are arbitrarily corrupted and $B_1 = \{1,\ldots,B\} \setminus B_2$. Therefore, we have
\begin{align*}
    &\sum_{b=1}^B \kappa_b \sup_{x\in \mathcal{X}} \mathbb{I}\left\{d(\widehat{r}_b(x), r(x))>\epsilon\right\} \\
    &= \sum_{b\in B_1} \kappa_b \sup_{x\in \mathcal{X}} \mathbb{I}\left\{d(\widehat{r}_b(x), r(x))>\epsilon\right\} + \sum_{b\in B_2} \kappa_b \sup_{x\in \mathcal{X}} \mathbb{I}\left\{d(\widehat{r}_b(x), r(x))>\epsilon\right\}\\
    &\leq \sum_{b\in B_1} \kappa_b \sup_{x\in \mathcal{X}} \mathbb{I}\left\{d(\widehat{r}_b(x), r(x))>\epsilon\right\} + \tau.
\end{align*}
which implies
\begin{equation*}
    \left\{\sum_{b=1}^B \kappa_b \sup_{x\in \mathcal{X}} \mathbb{I}\left\{d(\widehat{r}_b(x), r(x))>\epsilon\right\} > \alpha\right\} \subseteq \left\{\sum_{b\in B_1} \kappa_b \sup_{x\in \mathcal{X}} \mathbb{I}\left\{d(\widehat{r}_b(x), r(x))>\epsilon\right\} > \alpha - \tau\right\}.
\end{equation*}
Consequently, we obtain
\begin{align*}
\sup_{x\in \mathcal{X}}  \mathbb{P}\left\{d(\widehat{r}^{(1)}(x,\pmb{\kappa}), r(x))>C_{\alpha}\epsilon \right\} 
&\leq \mathbb{P}\left\{\sum_{b\in B_1} \kappa_b \sup_{x\in \mathcal{X}} \mathbb{I}\left\{d(\widehat{r}_b(x), r(x))>\epsilon\right\} > \alpha - \tau\right\}\\
& \leq \mathbb{P}\left\{\sum_{b\in B_1}\kappa_b W_b > \alpha - \tau\right\},
\end{align*}
where $W_b$ for $b\in B_1$ denote independent Bernoulli random variables with $\mathbb{P}(W_b=1) = p$ and $\mathbb{P}(W_b=0) = 1-p$. The desire result then follows from the Chernoff bound that
$$\sup_{x\in \mathcal{X}}  \mathbb{P}\left\{d(\widehat{r}^{(1)}(x,\pmb{\kappa}), r(x))>C_{\alpha}\epsilon \right\}\leq 
\exp\left\{-\psi(\alpha-\tau, p, \pmb{\kappa})\right\},$$
where
$$\psi(\alpha-\tau, p, \pmb{\kappa}) = \sup_{\lambda>0} \left\{\lambda (\alpha-\tau) - \sum_{b\in B_1} \log (1-p + p \exp(\lambda \kappa_b))\right\}.$$
\end{proof}

\begin{proof}[Proof of Corollary~\ref{corollary::mom}]
Note that the median-of-means estimator is a special case of $\widehat{r}(x,\pmb{\kappa};\ell)$ by specifying $\ell=1$, $\pmb{\kappa}=(1,\ldots, 1)^\top/B$ with independent and homogeneous blocks of balanced sample size, i.e., $\mathcal{P}_b=\mathcal{P}$ for all $b\in B_1$ and $n_1 = \ldots = n_B = \lfloor N/B\rfloor$, where $N=\sum_{b=1}^B n_b$. Without loss of generality, we assume $n:=N/B$ is an integer. 
The deviation bound in Theorem~\ref{theorem::mom deviation} then simplifies to
\[
\sup_{x\in\mathcal{X}} \mathbb{P} \left\{d(\widehat{r}^{(1)}(x,\pmb{\kappa}), r(x))>C_{\alpha}\epsilon\right\} \leq \exp\{-\psi(\alpha-\tau, p,\pmb{\kappa})\},
\]
where
\begin{align*}
\psi(\alpha-\tau, p,\pmb{\kappa}) &= B(1-\tau)\left\{\frac{1-\alpha}{1-\tau}\log\left(\frac{1-\alpha}{(1-\tau)(1-p)}\right)
+\frac{\alpha-\tau}{1-\tau}\log\left(\frac{\alpha - \tau}{(1-\tau)p}\right)\right\}.    
\end{align*}
We then rewrite this deviation bound with the following choices of hyperparameters:
\begin{align*}
p(\tau):= \left(1/2 -\tau\right)^2/2,~
\alpha(\tau):= 2p(\tau) + \tau = \tau^2 + 1/4,~
\epsilon(\tau):= D n^{-\upsilon_j}(p(\tau))^{-1},    
\end{align*}
where $D<\infty$ is a positive constant and $\upsilon_j$, $j=1,2$, corresponds the uniform convergence rate for the block-wise Fr\'{e}chet estimators, i.e., $\upsilon_1 = 1/(2(\alpha' - 1))$ for the global Fr\'{e}chet regression as in Theorem \ref{prop::global uniform conv} and $\upsilon_2 = 1/(\alpha+2\beta-3+\eta)$ for the local Fr\'{e}chet regression as in Theorem \ref{prop::local uniform conv}. Hence, we denote $\psi(\alpha-\tau, p,\pmb{\kappa})=B(1-\tau)\iota(\tau)$, where
\begin{equation*}
    \iota(\tau) = \frac{1-\alpha(\tau)}{1-\tau}\log\left(\frac{1-\alpha(\tau)}{(1-\tau)(1-p(\tau))}\right) + \frac{\alpha(\tau) - \tau}{1-\tau}\log\left(\frac{\alpha(\tau) - \tau}{(1-\tau)p(\tau)}\right),
\end{equation*}
and $\iota(\tau) = O((1/2-\tau)^2)$ as $\tau\to 1/2$.
The number of blocks is then chosen as
\begin{align*}
B(\tau,\delta) &:= \left\lfloor \frac{\log(1/\delta)}{(1-\tau)\iota(\tau)}\right\rfloor + 1,
\end{align*}

The exponent on the right-hand side of the deviation bound is given by
\begin{align*}
\psi\left(\alpha-\tau,p,\pmb{\kappa}\right) 
&=  \left( \frac{\log(1/\delta)}{(1-\tau) \iota(\tau)}-c + 1\right)(1-\tau)\iota(\tau)\\
& = \log(1/\delta) + (1-c)(1-\tau)\iota(\tau)\\
& \ge \log(1/\delta),
\end{align*}
for some constant $c\in [0,1)$ and $\iota(\tau)\ge 0$. 
Hence, the deviation bound simplifies to
\[
\sup_{x\in \mathcal{X}}\mathbb{P}\left\{d(\widehat{r}^{(1)}(x,\pmb{\kappa}), r(x))>C_{\alpha}\epsilon \right\} \le \delta.
\]
Next, we proceed to $C_{\alpha}\epsilon$, which is given by
\begin{align*}
C_{\alpha}\epsilon 
& = \frac{2(1-\alpha(\tau))}{1-2\alpha(\tau)} \frac{D n^{-\upsilon_j}}{p(\tau)}\\
& = \frac{3 - 4\tau^2}{1-4\tau^2} \frac{2D}{(1/2 -\tau)^2} \left(\frac{N}{B(\tau,\delta)}\right)^{-\upsilon_j}\\
& =\frac{3 - 4\tau^2}{1-4\tau^2} \frac{2D}{(1/2 -\tau)^2} \{N(1-\tau) \iota(\tau)\}^{-\upsilon_j} \{B(\tau,\delta) (1-\tau) \iota(\tau)\}^{\upsilon_j}\\
& := c(\tau)N^{-\upsilon_j} \{B(\tau,\delta) (1-\tau) \iota(\tau)\}^{\upsilon_j}
\end{align*}
where
\begin{equation*}
    c(\tau) = \frac{3 - 4\tau^2}{1-4\tau^2} \frac{2D}{(1/2 -\tau)^2} \{(1-\tau) \iota(\tau)\}^{-\upsilon_j},
\end{equation*}
and $c(\tau)=O((1/2-\tau)^{-2\upsilon_j+3})$ as $\tau\to 1/2$. To further simplify $C_{\alpha}\epsilon$, we now focus on the term $\{B(\tau,\delta) (1-\tau) \iota(\tau)\}^{\upsilon_j}$, where
\begin{align*}
B(\tau,\delta) (1-\tau) \iota(\tau) &= \left(\left\lfloor \frac{\log(1/\delta)}{(1-\tau)\iota(\tau)}\right\rfloor + 1\right)(1-\tau) \iota(\tau)\\
&\le \left( \frac{\log(1/\delta)}{(1-\tau)\iota(\tau)}+ 1\right)(1-\tau) \iota(\tau)\\
& = \log(1/\delta) + (1-\tau)\iota(\tau).
\end{align*}
Note that the first term of $\iota(\tau)$ is negative, which implies that
\begin{align*}
B(\tau,\delta) (1-\tau) \iota(\tau) &\leq \log(1/\delta) + (1-\tau) \frac{\alpha(\tau) - \tau}{1-\tau}\log\left(\frac{\alpha(\tau) - \tau}{(1-\tau)p(\tau)}\right)\\
& = \log(1/\delta) + (2p(\tau))\log\left(\frac{2}{1-\tau}\right)\\
& \leq \log(1/\delta) + \log\left(\frac{2}{1-\tau}\right),
\end{align*}
where the last inequality is obtained due to $p\leq 1/2$. Consequently, we have
\begin{align*}
    C_{\alpha}\epsilon \leq c(\tau)N^{-\upsilon_j} \{\log(1/\delta) + \log(2/(1-\tau))\}^{\upsilon_j}.
\end{align*}

It then remains to exmaine the choices of the hyperparameters are valid. First, it is straightforward to see that $\alpha(\tau) = \tau^2 + 1/4<1/2$ and $p(\tau)<\alpha(\tau)$ for all $\tau\in[0,1/2)$. Moreover, we have
\begin{equation*}
\frac{\alpha(\tau) - p(\tau)}{1 - p(\tau)} = \frac{2p(\tau) + \tau - p(\tau)}{1 - p(\tau)} = \frac{p(\tau)}{1 - p(\tau)} + \frac{\tau}{1 - p(\tau)} > \tau.
\end{equation*}
Last, it follows from the Markov inequality that
\begin{align*}
\mathbb{P}\left\{\sup_{x\in\mathcal{X}} d(\widehat{r}_b(x),r(x))>\epsilon\right\}
\le \frac{\mathbb{E}[\sup_{x\in\mathcal{X}}d(\widehat{r}_b(x), r(x))]}{\epsilon},
\end{align*}
where $\mathbb{E}[\sup_{x\in\mathcal{X}}d(\widehat{r}_b(x), r(x))] = O(a_n)$ with $a_n = n^{-1/(2(\alpha' - 1))}$ for the global Fr{\'e}chet regression as in Theorem 2 of \cite{petersen2019frechet} and $a_n = n^{-1/(\alpha+2\beta-3+\eta)}$ for the local Fr{\'e}chet regression as in Theorem 2 of \cite{chen2022uniform}. Consequently, recalling that $\epsilon(\tau):= D n^{-\upsilon_j}(p(\tau))^{-1}$, the right-hand side of the above inequality simplifies for a sufficiently large $n$, where everything but $p$ cancels, i.e.,
\begin{equation*}
\mathbb{P}\left\{\sup_{x\in\mathcal{X}} d(\widehat{r}_b(x),r(x))>\epsilon\right\} \le p.    
\end{equation*}
\end{proof}

\section{Proof of results in Section \ref{sect::level set}}\label{supp sect::proof4}

\begin{proof}[Proof of Proposition \ref{prop::level set est}]
    Observe that under \ref{condition L1}, one has
    \[
    \sup_{x\in\mathcal{X}}\left| \Lambda[\{\widehat{m}_{l}(x)\}_{l=1}^L]- \Lambda[\{m_l(x)\}_{l=1}^L] \right| \leq D_2\sum_{l=1}^L\sup_{x\in\mathcal{X}}d^{\lambda}(\widehat{m}_l(x),m_l(x)) = o_p(1),
    \]
    as $\lambda>1$ and $\sup_{x\in\mathcal{X}}d(\widehat{m}_{l}(x),m_{l}(x)) = o_p(1)$ for $l=1,\ldots,L$. Let $\widehat{\Delta}(x)=\Lambda\{\widehat{m}(x)\}$, the above result shows that $\sup_{x\in\mathcal{X}}\left| \widehat{\Delta}(x) - \Delta(x)\right|=o_p(1)$. Therefore, one can construct an event 
    \[
    \mathcal{A}_n=\{ \sup_{x\in\mathcal{X}}\left| \widehat{\Delta}(x) - \Delta(x)\right| < \epsilon\}
    \]
    with $\mathbb{P}(\mathcal{A}_n)\to 1$ as $n\to\infty$ for any $\epsilon>0$. When $\mathcal{A}_n$ is true, we get
    \[
    \{ x\in\mathcal{X}:\Delta(x)>\rho+\epsilon \} \subset \widehat{L}(\rho) \subset \{ x\in\mathcal{X}:\Delta(x)>\rho-\epsilon \}.
    \]
    This ensures that
    \[
    \mathbb{P}\left\{\liminf_{n\to\infty}\widehat{L}(\rho)  \supset {\rm int}\{L(\rho)\}\right\}\to 1
    \]
    and
    \[
    \mathbb{P}\left\{\limsup_{n\to\infty}\widehat{L}(\rho) \subset \bar{L}(\rho)\right\}\to 1,
    \]
    which completes the proof.
\end{proof}

\begin{proof}[Proof of Proposition \ref{prop::exceedance}]
    Let $\epsilon_n=\sup_{x\in\mathcal{X}}|\widehat{\Delta}(x)-\Delta(x)|$ and thus $\epsilon_n=o_p(1)$. Following the arguments used in the proof of Proposition \ref{prop::level set est}, one gets
    \begin{equation*}
        \{ x\in\mathcal{X}:\Delta(x)>\rho+\epsilon_n \} \subset \{ x\in\mathcal{X}:\widehat{\Delta}(x)>\rho\} \subset \{ x\in\mathcal{X}:\Delta(x)>\rho-\epsilon_n \},
    \end{equation*}
    and 
    \begin{equation*}
        \{ x\in\mathcal{X}:\widehat{\Delta}(x)>\rho+\epsilon_n \} \subset \{ x\in\mathcal{X}:\Delta(x)>\rho\} \subset \{ x\in\mathcal{X}:\widehat{\Delta}(x)>\rho-\epsilon_n \}.
    \end{equation*}
    This implies that
    \begin{align*}
        \lambda(\Delta(x)>\rho+\epsilon_n) \leq \lambda(\widehat{\Delta}(x)>\rho) \leq   \lambda(\Delta(x)>\rho-\epsilon_n),
    \end{align*}
and
    \begin{align*}
        \lambda(\Delta(x)>\rho+2\epsilon_n) &\leq  \lambda(\widehat{\Delta}(x)>\rho+\epsilon_n) \leq  \lambda(\Delta(x)>\rho) \\
        &\leq  \lambda(\widehat{\Delta}(x)>\rho-\epsilon_n) \leq \lambda(\Delta(x)>\rho-2\epsilon_n).
    \end{align*}
Then, for any $\rho$, we obtain
    \begin{align*}
        |\lambda(\widehat{\Delta}(x)>\rho)-\lambda(\Delta(x)>\rho)|\leq & \max \Big\{  \lambda(\widehat{\Delta}(x)>\rho-\epsilon_n)- \lambda(\widehat{\Delta}(x)>\rho+\epsilon_n) ,\\
        &\lambda(\Delta(x)>\rho-\epsilon_n)-\lambda(\Delta(x)>\rho+\epsilon_n)\Big\}\\
        \leq & \lambda(\Delta(x)>\rho-2\epsilon_n)-\lambda(\Delta(x)>\rho+2\epsilon_n).
    \end{align*}
    Note that $\lambda(\Delta(x)>\rho)$ is non-increasing with respect to $\rho$ and further one can see that $\lambda(\Delta(x)>\rho-2\epsilon_n)-\lambda(\Delta(x)>\rho+2\epsilon_n) \leq \lambda(\{\varsigma:\rho-2\epsilon_n<\Delta(x)<\rho+2\epsilon_n\})$. This gives 
    \begin{align*}
        \sup_{\rho\in \bar{\mathbb{R}}}\left|\widehat{\ell}(\rho)-\ell(\rho)\right|\leq \sup_{\rho\in \bar{\mathbb{R}}}\frac{\lambda(\{\varsigma:\rho-2\epsilon_n<\Delta(x)<\rho+2\epsilon_n\})}{\lambda(\mathcal{X})}.
    \end{align*}
    Now let $K(\varepsilon) = \sup_{\rho \in \bar{\mathbb{R}}} \lambda(\{x \in \mathcal{X} : |\Delta(x) - \rho| \leq \varepsilon\})$ which goes to $0$ as $\varepsilon \to 0$. As $\epsilon_n=o_p(1)$, we have $K(\epsilon_n)=o_p(1)$, and thus $\sup_{\rho\in \bar{\mathbb{R}}}\left|\widehat{\ell}(\rho)-\ell(\rho)\right|=o_p(1)$. This completes the proof.

\end{proof}

\section{Data generation for simulation studies}\label{supp sect::data generation}

\subsection{Weighted Fr\'{e}chet aggregation}\label{supp subsect::sim MMoM}

Here we simulate a 10-node time-varying network and take the corresponding graph Laplacian matrix as the response $Y_i$. Under the uncontaminated setting, for a given predictor $X_i$ drawn independent and identically distributed from the uniform distribution $U(0,1)$, the strictly lower-triangular entries of the Laplacian are generated as independent and identically distributed random variables following a beta distribution, $y_{i,j} \sim {\rm Beta}(X_i, 1-X_i)$ for $j = 1, \dots, 45$. The response $Y_i$ is then explicitly constructed via ${\rm vech}(Y_i) = (-y_{i,1}, -y_{i,2}, \dots, -y_{i,45})^\top$, where ${\rm vech}(\cdot)$ denotes the half-vectorization operator that extracts the strictly lower-triangular elements of a symmetric matrix, and the diagonal elements of $Y_i$ are calculated such that the row sums of $Y_i$ equal zero. Because the expectation of a ${\rm Beta}(x, 1-x)$ distribution is $x$, this generation scheme ensures that under the Frobenius metric $d_{\rm F}$, the true conditional Fr\'{e}chet mean satisfies $r(x) = \widetilde{r}(x) = \mathrm{vech}^{-1}(-x, \dots, -x)$, where $\mathrm{vech}^{-1}(\cdot)$ is the inverse mapping back to the graph Laplacian space.

For the outlier generation, the strictly lower-triangular entries of the outlying graph Laplacians are generated as independent and identically distributed random variables drawn from a Gamma distribution ${\rm Gamma}(5, 0.5)$. Unlike the uncontaminated responses whose edge weights are bounded within $[0,1]$ via the Beta distribution, these outlying networks possess an expected edge weight of $10$. After generating the lower-triangular entries, we then follow the similar steps to those used in generating the uncontaminated responses. That is, the outlying graph Laplacian is then generated by applying the standard symmetry constraint and defining the diagonal elements to strictly enforce zero row and column sums, ensuring they reside in the same metric space.

\subsection{Exceedance set estimation}\label{supp subsect::exceedance set simulation}

We first generate two distinct classes of network Laplacian responses $\{Y_i^{(1)}\}_{i=1}^n$ and $\{Y_i^{(2)}\}_{i=1}^n$ with the same support for the predictor. For a given predictor $X_i$ drawn independent and identically distributed from the uniform distribution $U(0,1)$, the strictly lower-triangular entries of $Y_i^{(1)}$ are generated independently as $y_{i,j}^{(1)} \sim {\rm Beta}\big(\sin(\pi X_i), 1 - \sin(\pi X_i)\big)$ for $j = 1, \dots, 45$, while the strictly lower-triangular entries of $Y_i^{(2)}$ are generated independently as $y_{i,j}^{(2)} \sim {\rm Beta}\big(\varpi(X_i), 1 - \varpi(X_i)\big)$ for $j = 1, \dots, 45$ where $\varpi(x) = \sin^2(2\pi x)$. The diagonal elements of either $Y_i^{(1)}$ or $Y_i^{(2)}$ are then calculated such that the row sums of the network Laplacian equal zero. This generation setting ensures that the underlying conditional Fr\'{e}chet means for these two classes are $r_{(1)}(x)={\rm vech}^{-1}\big(-\sin(\pi x), \dots, -\sin(\pi x)\big)$ and $r_{(2)}(x)={\rm vech}^{-1}\big(-\varpi(x), \dots, -\varpi(x)\big)$.

\end{document}